 \newtheorem{thm}{Theorem}[section]
 \newtheorem{cor}[thm]{Corollary}
 \newtheorem{lem}[thm]{Lemma}
 \newtheorem{prop}[thm]{Proposition}
 \newtheorem{defn}[thm]{Definition}
  \newtheorem{defn-thm}[thm]{Definition-Theorem}
   \newtheorem{example}[thm]{Example}
   \renewcommand\[{\begin{equation}}
\renewcommand\]{\end{equation}}
\numberwithin{equation}{section}
\DeclareFontFamily{U}{mathx}{}
\DeclareFontShape{U}{mathx}{m}{n}{<-> mathx10}{}
\DeclareSymbolFont{mathx}{U}{mathx}{m}{n}
\DeclareMathAccent{\widehat}{0}{mathx}{"70}
\DeclareMathAccent{\widecheck}{0}{mathx}{"71}
\begin{document}
\title[GD realization of SUSY W-algebras]{Gelfand-Dickey Realizations of \\
the supersymmetric classical W-algebras for \\ $\mathfrak{gl}(n+1|n)$ and  $\mathfrak{gl}(n|n)$ }
\author[S.Carpentier]{Sylvain Carpentier$^\dagger$}
\address[S. Carpentier]{Department of Mathematical Sciences, Seoul National University, Gwanak-ro 1, Gwanak-gu, Seoul 08826, Korea}
\email{sylcar@snu.ac.kr}

\author[U.R. Suh]{Uhi Rinn Suh$^{\star}$}
\address[U.R. Suh]{Department of Mathematical Sciences and Research institute of Mathematics, Seoul National University, Gwanak-ro 1, Gwanak-gu, Seoul 08826, Korea}
\email{uhrisu1@snu.ac.kr}

\thanks{$^{\dagger}$This work was supported by BK21 SNU Mathematical Sciences Division}
\thanks{$^{\star}$This work was supported by NRF Grant \#2022R1C1C1008698 and  Creative-Pioneering Researchers Program by Seoul National University
}
\date{}

\begin{abstract}
  In this paper we realize the supersymmetric classical $W$-algebras $\mathcal{W}(\overline{\mathfrak{gl}}(n+1|n))$ and $\mathcal{W}(\overline{\mathfrak{gl}}(n|n))$ as differential algebras generated by the coefficients of a monic superdifferential operator $L$. In the case of $\mathcal{W}(\overline{\mathfrak{gl}}(n|n))$ (resp. $\mathcal{W}(\overline{\mathfrak{gl}}(n+1|n))$) this operator is even (resp. odd). We show that the supersymmetric Poisson vertex algebra bracket on these supersymmetric W-algebras  is the supersymmetric analogue of the quadratic Gelfand-Dickey bracket associated to the operator $L$. Finally, we construct integrable hierarchies of evolutionary Hamiltonian PDEs on both W-algebras. A key observation is that to construct these hierarchies on the algebra $\mathcal{W}(\overline{\mathfrak{gl}}(n+1|n))$ one needs to introduce a new concept of even supersymmetric Poisson vertex algebras.

\end{abstract}

\maketitle

\section{Introduction}
Following a series of discoveries made on the Korteweg-de Vries equation \cites{GGKM67,Lax68,MGK68}, the research on integrable nonlinear partial differential equations took off in the 1970s. In particular, inspired by Peter Lax, Gelfand and Dickey introduced the generalized KdV hierarchies \cite{GD76}. These are infinite families of evolutionary PDEs defined on the differential algebra $\mathcal{V}_N$ generated by the coefficients of a monic differential operator of order $N \geq 2$
$$ L= \partial^N+u_1 \partial^{N-1}+...+u_N \, , \, \, \, \mathcal{V}_N= \mathbb{C}[u_i^{(j)} \, | i=1,..., N \, , j \in \mathbb{Z}_+] \, .$$
The total derivative $\partial$ acts on the generators of $\mathcal{V}_N$ by $\partial(u_i^{(j)})=u_i^{(j+1)}$. 
They showed that the hierarchy of Lax equations

$$ \frac{dL}{dt_l}=[L^{l/N}_+,L] \, , \, \, l \geq 1$$
is bi-Hamiltonian for the so-called first and second Gelfand-Dickey brackets.  The fractional powers of $L$ are elements in $\mathcal{V}_N(\!( \partial^{-1})\!)$ and the residue of such a pseudo-differential operator is the coefficient of $\partial^{-1}$. Finally the subscript $+$ denotes the projection of a pseudo-differential operator onto the subspace of differential operators. The Gelfand-Dickey brackets induce the following two Lie algebra structures on the space of functionals $ \smallint \mathcal{V}_N := \mathcal{V}_N / \partial \mathcal{V}_N$
\begin{equation} \label{eq:adler bracket}
 \begin{split}
       &\big[\smallint a  ,  \smallint b \big]_1  =  \,  \,\,   \int \mathrm{Res} \, \,  \Big( \, (L \frac{\delta a}{\delta L})_+\frac{\delta b}{\delta L}-(\frac{\delta a}{\delta L}L)_+ \frac{\delta b}{\delta L} \, \Big), 
       \\ &\big[\smallint a  ,  \smallint b \big]_2  = \,  \,\, \int \mathrm{Res} \, \, \Big( \, (L \frac{\delta a}{\delta L})_+L\frac{\delta b}{\delta L}-L(\frac{\delta a}{\delta L}L)_+ \frac{\delta b}{\delta L}\, \Big), \, 
\end{split}
\end{equation} 
where we used the notation $\smallint$ for the projection map $\mathcal{V}_N \rightarrow \smallint \mathcal{V}_N $.  The variational derivative of a functional $\smallint a  \in \smallint \mathcal{V}_N$ with respect to $L$ is 
 the  pseudo-differential operator 
\begin{equation} \label{varder}
    \frac{\delta a}{\delta L}:= \sum_{i=1}^N \partial^{i-N-1} \sum_{n \geq 0} (-1)^n (\frac{\partial a}{\partial u_i^{(n)}})^{(n)}= \sum_{i=1}^N \partial^{i-N-1} \frac{\delta a}{\delta u_i}\, .
\end{equation}
For a detailed review of the Hamiltonian PDEs formalism we refer to \cite{KW81}. 
To every functional $\smallint a \in \smallint \mathcal{V}_N $ one can associates two derivations $Y_a$ and $Z_a$ of $\mathcal{V}_N$ commuting with the total derivative $\partial$ and defined on the generators of the differential algebra by 
\begin{equation}
    \begin{split}
        Y_a(L) &=  (L \frac{\delta a}{\delta L}-\frac{\delta a}{\delta L}L)_+\, ,\\
        Z_a(L) &= (L \frac{\delta a}{\delta L})_+L-L(\frac{\delta a}{\delta L}L)_+\, .
    \end{split}
\end{equation}
A series of independent works \cites{GD77, Man78} has proved using lengthy calculations that these assignments are Hamiltonian, i.e. for all $\smallint a \, , \smallint b \in \smallint \mathcal{V}_N $
\begin{equation}
    \begin{split}
        [Y_a,Y_b] &= Y_{ [a,b]_1} \, ,\\
         [Z_a,Z_b] &= Z_{ [a,b]_2}\, .
    \end{split}
\end{equation}
It is clear that $Y_a$ and $Z_a$ are of the form $\mathcal{M}((\frac{\delta a}{\delta u_i})_{i=1,...,N})$ for some $N \times N$ matrix differential operator $\mathcal{M}$. Such a matrix is called a Hamiltonian operator. This concept was recently reformulated using the language of $\lambda$-brackets \cite{BDSK09}. More precisely, one can construct a map $\{ \, {}_{\lambda} \, \} : \, \,  \mathcal{V}_N \otimes \mathcal{V}_N \rightarrow \mathcal{V}_N[\lambda]$ such that  
\begin{equation} 
\begin{split}
\{ u_i {}_{\lambda} u_j \} &= \mathcal{M}_{ij}(\lambda) \,  \\
\end{split}
\end{equation}
for all $1 \leq i,j \leq N$ and such that it satisfies all the axioms of a Poisson vertex algebra (PVA). Above, the notation $P(\lambda)$ for a differential operator $P$ simply means replacing $\partial^l$ by $\lambda^l$. Conversely every PVA structure on $\mathcal{V}_N$ uniquely defines an $N \times N$ Hamiltonian matrix differential operator. Even though both notions are equivalent the language of $\lambda$-brackets and PVAs has proven very efficient from a technical point of view and we will follow this approach in our paper.  
\\

In \cite{DS85}, Drinfeld and Sokolov managed to connect these hierarchies of Hamiltonian PDEs to the theory of Kac-Moody Lie algebras. First  they considered a so-called affine Hamiltonian structure $\{ \, \, {}_\lambda \, \, \}_{\text{aff}}$ on the differential algebra $\mathcal{V}(\mathfrak{g})$ of differential polynomials on a  semisimple Lie algebra  $\mathfrak{g}$ defined on its generators by  
\begin{equation} \label{eq:affine H}
    \{g{}_\lambda h\}_{\text{aff}} =[g,h]+\lambda(g|h) \, , \, \, g, h \, \in \mathfrak{g}
\end{equation}
where $( \, | \, )$ is an invariant symmetric bilinear form. They also constructed a hierarchy of compatible equations based on a linear matrix operator $\mathcal{L}$ which are Hamiltonian for the affine structure. Their key result is showing the existence of a differential subalgebra $\mathcal{W}(\mathfrak{g}) \subset \mathcal{V}(\mathfrak{g})$ such that 
\begin{enumerate}
    \item [$1$.] The PVA structure on $\mathcal{W}(\mathfrak{sl}(N+1))$ coincides with the second Gelfand-Dickey Hamiltonian structure, 
    \item[$2$.] The restriction of the affine hierarchy to $\mathcal{W}(\mathfrak{sl}(N+1))$ is the $N$-th KdV hierarchy, 
    \item[$3$.] The algebra $\mathcal{W}(\mathfrak{g})$ coincides with the functions on $\mathcal{L}$ invariant under a gauge action of a nilpotent subalgebra $\mathfrak{n} \subset \mathfrak{g}$.
\end{enumerate}
This process is now known as the Drinfeld-Sokolov reduction and the algebras $\mathcal{W}(\mathfrak{g})$ as the principal classical W-algebras. Moreover, Drinfeld and Sokolov showed in \cite{DS85} that their hierarchies are bi-Hamiltonian where the second PVA structure $\{ \, \, {}_\lambda \, \, \}_{\text{lin}}$ is defined on $\mathcal{V}(\mathfrak{g})$  by 
\begin{equation} \label{eq:affine K}
    \{g{}_\lambda h\}_{\text{lin}} =(s|[g,h])
\end{equation}
for $g,h\in \mathfrak{g}$. Here $s\in \mathfrak{g}$ is an element of highest principal grading. After reduction to the subalgebra $\mathcal{W}(\mathfrak{g})$ this linear bracket coincides with the first Gelfand-Dickey bracket in the case where $\mathfrak{g}=\mathfrak{sl}(N+1)$.
\\

Later, the notion of W-algebra was further generalized and the classical W-algebra $\mathcal{W}(\mathfrak{g},F)$ was constructed 
for any given basic simple Lie superalgebra $\mathfrak{g}$ and even nilpotent element $F\in \mathfrak{g}$ \cite{KRW04}. Note that when $\mathfrak{g}$ is a Lie algebra and $F$ is principal, the corresponding W-algebra  $\mathcal{W}(\mathfrak{g},F)$ is just $\mathcal{W}(\mathfrak{g})$. Unfortunately, one can only construct integrable hierarchies using the method of Drinfeld-Sokolov for a few classical W-algebras.  De Sole-Kac-Valeri successfully overcame that difficulty for arbitrary simple Lie algebra $\mathfrak{g}$ by introducing matrix valued Adler operators \cites{DSKV15, DSKV16a, DSKV16b, DSKV18}. When $\mathfrak{g}$ is a Lie superalgebra, there are only partial results as far as integrable hierarchies are concerned \cites{Suh18, CLS24}.
\\

Our overarching goal is to generalize this picture to classical supersymmetric (SUSY) W-algebras.  SUSY W-algebras were introduced by Madsen-Ragoucy \cite{MR94} and later reinterpretated by Molev-Ragoucy-Suh \cite{MRS21} using the language of SUSY vertex algebras \cite{K}. Roughly speaking a SUSY vertex algebra is a vertex algebra together with an odd derivation $D$ whose square is the even derivation $\partial$ of the vertex algebra. The OPE relation of SUSY vertex algebras can be expressed using a SUSY analogue of the $\lambda$-bracket \cite{HK07} where the even indeterminate $\lambda$ is replaced by a formal odd indeterminate $\chi$ such that $ D \chi+\chi D=-2 \chi^2 \, . $
\\

 As a fundamental example, the OPE product of the affine SUSY vertex algebra $V^k(\bar{\mathfrak{g}})$ associated with a Lie superalgebra $\mathfrak{g}$ is determined by the odd bracket 
\begin{equation}
    [\bar{g}{}_{\chi}\bar{h}]= \overline{[g,h]}+k\chi(g|h)
\end{equation}
for $g,h\in \mathfrak{g}$ and $k\in \mathbb{C}$. Here $\bar{\mathfrak{g}}$ is the parity reversed superspace of $\mathfrak{g}.$ The classical SUSY W-algebra $\mathcal{W}^k(\bar{\mathfrak{g}},f)$ of level $k$ associated with a basic Lie superalgebra $\mathfrak{g}$ and an odd nilpotent $f$ is defined as a Hamiltonian reduction of the quasi-classical limit of $V^k(\bar{\mathfrak{g}}).$  They are isomorphic whenever $k \neq 0$ and we will simply write $\mathcal{W}(\bar{\mathfrak{g}},f):=\mathcal{W}^1(\bar{\mathfrak{g}},f)$. The SUSY PVA structure of $\mathcal{W}(\bar{\mathfrak{g}},f)$ that we denote by $\{ \cdot {}_\chi \cdot \}^q$ is well-known especially when $f$ is a odd principal nilpotent.
In this paper we start our project of constructing integrable systems for each of these algebras in the case where $\mathfrak{g}=\mathfrak{gl}(m|n)$ for $m=n$ or $n+1$ and the odd nilpotent element $f$ is principal \cites{MRS21,RSS23,LSS23,Suh20}. In the spirit of Drinfeld and Sokolov, we show that for these types the reduction of the affine brackets leads to SUSY analogues of the Gelfand-Dickey brackets \cite{HN91}. 
 \\
 
 Let $N$ be an integer greater than $2$. Consider the differential superalgebra $\mathcal{W}_N$ generated by the coefficients of the super-differential operator 
 $$L=D^N+u_1D^{N-1}+...+u_N \, , \quad \, p(u_i) \equiv i \ (\text{mod} 2) \, .$$
The odd derivation $D$ acts on the generators of $\mathcal{W}_N$ by $D(u_i^{(j)})=u_i^{(j+1)}$. The following hierarchy of equations
\begin{equation} \label{susylax}
\frac{dL}{dt_l}=[L^{2l/N}_+ \, , \, L] \, ,  \quad  \, [d/dt_l \, , \, D] = 0 \, , \, \, l \geq 1 ,
\end{equation}
defines pairwise commuting even evolutionary derivations of $\mathcal{W}_N$. However, there is a fundamental difference depending on the parity of $N$:
\begin{enumerate}[]
    \item 1. If $N=2n$ is even, the conserved densities  $(\text{res } L^{\frac{q}{n}})_{q \geq 1}$ of the hierarchy \eqref{susylax} are \textbf{odd}.  
    \item 2. If $N=2n+1$ is odd, the conserved densities  $(\text{res } L^{\frac{2q-1}{2n+1}})_{q \geq 1}$ of the hierarchy \eqref{susylax} are \textbf{even}.
\end{enumerate}
We note that in the first case the $2n$-th root of $L$ is not defined whilst in the second case the residues of even powers of $L^{\frac{1}{2n+1}}$ are trivial. We want to discuss the Hamiltonian formalism underneath these hierarchies. These Hamiltonian structures should map conserved densities to even derivations. As we just noted it must be of different nature depending on the parity of $N$. In this paper, we construct these and relate them to the SUSY Gelfand-Dickey brackets as well as some SUSY classical W-algebras.
\\

What is commonly understood by SUSY Hamiltonian operator? An evolutionary derivation of $\mathcal{W}_N$ is a derivation which supercommutes with the total derivative $D$ hence which is completly determined by its values on the generators $u_i$. We denote such an object by $X_F$ where $F$ is the characteristic vector $X_F(u_i)=F_i$. We send the reader to the main text \eqref{Xvder} for the definition of the variational derivative $ \delta / \delta L : \smallint \mathcal{W}_N \rightarrow (\mathcal{W}_N)^N$. An $N \times N$ matrix differential operator $H$ is called Hamiltonian if the subspace $\{ X_{H(\frac{\delta a}{\delta L})} \, | \, \smallint a \in \smallint \mathcal{W}_N \}$ is a sub Lie superalgebra of the superalgebra of evolutionary derivations of $\mathcal{W}_N$. Essentially, this matrix maps functionals to evolutionary derivations. If the map is odd (resp. even) we say that it is an odd (resp. even) Hamiltonian operator.
\\

Taking an odd Hamiltonian matrix differential operator $H$ and replacing $D$ by the  odd indeterminate $\chi$, one obtains a $\chi$-bracket $\{ \,  \cdot \, {}_\chi \, \cdot\,  \}: \mathcal{W}_N \otimes \mathcal{W}_N \rightarrow \mathbb{C}[\chi]\otimes \mathcal{W}_N$
satisfying the
 axioms of a SUSY PVA  \cite{HK07}. This bracket is odd, i.e. for all $a, b \in \mathcal{W}_N$ we have 
 $$p(\{ a \, {}_{\chi} \, b\})=p(a)+p(b)+1. $$ 
 Moreover, the Hamiltonian derivation can be rewritten as 
 $$ X_{H(\frac{\delta a}{\delta L})}= \{  \, a \, {}_{\chi} \, \cdot \, \}  \big| _{\chi=0} \, .$$

\begin{thm}[Theorem \ref{thm:main1}] \label{thm:intro_main1}
    Let $m$ and $n$ be positive integers such that $m=n$ or $m=n+1$ and $N=m+n.$ Consider the Lie superalgebra $\mathfrak{g}=\mathfrak{gl}(m|n)$ and the principal nilpotent element $f$.
    There is 
    a differential algebra isomorphism $\phi$ from the SUSY W-algebra $\mathcal{W}(\bar{\mathfrak{g}},f)$ to $\mathcal{W}_{N}$, such that the SUSY PVA structure of $\mathcal{W}(\bar{\mathfrak{g}},f)$  induces the so-called SUSY quadratic Gelfand-Dickey bracket on $\int \mathcal{W}_N$. More precisely, for any $a,  b \in \mathcal{W}_N$,
    \begin{equation}
       \int \phi \{ \phi^{-1}(a) \, {}_{\chi} \, \phi^{-1}(b)\}^q |_{\chi=0} = (-1)^{N+p(a)} \int \mathrm{Res} \, \, \,  ((L\frac{\delta a}{\delta L})_+L\frac{\delta b}{\delta L}-L(\frac{\delta a}{\delta L} L)_+\frac{\delta b}{\delta L}).
    \end{equation}
\end{thm}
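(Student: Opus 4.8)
The plan is to construct $\phi$ by gauge-fixing the first-order matrix superdifferential operator attached to $\mathcal{W}(\bar{\mathfrak{g}},f)$, following the supersymmetric analogue of the Drinfeld--Sokolov reduction. I would realize $\mathfrak{g}=\mathfrak{gl}(m|n)$ in its principal $\mathbb{Z}$-grading, in which the odd principal nilpotent $f$ is a single super-diagonal, and write the associated first-order operator $\mathcal{L}=D+f+q$, where $q$ runs over the space of fields of the reduction and $\mathcal{W}(\bar{\mathfrak{g}},f)$ is realized as the algebra of $\exp(\mathrm{ad}\,\bar{\mathfrak{n}})$-gauge invariant differential polynomials in the entries of $q$. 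First I would bring $\mathcal{L}$ into companion form by a gauge transformation in $\exp(\bar{\mathfrak{n}})$; reading off the scalar reduction of the linear system $\mathcal{L}\Psi=0$ then produces the monic operator $L=D^N+u_1D^{N-1}+\cdots+u_N$, whose coefficients $u_i$ are gauge invariant by construction. Declaring $\phi$ to send the standard generators of $\mathcal{W}(\bar{\mathfrak{g}},f)$ to these $u_i$ gives the candidate map, which commutes with $D$ by design.

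To verify that $\phi$ is an isomorphism of differential superalgebras it suffices to check that $(u_i)_{i=1}^N$ is a free generating set of the correct parity. The number of generators of $\mathcal{W}(\bar{\mathfrak{g}},f)$ equals $N$, and along the companion-form slice each of them maps bijectively to a single $u_i$, so algebraic independence is inherited from the freeness of the SUSY W-algebra. The parity pattern $p(u_i)\equiv i \pmod 2$ follows from the principal grading together with the oddness of $D$: the coefficient of $D^{N-i}$ accumulates $i$ steps along the super-diagonal, each carrying the parity of $f$. In particular $L$ is even for $N=2n$ and odd for $N=2n+1$, matching the two cases $\mathfrak{gl}(n|n)$ and $\mathfrak{gl}(n+1|n)$.

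The core of the argument is the identification of the gauge-reduced affine bracket with the SUSY quadratic Gelfand--Dickey bracket, and here I would exploit that only the equality of functionals evaluated at $\chi=0$ is required: this kills all positive powers of $\chi$ and permits free integration by parts inside $\int$. The plan is to transport the affine bracket $[\bar{g}{}_\chi\bar{h}]=\overline{[g,h]}+\chi(g|h)$ through the reduction using a supersymmetric version of the De Sole--Kac--Valeri Adler operator: I would encode the full collection of generator brackets into a single operator identity for the matrix $\mathcal{L}$, and then re-express the matrix variational derivatives in terms of the scalar variational derivative $\delta/\delta L$ through the companion correspondence. Gauge invariance of the generators guarantees that the reduced functional bracket does not depend on the chosen slice, and the triangular structure of $\exp(\bar{\mathfrak{n}})$ is exactly what produces the projections $(\,\cdot\,)_+$ and hence the Adler form $(L\tfrac{\delta a}{\delta L})_+L\tfrac{\delta b}{\delta L}-L(\tfrac{\delta a}{\delta L}L)_+\tfrac{\delta b}{\delta L}$.

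I expect the main obstacle to be the supersymmetric sign bookkeeping, which is simultaneously the most delicate step and the source of the prefactor $(-1)^{N+p(a)}$. Because $\chi$ is odd with $D\chi+\chi D=-2\chi^2$, every integration by parts, every adjoint of a pseudodifferential operator, and every reordering of odd factors inside $\mathrm{Res}$ contributes a Koszul sign; the parity of $L$ (equivalently the parity of $N$) and the parity $p(a)$ of the first argument enter precisely when commuting $\delta a/\delta L$ past $L$ and when identifying $\int\mathrm{Res}$ with the invariant pairing. Establishing the correct SUSY analogue of the Adler identity --- equivalently, checking that the gauge-reduced affine bracket and the quadratic Gelfand--Dickey bracket agree on all pairs of generators up to this single uniform sign --- is therefore the crux. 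I would organize this verification around a generating-series master formula for $\{L(\chi)\otimes L\}$ satisfied by both brackets, which reduces the infinitely many generator identities to one.
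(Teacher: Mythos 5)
Your proposal follows essentially the same route as the paper: realize $\mathcal{W}(\bar{\mathfrak{g}},f)$ via the SUSY Drinfeld--Sokolov picture, gauge the first-order operator into companion form to define $\phi$ (the paper's canonical Lax operator $\mathcal{L}^c$ and the assignment $\phi(u_i)=(-1)^i w_i$), relate the matrix variational derivatives $\delta a/\delta q$ to the scalar ones $\delta a/\delta L$ through the companion correspondence (the paper's key Lemma on $e^{-\mathrm{ad}\,N_c}\,\mathrm{Res}\,\nabla_a-\tfrac{\delta a}{\delta q}\in\mathfrak{n}\otimes\mathcal{V}(\bar{\mathfrak{b}})$), and use upper-triangularity of the gauge group to produce the $(\cdot)_+$ projections and hence the Adler form, with the generating-series identity for $\{L(z)\,{}_\chi\,L(w)\}$ handling the sign bookkeeping exactly as in the paper's master-formula lemma. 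The paper fills in the residue computations and the deformation argument you leave implicit, but the architecture and the key ideas coincide.
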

\vskip 5mm

It is known \cite{HK07} that a SUSY PVA bracket on a differential superalgebra $\mathcal{P}$ induces a degree $1$ Lie superalgebra bracket $\int \{ \cdot {}_\chi {} \cdot \} |_{\chi=0}$ on the quotient $\mathcal{P}/D\mathcal{P}.$ Hence the following corollary directly follows.

\begin{cor} The SUSY quadratic Gelfand-Dickey bracket in Definition \ref{def:quad_GD} is a degree $1$ Lie superalgebra bracket on the space of functionals $\int \mathcal{W}_N.$
    
\end{cor}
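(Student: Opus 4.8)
The plan is to deduce the corollary by transport of structure along the isomorphism $\phi$ of Theorem \ref{thm:main1}, so that no new computation is needed beyond what is already established there and in \cite{HK07}. First I would invoke the general principle recalled immediately before the statement: for any SUSY PVA $(\mathcal{P}, \{\cdot{}_\chi\cdot\})$, the operation $\int\{\cdot{}_\chi\cdot\}|_{\chi=0}$ descends to a well-defined \emph{degree $1$} Lie superalgebra bracket on $\mathcal{P}/D\mathcal{P}$ \cite{HK07}. Applying this to $\mathcal{P}=\mathcal{W}(\bar{\mathfrak{g}},f)$ equipped with its quadratic SUSY PVA bracket $\{\cdot{}_\chi\cdot\}^q$ produces a degree $1$ Lie superalgebra bracket on the space of functionals $\int\mathcal{W}(\bar{\mathfrak{g}},f)=\mathcal{W}(\bar{\mathfrak{g}},f)/D\mathcal{W}(\bar{\mathfrak{g}},f)$.

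Next I would push this structure forward through $\phi$. Since $\phi$ is an isomorphism of \emph{differential} superalgebras it intertwines the odd derivations, $\phi\circ D=D\circ\phi$, and it preserves parity; consequently $\phi(D\mathcal{W}(\bar{\mathfrak{g}},f))=D\mathcal{W}_N$ and $\phi$ descends to a parity-preserving linear isomorphism $\bar\phi:\int\mathcal{W}(\bar{\mathfrak{g}},f)\xrightarrow{\sim}\int\mathcal{W}_N$. Transporting the bracket of the previous step through $\bar\phi$ yields a degree $1$ Lie superalgebra bracket on $\int\mathcal{W}_N$ whose value on a pair $(\int a,\int b)$ is exactly $\int\phi\{\phi^{-1}(a){}_\chi\phi^{-1}(b)\}^q|_{\chi=0}$, because graded skewsymmetry, the Jacobi identity, and the oddness of the bracket are all preserved under a parity-preserving bijection.

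Finally I would identify this transported bracket with the object of Definition \ref{def:quad_GD}. By Theorem \ref{thm:main1} the expression $\int\phi\{\phi^{-1}(a){}_\chi\phi^{-1}(b)\}^q|_{\chi=0}$ coincides, up to the normalization $(-1)^{N+p(a)}$ recorded there, with the residue formula defining the SUSY quadratic Gelfand-Dickey bracket; hence the two brackets agree on $\int\mathcal{W}_N$ and the Gelfand-Dickey bracket inherits all the degree $1$ Lie superalgebra axioms. I do not expect a genuine obstacle here, since all the analytic content sits in Theorem \ref{thm:main1}. The only point requiring care is sign bookkeeping: one must check that the parity-dependent factor $(-1)^{N+p(a)}$ is precisely the normalization built into Definition \ref{def:quad_GD} and is compatible with, rather than destructive of, the graded skewsymmetry and Jacobi identity, so that transporting the structure through $\bar\phi$ indeed lands on exactly the bracket named in the corollary.
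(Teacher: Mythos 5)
Your proposal is correct and takes essentially the same route as the paper: the authors likewise deduce the corollary by combining the fact from \cite{HK07} that a SUSY PVA bracket induces a degree $1$ Lie superalgebra bracket on the quotient $\mathcal{P}/D\mathcal{P}$ (applied to the SUSY PVA $\mathcal{W}(\bar{\mathfrak{g}},f)$) with the identification of Theorem \ref{thm:main1}, transported through the differential algebra isomorphism $\phi$. The sign issue you flag resolves immediately, since the factor $(-1)^{N+p(a)}$ appearing in Theorem \ref{thm:main1} is exactly the normalization built into the definition \eqref{SGD} of the SUSY quadratic Gelfand-Dickey bracket.
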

This bracket was introduced in \cite{HN91} where it is claimed to be a Poisson bracket based on the work \cite{STS83}, but we have not found a explicit proof of the fact that it satisfies the Jacobi identity.
\begin{thm} [Theorem \ref{thm:integrability_even}]
    The hierarchy \eqref{susylax} is Hamiltonian for the SUSY quadratic Gelfand-Dickey structure when $N=2n$ is even with conserved densities $\text{ Res } L^{q/n}$ for $q \geq 1$. 
\end{thm}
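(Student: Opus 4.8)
The plan is to exhibit, for each $l \geq 1$, a functional whose associated SUSY quadratic Gelfand-Dickey Hamiltonian derivation reproduces the Lax flow $\frac{dL}{dt_l} = [(L^{l/n})_+, L]$ (recall that $2l/N = l/n$ when $N = 2n$), and then to check that the densities $\mathrm{Res}\, L^{q/n}$ are conserved and pairwise in involution. The natural candidate is the Hamiltonian density $h_l := \mathrm{Res}\, L^{l/n}$ itself: it is odd, so applying the odd quadratic bracket produces an even evolutionary derivation, matching the parity of $d/dt_l$. Since $L$ is even of order $2n$, its $n$-th root $L^{1/n} = D^2 + \cdots$ is a well-defined even pseudo-differential operator (whereas, as noted after \eqref{susylax}, the $2n$-th root is not), so all the fractional powers $L^{l/n}$ and $L^{q/n}$ that appear are legitimate.

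The crux of the argument is the SUSY variational identity
\begin{equation}
\frac{\delta}{\delta L}\, \mathrm{Res}\, L^{k/n} = \frac{k}{n}\, L^{k/n - 1},
\end{equation}
valid up to the sign dictated by our conventions. I would prove it by the standard variational computation: for a variation $L \mapsto L + \epsilon M$, differentiate $\mathrm{Res}\, L^{k/n}$ at $\epsilon = 0$ and use the cyclicity of the SUSY residue-trace, namely that $\int \mathrm{Res}(PQ) = \pm \int \mathrm{Res}(QP)$ with the appropriate Koszul sign, to move the variation $M$ to the right and read off the coefficient $\frac{k}{n} L^{k/n-1}$. This is where the main difficulty lies: establishing the super-cyclicity of the residue and controlling all the parity signs produced by the odd derivation $D$ when manipulating fractional powers of the even operator $L$. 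Everything else reduces to formal algebra once this identity is in place.

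Granting the identity, substitute $\frac{\delta h_l}{\delta L} = \frac{l}{n} L^{l/n - 1}$ into the Hamiltonian derivation underlying the quadratic bracket of Theorem \ref{thm:intro_main1}, whose action on $L$ is
\begin{equation}
X_{h_l}(L) = \Big(L \tfrac{\delta h_l}{\delta L}\Big)_+ L - L \Big(\tfrac{\delta h_l}{\delta L} L\Big)_+ .
\end{equation}
Because powers of $L$ commute, $L \cdot L^{l/n - 1} = L^{l/n} = L^{l/n - 1} \cdot L$, so both projections coincide and equal $\frac{l}{n}(L^{l/n})_+$; hence $X_{h_l}(L) = \frac{l}{n}\big[(L^{l/n})_+, L\big]$, which is the Lax flow $\frac{dL}{dt_l}$ after rescaling the time variable $t_l$. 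Note that the flows with $n \mid l$ are trivial, consistently with the vanishing of $\mathrm{Res}\, L^{l/n}$ whenever $l/n$ is integral.

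Finally I would establish conservation, which for this structure also yields involutivity. From the Lax equation and compatibility of $d/dt_l$ with fractional powers one gets $\frac{d}{dt_l}\mathrm{Res}\, L^{q/n} = \mathrm{Res}\big[(L^{l/n})_+, L^{q/n}\big]$, and the SUSY analogue of the fact that the residue of a supercommutator is a total $D$-derivative gives $\frac{d}{dt_l}\int \mathrm{Res}\, L^{q/n} = 0$; thus the $\mathrm{Res}\, L^{q/n}$ are conserved. Since $\int\{h_l\,{}_\chi\, h_q\}^q|_{\chi=0} = \frac{d}{dt_l}\int h_q = 0$, the Hamiltonians are in involution for the quadratic bracket, and the pairwise commutativity of the derivations $d/dt_l$ — already recorded after \eqref{susylax} — upgrades the single-flow statement to the assertion that the full hierarchy is Hamiltonian and integrable for the SUSY quadratic Gelfand-Dickey structure.
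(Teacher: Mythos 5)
Your proposal is correct and follows essentially the same route as the paper: compute $\delta(\mathrm{Res}\,L^{k/n})/\delta L$ by varying $L$ and using the trace property of the residue (the paper's Lemma \ref{varderhk}), substitute into the quadratic Gelfand--Dickey action on $L$ (equation \eqref{laxequations}) to recover the Lax flows, and obtain conservation and involution from the residue-of-a-commutator argument (Lemma \ref{lem:density}). The only slip is cosmetic: with the paper's definition \eqref{Xvder} the variational derivative is the integral part $\tfrac{k}{n}\,(L^{k/n-1})_-$ rather than the full operator $\tfrac{k}{n}\,L^{k/n-1}$, but this is immaterial here since a differential part $P_+$ contributes $(LP_+)_+L - L(P_+L)_+ = 0$ to the quadratic bracket, so your computation of $X_{h_l}(L)$ is unaffected.
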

However, when $N=2n+1$ is odd, we must construct an even Hamiltonian operator $K$ to connect the even conserved densities to the even evolutionary derivations \eqref{susylax}. The first example of even Hamiltonian operator appeared in \cite{Pop09} in the study of the supersymmetric Sawada-Kotera equation. We note that in \cite{Pop09} this operator is called odd since it induces a odd Poisson bracket in a different sense.  This leads us to define in section \ref{sec:even PVA} the concept of even SUSY PVAs. This structure is given by a even bracket $\mathcal{W}_{2n+1} \otimes \mathcal{W}_{2n+1} \rightarrow \mathbb{C}[\chi]\otimes\mathcal{W}_{2n+1}$ such that for all $a,b \in \mathcal{W}_{2n+1}$, 
 $$p(\{ a \, {}_{\chi} \, b \})=p(a)+p(b). $$  
In particular, we have for all $a \in \mathcal{W}_{2n+1}$
 $$X_{K(\frac{\delta a}{\delta L})}= \{ a \, {}_{\chi} \, \cdot \}|_{\chi=0} \, . $$
We then construct even SUSY PVA brackets on the differential algebras $\mathcal{V}^k(\overline{\mathfrak{gl}}(n+1|n))$. We show that the SUSY W-algebra $\mathcal{W}(\overline{\mathfrak{gl}}(n+1|n),f)$ is stable for this even SUSY PVA bracket $\{ \cdot \, {}_{\chi} \, \cdot \}^{e}$ and moreover prove the following theorem.
\begin{thm} [Theorem \ref{thm:even bracket W_N}]
    Recall the map $\phi$ in Theorem \ref{thm:intro_main1}. The image by $\phi$ of the reduction of the even SUSY PVA structure to the space of functionals $\smallint \mathcal{W}_{2n+1}$ is a Lie superalgebra bracket of Gelfand-Dickey type. More precisely, for any $ a,  b \in \mathcal{W}_{2n+1}$,
    \begin{equation}
       \int \phi \{ \phi^{-1}(a) \, {}_{\chi} \, \phi^{-1}(b)\}^{e} = 
      \int \, {\{ a \, {}_{\chi} \, b \}^{e}}|_{ \chi = 0} =  \int \text{ Res } (L \frac{\delta a}{\delta L} \frac{\delta b}{\delta L} +(-1)^{p(a)+1} \frac{\delta a}{\delta L} L \frac{\delta b}{\delta L}).  
    \end{equation}
\end{thm}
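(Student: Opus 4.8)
The plan is to regard this as the even (linear) counterpart of the quadratic statement of Theorem \ref{thm:intro_main1}, reusing the isomorphism $\phi$ already constructed there. I would first dispose of the left-hand equality, which is essentially transport of structure. The even bracket $\{\cdot\,{}_\chi\,\cdot\}^{e}$ on $\mathcal{W}_{2n+1}$ is the image through $\phi$ of the reduced even affine bracket on $\mathcal{W}(\overline{\mathfrak{gl}}(n+1|n),f)$, and $\phi$, being an isomorphism of differential superalgebras, supercommutes with $D$, is $\chi$-linear, and therefore descends to the functionals while commuting with evaluation at $\chi=0$. Since, as recalled above from \cite{HK07}, the Lie superalgebra bracket induced on functionals is precisely $\int\{\cdot\,{}_\chi\,\cdot\}|_{\chi=0}$, the first equality follows.

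The content is the right-hand equality, which I would prove through the even Hamiltonian derivation rather than by a direct $\chi$-bracket expansion. Set $X_a:=X_{K(\delta a/\delta L)}=\{a\,{}_\chi\,\cdot\}^{e}|_{\chi=0}$. The central claim, which is the even analogue of the Drinfeld--Sokolov identification of the reduced linear affine bracket with the first Gelfand--Dickey structure, is that $X_a(L)=[L,\tfrac{\delta a}{\delta L}]_{+}$, the differential part of the super-commutator. Proving this is the step where the Hamiltonian reduction is actually carried out: using the gauge-fixed realization underlying $\phi$ one computes the reduced even affine bracket on the generators $u_i$ and assembles the result into this Adler-type operator.

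Granting the claim, the remainder is formal. The SUSY variational pairing identity $\int X_a(b)=\int\mathrm{Res}\,(X_a(L)\tfrac{\delta b}{\delta L})$, which holds because $X_a$ is evolutionary and $\tfrac{\delta}{\delta L}$ is defined to be dual to this pairing, rewrites the left side as $\int\mathrm{Res}\,([L,\tfrac{\delta a}{\delta L}]_{+}\tfrac{\delta b}{\delta L})$. Since $\tfrac{\delta b}{\delta L}=\sum_{i=1}^{2n+1}D^{i-2n-2}\tfrac{\delta b}{\delta u_i}$ has order $\leq -1$ and $[L,\tfrac{\delta a}{\delta L}]_{-}$ also has order $\leq -1$, their product has order $\leq -2$ and contributes no $D^{-1}$-term, so the projection may be dropped and the expression equals $\int\mathrm{Res}\,([L,\tfrac{\delta a}{\delta L}]\tfrac{\delta b}{\delta L})$. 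Finally, because $N=2n+1$ is odd one has $p(L)=1$ and $p(\tfrac{\delta a}{\delta L})=p(a)$, so the super-commutator opens as $[L,\tfrac{\delta a}{\delta L}]=L\tfrac{\delta a}{\delta L}-(-1)^{p(a)}\tfrac{\delta a}{\delta L}L=L\tfrac{\delta a}{\delta L}+(-1)^{p(a)+1}\tfrac{\delta a}{\delta L}L$, which is exactly the claimed residue formula.

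The hard part will be the reduction step $X_a(L)=[L,\tfrac{\delta a}{\delta L}]_{+}$. The manipulations in the previous paragraph are essentially bookkeeping, but matching the reduced even affine bracket with the linear Gelfand--Dickey operator requires running the Drinfeld--Sokolov gauge computation in the supersymmetric setting for the even (rather than odd) structure, and the persistent difficulty throughout is controlling the signs forced by the odd derivation $D$, the odd operator $L$, and the even parity of the bracket.
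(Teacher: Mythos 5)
Your surrounding bookkeeping is correct: the first equality is indeed transport of structure through $\phi$; the pairing identity $\int X_a(b)=\int\mathrm{Res}\,(X_a(L)\tfrac{\delta b}{\delta L})$ holds for evolutionary derivations with the paper's sign conventions; dropping the projection $(\cdot)_+$ is justified because both $[L,\tfrac{\delta a}{\delta L}]_-$ and $\tfrac{\delta b}{\delta L}$ have order $\leq -1$, so their product has no residue; and since $p(L)=1$ and $p(\tfrac{\delta a}{\delta L})=p(a)+N+1\equiv p(a)$ for $N=2n+1$, the supercommutator opens with exactly the sign $(-1)^{p(a)+1}$ in the statement. However, there is a genuine gap: the entire content of the theorem is the claim $X_a(L)=[L,\tfrac{\delta a}{\delta L}]_+$, i.e.\ that the reduction of the linear affine even bracket produces this Adler-type operator, and you never prove it --- you only name it as ``the hard part'' and gesture at ``running the Drinfeld--Sokolov gauge computation for the even structure.'' A proof proposal that defers precisely the step carrying all the difficulty is not a proof; everything you do establish would apply verbatim to any conjectural formula for $X_a(L)$.

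It is also worth noting that the paper closes this gap without any new gauge computation, and in the opposite logical order from your plan. It works at the level of functionals throughout: Corollary \ref{coro} (the even master formula at $\chi=0$) expresses $\int\{a\,{}_\chi\,b\}^{e}|_{\chi=0}$ as the pairing $(-1)^{p(a)+1}\int\bigl(\tfrac{\delta a}{\delta q}\,\big|\,(e_{1\,2n+1}\otimes 1)\tfrac{\delta b}{\delta q}+\tfrac{\delta b}{\delta q}(e_{1\,2n+1}\otimes 1)\bigr)$; then Lemma \ref{eq:2.3lem_main} --- which is bracket-agnostic, being a statement about $e^{-\mathrm{ad}\,N_c}\,\mathrm{Res}\,\nabla_a$ versus $\tfrac{\delta a}{\delta q}$, and was already proved for the quadratic case --- replaces $\tfrac{\delta a}{\delta q}$ by $\mathrm{Res}\,\nabla_a$ up to $\mathfrak{n}$-valued terms that pair to zero; finally two residue lemmas (e.g.\ Lemma \ref{lem:2.2-main22}) convert the resulting matrix pairings into $\int\mathrm{Res}\,(L\tfrac{\delta a}{\delta L}\tfrac{\delta b}{\delta L})$ and $\int\mathrm{Res}\,(\tfrac{\delta a}{\delta L}L\tfrac{\delta b}{\delta L})$, giving Lemma \ref{keylem}. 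The operator identity you want as your starting point is then derived \emph{from} this functional identity in the proof of Theorem \ref{thm:integrability odd}, not the other way around. If you want to salvage your route, you must either supply the generator-level computation of the reduced even bracket yourself, or reorganize as the paper does and reuse the $\nabla_a$/Lemma \ref{eq:2.3lem_main} machinery, in which case the claim you postponed becomes a corollary rather than an input.
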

Finally, we conclude section \ref{sec:even PVA} by
\begin{thm}[Theorem \ref{thm:integrability odd}]
    The hierarchy \eqref{susylax} for odd $N=2n+1$ is Hamiltonian for the even linear SUSY Gelfand-Dickey structure in Definition \ref{def:el_GD} with conserved densities $\text{ Res } L^{\frac{2q-1}{2n+1}}$ for $q \geq 1$. 
\end{thm}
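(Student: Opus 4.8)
The plan is to follow the classical Adler--Gelfand--Dickey strategy, adapted to the even linear SUSY structure, treating the densities $h_q:=\mathrm{Res}\,L^{(2q-1)/N}$ (with $N=2n+1$) both as Hamiltonians that generate the flows and as integrals of motion. Concretely, I would first show that the Hamiltonian derivation attached to a suitable multiple of $h_q$ by the even linear structure of Definition \ref{def:el_GD} equals the Lax derivation $X(L)=[(L^{2l/N})_+,L]$ of \eqref{susylax}, and then show separately that every $\int h_q$ is preserved by every flow. Since the $h_q$ are even, the bracket of Theorem \ref{thm:even bracket W_N} specializes, for even $a$, to
$$\int\{a\,{}_\chi b\}^{e}\big|_{\chi=0}=\int\mathrm{Res}\Big(L\tfrac{\delta a}{\delta L}\tfrac{\delta b}{\delta L}-\tfrac{\delta a}{\delta L}L\tfrac{\delta b}{\delta L}\Big).$$
Matching the super-residue pairing $\int b\mapsto \int\mathrm{Res}\big(X_a(L)\tfrac{\delta b}{\delta L}\big)$ against this displayed functional bracket, and using that $X_a(L)$ is a super-differential operator of order $\le N-1$ while $\tfrac{\delta b}{\delta L}$ has orders in $[-N,-1]$, forces (up to an overall sign) $X_a(L)=\big(L\tfrac{\delta a}{\delta L}-\tfrac{\delta a}{\delta L}L\big)_+$.

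The central computation is the variational derivative of a fractional power. I would establish the SUSY analogue of the standard identity that, up to a fixed nonzero constant, $\tfrac{\delta}{\delta L}\int\mathrm{Res}\,L^{s}$ is represented by the band of $s\,L^{s-1}$ lying in orders $-N$ through $-1$; this follows from $\delta\!\int\mathrm{Res}\,L^{s}=s\!\int\mathrm{Res}(L^{s-1}\delta L)$ together with the fact that admissible variations $\delta L$ are super-differential operators of order $\le N-1$, so only that band survives the super-trace pairing. Writing $P:=\tfrac{2q-1}{N}L^{(2q-1)/N-1}$ and decomposing $P=P_{+}+P_{\flat}+P_{\sharp}$ into its part of order $\ge 0$, its part of order in $[-N,-1]$, and its part of order $\le -N-1$, we then have $\tfrac{\delta h_q}{\delta L}=P_{\flat}$.

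With this in hand the identification of the flow is purely algebraic. Because $L$ commutes with every power of itself, $LP=PL$ as super-pseudodifferential operators, hence $LP_{\flat}-P_{\flat}L=-(LP_{+}-P_{+}L)-(LP_{\sharp}-P_{\sharp}L)$. The term $LP_{\sharp}-P_{\sharp}L$ has order $\le -1$ and so vanishes under $(\cdot)_+$, while $LP_{+}-P_{+}L$ is already a super-differential operator; therefore
$$X_{h_q}(L)=\big(LP_{\flat}-P_{\flat}L\big)_+=-[L,P_+]=[P_+,L]=\tfrac{2q-1}{N}\big[(L^{2l/N})_+,L\big],$$
upon setting $(2q-1)/N-1=2l/N$, i.e. $q=l+n+1$. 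This is exactly $\tfrac{2q-1}{N}$ times the right-hand side of \eqref{susylax}, so each flow $d/dt_l$ is Hamiltonian for the even linear structure with Hamiltonian functional $\tfrac{N}{2(l+n)+1}\int h_{l+n+1}$, while $h_1,\dots,h_{n+1}$ give $(L^{s-1})_+=0$ or $[1,L]=0$ and hence produce trivial flows, as expected of Casimirs.

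It remains to prove conservation. From $dL/dt_l=[(L^{2l/N})_+,L]$ and the fact that $[(L^{2l/N})_+,\,\cdot\,]$ is a derivation of the algebra of super-pseudodifferential operators commuting with the formation of powers, one gets $\tfrac{d}{dt_l}L^{s}=[(L^{2l/N})_+,L^{s}]$, whence $\tfrac{d}{dt_l}\int h_q=\int\mathrm{Res}\,[(L^{2l/N})_+,L^{(2q-1)/N}]=0$ by the SUSY analogue of the lemma that the super-residue of a supercommutator lies in $D\mathcal{W}_{2n+1}$. The main obstacle I anticipate is not this algebraic skeleton but the consistent bookkeeping of the super-signs introduced by $p(L)=1$: verifying that the super-residue/super-trace pairing is nondegenerate with the correct sign so that $X_a(L)=\big(L\tfrac{\delta a}{\delta L}-\tfrac{\delta a}{\delta L}L\big)_+$ genuinely represents the even Hamiltonian derivation, checking the sign in the SUSY variational identity for $\mathrm{Res}\,L^{s}$, and confirming that the band decomposition and the projections $(\cdot)_+$ interact correctly with the odd formal structure. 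These are precisely the points at which the even (rather than odd) nature of the bracket, forced by the odd operator $L$, must be invoked.
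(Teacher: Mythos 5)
Your proposal is correct and takes essentially the same route as the paper's own proof: it uses the reduction formula of Theorem \ref{thm:even bracket W_N} to identify the Hamiltonian derivation as $X_a(L)=\big(L\tfrac{\delta a}{\delta L}-\tfrac{\delta a}{\delta L}L\big)_+$ for even $a$, computes $\tfrac{\delta}{\delta L}\smallint \mathrm{Res}\,L^{(2q-1)/N}$ (up to normalization) as the negative-order part of $L^{(2q-1)/N-1}$ via the same $\delta\!\smallint\mathrm{Res}\,L^{s}=s\!\smallint\mathrm{Res}(L^{s-1}\delta L)$ identity, collapses the flow to $[(L^{2l/N})_+,L]$ by the same commutator algebra, and proves conservation from the cyclicity of $\smallint\mathrm{Res}$ exactly as in Lemma \ref{lem:res}(3). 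The only cosmetic difference is your three-part decomposition $P_++P_\flat+P_\sharp$ versus the paper's use of the full integral part $(L^{k/(2n+1)-1})_-$; these yield the same derivation since $(LP_\sharp-P_\sharp L)_+=0$, as you note.
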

We suggest that the structures now referred to as SUSY PVAs in the literature should be called odd SUSY PVAs when the context is ambiguous, to distinguish them from the even SUSY PVAs introduced in this paper.

\section{supersymmetric classical
W-algebras} \label{sec:SUSY W-alg}
\subsection{SUSY Poisson vertex algebra}
\label{subsec:SUSY PVA}
Throughout this paper, we consider vector superspaces defined over $\mathbb{C}$. For such a vector superspace $V$, 
we denote its even (resp. odd) subspace  by $V_{\bar{0}}$ (resp. $V_{\bar{1}}$)
and the parity of a homogeneous element $a\in V$ by $p(a).$

Let $\mathcal{R}$ be a $\mathbb{C}[D]$-module, where $D$ is an odd operator acting on a superspace $\mathcal{R}$. For an odd indeterminate $\chi$, we extend the $\mathbb{C}[D]$-module structure to 
$\mathbb{C}[\chi]\otimes R$ by letting
$ D\chi + \chi D =- 2 \chi^2.$
A linear map 
    \[ [ \cdot\,  {}_\chi\,  \cdot  ] : \mathcal{R}\otimes \mathcal{R} \to \mathbb{C}[\chi] \otimes \mathcal{R}\]
is called a (odd) $\chi$-bracket on $\mathcal{R}$ if for all $a, b \in R$
\begin{enumerate}
\item \textrm{(odd parity)} $p([a{}_\chi b] )= p(a)+p(b)+1$,
    \item \textrm{(sesquilinearity)} $[Da{}_\chi b]=\chi [a{}_\chi b],$ $[a{}_\chi Db]= (-1)^{p(a)+1} (D+\chi) [a_\chi b] \, .$
\end{enumerate}
 Moreover, if a $\chi$-bracket satisfies the properties:
\begin{enumerate}
    \item \textrm{(skew-symmetry)} $[b{}_\chi a]=(-1)^{p(a)p(b)}[a{}_{-\chi-D} b]_{\leftarrow},$
    \item \textrm{(Jacobi identity)} $[a{}_{\chi} [b{}_{\gamma} c]]= (-1)^{p(a)+1}[[a{}_{\chi} b]_{\chi+\gamma} c] + (-1)^{(p(a)+1)(p(b)+1)}[b_{\gamma} [a_{\chi} c]]$
\end{enumerate}
for all $a,b,c\in \mathcal{R}$,
then $\mathcal{R}$ is called a \textit{supersymmetric Lie conformal algebra} (SUSY LCA). We denote by $a_{(n)}b\in \mathcal{R}$ the coefficient of $\chi^n$ in $[a{}_\chi b]$, so that $[a{}_\chi b]=\sum_{n\in \mathbb{Z}_+} \chi^n a_{(n)}b.$  In the RHS of the skew-symmetry axiom, $[a_{-\chi-D}b]_{\leftarrow}:=\sum_{n\in \mathbb{Z}_+} (-D-\chi)^n a_{(n)}b$ and the Jacobi identity holds in $\mathbb{C}[\chi,\gamma]\otimes \mathcal{R}$ for an odd indeterminate $\gamma$ such that $\chi \gamma + \gamma \chi=0$ and $D\gamma+\gamma D =-2 \gamma^2.$ More precisely, the Jacobi identity is computed by the following formula.
\begin{equation}
\begin{aligned}
     & [ a{}_{\chi}\gamma^n b]=(-1)^{n}\gamma^n [a{}_{\chi } b], \quad [a{}_{\gamma} \chi^n b]= (-1)^n \chi^n [a{}_\chi b], \\
     & [\chi^n a{}_{\chi+\gamma}b]=(-1)^n \chi^n [a {}_{\chi+\gamma}b].
\end{aligned}
\end{equation}

\begin{defn} \label{Def:SUSY PVA} Let $\mathcal{P}$ be a unital supersymmetric differential algebra with an odd derivation D.
    A SUSY Lie conformal algebra $(\mathcal{P}, D, \{ \ {}_\chi \ \})$ is called an odd SUSY Poisson vertex algebra (odd SUSY PVA) if it satisfies the Leibniz rule:
      \[ \{ a{}_{\chi} bc \}= \{a{}_{\chi} b\}c+(-1)^{(p(a)+1)p(b)}b\{a{}_{\chi}c\}\quad  \text{ for } \quad  a,b,c\in \mathcal{P}.\]
\end{defn}

Note that this definition of an odd SUSY PVA corresponds to the $N_k=1$ SUSY PVA in \cite{HK07} In this paper, we are interested in SUSY PVAs which are algebras of differential polynomials, i.e. of the form
\begin{equation} \label{eq:P}
     \mathcal{P}= \mathbb{C}[u_i^{(m)}|i\in I, m\in \mathbb{Z}_+],
\end{equation}
where $I$ is a finite set and the odd derivation $D$ acts on the generators of $P$ by $D(u_i^{(m)}):= u_i^{(m+1)}$.

\begin{prop}  \cite{CS2020}\label{prop:LCA to PVA}
Let $\mathcal{P}$ be a differential algebra of polynomials \eqref{eq:P} endowed with a $\chi$-bracket $\{\ {}_\chi\ \}$.  Suppose that the skew-symmetry and Jacobi identity axioms hold for a set of generators $(u_i)_{i \in I}$ of $\mathcal{P}$, i.e., 
    \begin{equation*}
    \begin{aligned}
    & \{u_i\, {}_\chi\,  u_j\}=(-1)^{\tilde{\imath} \tilde{\jmath}} \{u_j\, {}_{-\chi-D}\, u_i\}_{\leftarrow};\\
    & \{u_i\, {}_\chi\, \{ u_j \,{}_\gamma u_k \, \}\}=(-1)^{\tilde{\imath}+1}\{\{u_i\, {}_\chi \, u_j\}{}_{\chi+\gamma} u_k\}+ (-1)^{(\tilde{\imath}+1)(\tilde{\jmath}+1)} \{u_j\, {}_\gamma \{u_i \, {}_\chi \, u_k\}\}
    \end{aligned}
    \end{equation*}
    for $i,j,k\in I$. Here $\tilde{\imath}$ and $\tilde{\jmath}$ are parities of $u_i$ and $u_j$, respectively. If the bracket $\{\, {}_\chi \, \}$ also satisfies the Leibniz rule, then $P$ is a SUSY PVA.
   
    \end{prop}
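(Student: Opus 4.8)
The plan is to prove Proposition \ref{prop:LCA to PVA}, which asserts that if the skew-symmetry and Jacobi identity hold on a set of generators $(u_i)_{i\in I}$ of a differential polynomial algebra $\mathcal{P}$ equipped with a $\chi$-bracket satisfying the Leibniz rule, then these axioms propagate to all of $\mathcal{P}$, making it a full SUSY PVA. The strategy is a standard bootstrapping argument: the Leibniz rule and sesquilinearity let us extend any $\chi$-bracket defined on generators to arbitrary differential polynomials, and we must verify that skew-symmetry and Jacobi survive this extension. The key structural observation is that both axioms are, in a suitable sense, compatible with the Leibniz rule and with the action of $D$, so that the set of triples (or pairs) on which they hold forms a subalgebra closed under $D$; since it contains the generators and the generators generate $\mathcal{P}$ as a differential algebra, the axioms hold everywhere.

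First I would establish the propagation of skew-symmetry. Assuming $\{a{}_\chi b\}=(-1)^{p(a)p(b)}\{b{}_{-\chi-D}a\}_{\leftarrow}$ and the analogous identity with $a$ replaced by $c$, I would verify that skew-symmetry then holds for the pairs $(a,bc)$ and $(ac,b)$ using the Leibniz rule together with sesquilinearity. Concretely, one expands $\{a{}_\chi bc\}$ via Leibniz, applies the inductive skew-symmetry to each factor, and carefully tracks the parity signs and the reversed arrow operator $\sum_n(-D-\chi)^n a_{(n)}(-)$, checking that everything reassembles into $(-1)^{p(a)p(bc)}\{bc{}_{-\chi-D}a\}_{\leftarrow}$. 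One must also check compatibility with $D$, i.e. that if skew-symmetry holds for $(a,b)$ it holds for $(Da,b)$ and $(a,Db)$; this is immediate from sesquilinearity. Since the generators satisfy skew-symmetry by hypothesis, and the property is stable under multiplication and under $D$ in both slots, it holds for all of $\mathcal{P}$.

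Next I would propagate the Jacobi identity by the same kind of induction, but now on the complexity of all three arguments $a,b,c$. Fixing skew-symmetry as already established, I would show that if Jacobi holds for all triples built from $a,b,c$ with each entry a generator, then it holds when any single entry is replaced by a product or by its image under $D$. The $D$-compatibility again follows from sesquilinearity, interpreted inside $\mathbb{C}[\chi,\gamma]\otimes\mathcal{P}$ with the stated commutation rules for $\chi,\gamma$. The Leibniz step is the delicate one: one expands each of the three terms in the Jacobi identity for, say, the triple $(a,b,cd)$ using the Leibniz rule, and must verify the resulting identity after collecting the many sign factors $(-1)^{p(a)+1}$, $(-1)^{(p(a)+1)(p(b)+1)}$, and the parities introduced by Leibniz. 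The hard part will be bookkeeping these signs correctly, especially because the odd indeterminates $\chi,\gamma$ anticommute and carry their own parity; one has to move $\chi$- and $\gamma$-dependent operators past odd elements and past each other consistently with the rules $[a{}_\chi\gamma^n b]=(-1)^n\gamma^n[a{}_\chi b]$ and $\chi\gamma+\gamma\chi=0$. Once the Leibniz and $D$ steps are verified, the usual induction shows Jacobi holds for all triples, and combined with the propagated skew-symmetry and the hypothesized Leibniz rule, $\mathcal{P}$ is a SUSY PVA by Definition \ref{Def:SUSY PVA}. I would likely remark that this is the SUSY analogue of the classical master formula argument of \cite{BDSK09}, so the technical content is a careful sign-tracking adaptation rather than a fundamentally new idea.
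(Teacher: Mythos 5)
Your plan is the right genre of argument, but it has a genuine gap at its very first step: the claim that ``the Leibniz rule and sesquilinearity let us extend any $\chi$-bracket defined on generators to arbitrary differential polynomials.'' The Leibniz rule of Definition \ref{Def:SUSY PVA} acts only in the \emph{second} argument, and sesquilinearity in the first argument only controls $\{Da\,{}_\chi\,b\}$. Together they determine $\{a\,{}_\chi\,b\}$ when $a$ is a derivative of a generator, but they put no constraint at all on $\{ac\,{}_\chi\,b\}$ when the first argument is a product. This is fatal for the proposed induction: already for a generator $a$ and a product $bc$, the identity to be checked, $\{a\,{}_\chi\,bc\}=(-1)^{p(a)p(bc)}\{bc\,{}_{-\chi-D}\,a\}_{\leftarrow}$, has a right-hand side involving $\{bc\,{}_\gamma\,a\}$, which the stated hypotheses do not compute, and the pair $(ac,b)$ is worse. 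You cannot invoke the left Leibniz rule to evaluate these brackets, because in the paper that rule is \emph{derived from} skew-symmetry --- precisely the property you are trying to propagate --- so the induction as described is circular. The gap is not cosmetic: read literally (with only the right Leibniz rule as hypothesis), the statement admits counterexamples. Take a single odd generator $u$ with $\{u\,{}_\chi\,u\}=\chi$, extend by sesquilinearity and the right Leibniz rule whenever the first argument is linear in the $u^{(m)}$, and declare the bracket to vanish whenever the first argument has polynomial degree at least two; all hypotheses hold (skew-symmetry and Jacobi on the generator included), yet skew-symmetry fails for the pair $(uu',u)$, since $\{u\,{}_\gamma\,uu'\}=\gamma u'-\gamma^2 u$ forces a nonzero value of $\{uu'\,{}_\chi\,u\}$.

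The resolution, which is the route of the proof the paper cites (\cite{CS2020}, Theorem 2.15) and which the paper itself carries out explicitly for the even analogue in Propositions \ref{prop:even_bracket_skew} and \ref{prop:even_bracket_Jacobi}, is to take as hypothesis that the bracket is \emph{given by the master formula} \eqref{eq:master}; this closed-form expression has both the right and the left Leibniz rules built in, and one then verifies skew-symmetry and the Jacobi identity as identities between the resulting closed-form expressions, using the generator identities together with sign lemmas of the type of Lemma \ref{lem:skew}. Your induction could in principle be repaired along these lines --- first prove that the master-formula bracket satisfies the left Leibniz rule, then run the propagation slot by slot --- but that preliminary step is essentially the same computation as the direct verification, so the inductive framing gains nothing; without it, the argument does not go through.
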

\begin{proof}
    See \cite{CS2020} Theorem 2.15.
\end{proof}

Thanks to the Leibniz rule, sesquilinearity and skew-symmetry axioms, an odd SUSY PVA bracket on a algebra of differential polynomials is completely determined by its values on pairs of generators $(u_i,u_j), \, i, j \in I.$ We will now try to find a formula expressing the bracket $\{ a \, {}_{\chi} \, b \}$ for all $a, b \in \mathcal{P}$ in terms of the brackets $ \{ u_i \, {}_{\chi} \, u_j \}$. For any $a , b \in \mathcal{P}$, we define the super-differential operator 
$$  \{a \, {}_{D} \, b\}\, :=\sum_{n \in \mathbb{Z}_+ }(-1)^{n(p(a)+p(b))+\frac{n(n-1)}{2}} a_{(n)}b \, D^n  \, \, .$$
Conversely, we can retrieve the $\chi$-bracket of $a$  and $b$ from the operator $\{a \, {}_{D} \, b\}$ by the formula 
$$ \{a \, {}_{D+ \chi} \, b\}(1)=\{a \, {}_{\chi} \, b\} \, .$$
For any $a,b,c \in \mathcal{P}$ we will make the slight abuse of notation
$$ \{a \, {}_{D + \chi} \, b\}\, \, c :=\sum_{n \in \mathbb{Z}_+ }(-1)^{n(p(a)+p(b))+\frac{n(n-1)}{2}} a_{(n)}b \, (D + \chi)^n (c) \, . $$
By the skew-symmetry and Leibniz rule, we have 
\begin{equation}
    \{ab{}_\chi c\}=(-1)^{p(b)p(c)}\{a{}_{\chi+D} c\} b + (-1)^{p(a)(p(b)+p(c))}\{b{}_{\chi+D} c\}a
\end{equation} 
and hence for all $a,b\in \mathcal{P}$,
    \begin{equation} \label{eq:var-deriv-1}
    \begin{aligned}
        \{\, a\, {}_\chi\, b\, \}& =\sum_{i\in I, m\in \mathbb{Z_+}}(-1)^{(p(a)+\tilde{\imath}+m) p(b)} \{u_i^{(m)}{}_{\chi+D} \, b\} \frac{\partial a}{ \partial u_i^{(m)}}  \\
        & =\sum_{i\in I, \, m\in \mathbb{Z}_+} (-1)^{(p(a)+\tilde{\imath}+m)p(b)+m(\tilde{\imath}+p(b)+1)+ \frac{m(m-1)}{2}}\{u_i \, {}_{\chi+D} \, b\} (\chi+D)^m \frac{\partial a}{\partial u_i^{(m)}}.
    \end{aligned}
    \end{equation}
Finally we obtain what we will call the \textit{master formula}:
    \begin{equation} \label{eq:master}
        \begin{aligned}
            \{\, a\, {}_\chi\, b\, \}& = \sum_{i,j\in I, \, m,n\in \mathbb{Z}_+} s_{i,j}^{m,n}\frac{\partial b}{\partial u_j^{(n)}} \{\, u_i^{(m)}\, {}_{\chi+D} \, u_j^{(n)} \} \frac{\partial a}{ \partial u_i^{(m)}}\\
            & =  \sum_{i,j\in I, \, m,n\in \mathbb{Z}_+} s_{i,j}^{m,n} \, t_{i,j}^{m,n}\frac{\partial b}{\partial u_j^{(n)}} (\chi+D)^n\{\, u_i\, {}_{\chi+D} \, u_j \}(\chi+D)^m \frac{\partial a}{ \partial u_i^{(m)}},
        \end{aligned}
    \end{equation}
    where $s_{i,j}^{m,n}=(-1)^{p(b)(p(a)+\tilde{\jmath}+n+1)+(\tilde{\jmath}+n)(\tilde{\imath}+m)}$ and $t_{i,j}^{m,n}= (-1)^{n(\tilde{\imath}+m+1)+m(\tilde{\imath}+\tilde{\jmath}+1)+ \frac{m(m-1)}{2}}$ for  $\tilde{\imath}:=p(u_i)$ and   $\tilde{\jmath}:=p(u_j)$. 

\begin{defn}
    Let  $\mathcal{P}$ be the algebra of differential polynomials \eqref{eq:P}. The \textit{variational derivative} of an element $a \in \mathcal{P}$ with respect to $u_i$ is given by
\begin{equation} \label{eq:var-deriv}
  \frac{\delta a}{\delta u_i}= \sum_{m\in \mathbb{Z}_+}(-1)^{m \tilde{\imath} + \frac{m(m+1)}{2}} D^m \frac{\partial a}{\partial u_i^{(m)}}.
\end{equation}
\end{defn}

Let us consider the projection map $\int: \mathcal{P} \to \smallint \mathcal{P} := \mathcal{P}/D(\mathcal{P})$. The notation is justified by its `integration by parts' property $\int D(a)b=- \int (-1)^{p(a)}a D(b)$. We call an element of $\smallint \mathcal{P}$ a \textit{ functional}. By sesquilinearity, given $\smallint a \in \smallint \mathcal{P}$ the following linear map is well-defined 
\begin{equation}
     \mathcal{P} \to \mathcal{P}, \quad v \mapsto \{\smallint a\, ,  \,  v\}:= \{\, a\, {}_\chi \, v\, \}|_{\chi=0}.
\end{equation}
  By the Leibniz rule, this map is a derivation of $\mathcal{P}$ of parity $p(a)+1.$   It follows from the Jacobi identity that the superspace of functionals is endowed with a {\it degree 1 Lie superalgebra bracket} defined by 
\[ \{ \, \smallint a \, , {} \, \smallint b \, \}:= \smallint \{  \, a \, {}_\chi \, b\, \}|_{\chi=0}. \]
Comparing \eqref{eq:master} and \eqref{eq:var-deriv}, we have  
\begin{equation} \label{eq:Lie str}
\begin{split}
    \{\smallint a, \smallint b\} &= \sum_{i,j\in I} (-1)^{p(a)p(b)+p(b)\tilde{\jmath}+p(b)+ \tilde{\imath}\tilde{\jmath}} \int  \frac{\delta b}{\delta u_j} \{u_i{}_D u_j\} ( \frac{\delta a}{\delta u_i} ) \\
        &= \sum_{i,j\in I} (-1)^{p(a)\tilde{\jmath}+\tilde{\imath}\tilde{\jmath}} \int   \{u_i{}_D u_j\} ( \frac{\delta a}{\delta u_i} ) \, \, \frac{\delta b}{\delta u_j} \, .
\end{split}
\end{equation}

\subsection{SUSY classical W-algebras} \label{subsec: SUSY
W-algebra}
Let $\mathfrak{g}$ be a finite basic Lie superalgebra with a $\mathbb{Z}/2$-grading  $\mathfrak{g}=\bigoplus_{i\in \frac{\mathbb{Z}}{2}} \mathfrak{g}(i)$ and denote 
\[ \mathfrak{n}= \bigoplus_{i>0} \mathfrak{g}(i), \quad   \mathfrak{b} = \bigoplus_{i\leq 0 } \mathfrak{g}(i), \quad  \mathfrak{b}_+ = \bigoplus_{i\geq 0 } \mathfrak{g}(i).\]
Suppose that there is an odd element  $f\in \mathfrak{g}(-\frac{1}{2})$ satisfying the properties:
\begin{equation} \label{eq:odd good}
    \text{$\text{ad} f |_\mathfrak{n} $ :   injective\ ,    \quad $\text{ad} f|_{\mathfrak{b}}$ :  surjective}.
\end{equation}
For example, \eqref{eq:odd good} holds when $f$ is an odd nilpotent element in a subalgebra isomorphic to $\mathfrak{osp}(1|2)$ and one takes the Dynkin grading induced from the $\mathfrak{sl}_2$-triple $(E,H,F=\frac{1}{2}[f,f])$.

Let $\bar{\mathfrak{g}}$ be the parity reversed superspace of $\mathfrak{g}.$  The affine SUSY PVA of level $k \in \mathbb{C}$, denoted by 
$\mathcal{V}^{k}(\bar{\mathfrak{g}})$, is isomorphic as a differential algebra to the supersymmetric algebra of the superspace $\mathbb{C}[D]\otimes \bar{\mathfrak{g}}$. The SUSY PVA  $\chi$-bracket  \cite{HK07} on $\mathcal{V}^{k}(\bar{\mathfrak{g}})$ is defined on its generators by
\[ \{\bar{g}{}_\chi \bar{h}\}_{\text{aff}} = (-1)^{p(g)}(\, \overline{[g,h]} +k\chi(g|h) \, )\] for $g,h\in \mathfrak{g}$. The classical SUSY W-algebra associated with $\mathfrak{g}$ and $f$ is defined by 
\begin{equation}
\mathcal{W}^k(\bar{\mathfrak{g}}, f)= \{ w \in \mathcal{V}^{k}(\bar{\mathfrak{g}}) \,  | \, \rho( \{\bar{v}{}_\chi w\}_{\text{aff}} )=0 \text{ for all } v \in \mathfrak{n}\},
\end{equation}
where $\rho : \mathbb{C}[\chi]\otimes \mathcal{V}^{k}(\bar{\mathfrak{g}}) \to \mathbb{C}[\chi]\otimes S(\mathbb{C}[D]\otimes \bar{\mathfrak{b}})$ is the algebra homomorphism defined by
\[D^m\bar{g }\mapsto \left\{\begin{array}{ll} D^m\bar{g} & \text{ if } g\in \mathfrak{b}, \\ -D^m(f|g)  & \text{ if } g \in \mathfrak{n};\end{array} \right.  \quad \chi \mapsto \chi \, ,\]
for $m\in \mathbb{Z}_+$.
One can check that the $\chi$-bracket  on $\mathcal{W}^k(\bar{\mathfrak{g}}, f)$ defined by 
$
 \{w_1{}_\chi w_2\}:=\rho( \{w_1 {}_\chi w_2\}_{\text{aff}})$
 for  $w_1, w_2 \in\mathcal{W}^k(\bar{\mathfrak{g}}, f)$ induces a SUSY PVA structure \cite{Suh20}.
 They are isomorphic whenever $k \neq 0$ and we will simply denote $\mathcal{W}(\bar{\mathfrak{g}},f):=\mathcal{W}^1(\bar{\mathfrak{g}},f)$. By the same reason, we denote $\mathcal{V}(\bar{\mathfrak{g}}):= \mathcal{V}^1(\bar{\mathfrak{g}}).$
 
 Here we remark that in other references, for example \cite{MRS21}, a SUSY W-algebra is defined under the assumption that $f$ is an odd nilpotent element in an $\mathfrak{osp}(1|2)$ subalgebra but \eqref{eq:odd good} is enough for the properties that are used in this paper. 
 
\begin{example} \label{ex:gl(n+1|n)} Let $\mathfrak{g}= \mathfrak{gl}(n+1|n)$. Take a basis $\{\, e_{ij}\, |\, i,j=1,2,\cdots, 2n+1 \, \}$ of $\mathfrak{g}$,
 where the parity of $e_{ij}\in \mathfrak{g}$ is $p(e_{ij})\equiv i+j\  (\text{mod }2)$ and  $e_{ij} \in \mathfrak{g}(\frac{j-i}{2})$. One can check that the odd element 
 $f=\sum_{i=1}^{2n} e_{i+1\, i} \in \mathfrak{g}(-\frac{1}{2})$
is in a subalgebra of $\mathfrak{g}$ isomorphic to $\mathfrak{osp}(1|2)$. Hence $\mathcal{W}(\bar{\mathfrak{g}},f)$ is well-defined.
 \end{example}

\begin{example} \label{ex:gl(n|n)}
    Let $\mathfrak{g}=\mathfrak{gl}(n|n)$ and take a basis $\{\, e_{ij}\, |\, i,j=1,2,\cdots, 2n \, \}$, where the parity of $e_{ij}\in \mathfrak{g}$ is $p(e_{ij})\equiv i+j\  (\text{mod }2)$. Consider the grading on $\mathfrak{g}$ such that $e_{ij}\in \mathfrak{g}(\frac{j-i}{2})$ and take an odd element $f=\sum_{i=1}^{2n-1} e_{i+1\, i} \in \mathfrak{g}(-\frac{1}{2})$. Then $f$ satisfies the properties \eqref{eq:odd good} and  hence $\mathcal{W}(\bar{\mathfrak{g}},f)$ is well-defined.
\end{example}

 \subsection{Lax matrix operators} Let $\bar{\mathfrak{b}}$ be the parity reversed superspace of $\mathfrak{b}$ and $\mathcal{V}(\bar{\mathfrak{b}})$ be the differential superalgebra of polynomials generated by $\bar{\mathfrak{b}}$.
Let $\{q^i|i\in S\}$ and $\{ {q}_i| i\in S\}$ be bases of $\mathfrak{b}_+$ and ${\mathfrak{b}}$ such that $(q^i |q_j) = \delta_{ij}$.
We introduce the {\it universal Lax operator }
 \begin{equation} \label{eq: universal L}
  \mathcal{L}^u = D + \sum_{i\in S} q^{i} \otimes \bar{q}_i - f\otimes 1 \in (\mathbb{C}D\oplus \mathbb{C} \partial) \ltimes \left(\mathfrak{g} \otimes \mathcal{V}(\bar{\mathfrak{b}}) \right),
  \end{equation}
 where $\mathbb{C}[D]$ acts on $\mathfrak{g} \otimes \mathcal{V}(\bar{\mathfrak{b}})$ via $D (g \otimes X)= (-1)^{p(g)} g\otimes D(X)$.  Note that 
 $\mathfrak{g} \otimes \mathcal{V}(\bar{\mathfrak{b}})$ is a Lie superalgebra for the bracket
 \begin{equation} \label{new Lie bracket}
  [g \otimes X, h \otimes Y] = (-1)^{(p(X))p(h)}[g,h] \otimes XY, 
\end{equation}
  and can be naturally extended to the semi-direct product $\mathbb{C}[D] \ltimes \left(\mathfrak{g} \otimes \mathcal{V}(\bar{\mathfrak{b}}) \right)$.

 \begin{prop} [\cite{CS2020}] \label{Prop:canonical form}
 Let $V$ be a subspace of $\mathfrak{g}$ such that $[\mathfrak{n},f] \bigoplus V = \mathfrak{b}_+$. 
 \begin{enumerate}[(1)]
 \item There exists a unique even element $N_c \in \mathfrak{n}\otimes \mathcal{V}(\mathfrak{\bar{b}})$ such that 
 $e^{\text{ad} N_c}\mathcal{L}^u = \mathcal{L}^{c}$ and 
 \begin{equation*} 
 Q^c:=\mathcal{L}^{c}-D+f \otimes 1 \in V \otimes \mathcal{V}(\bar{\mathfrak{b}}).
 \end{equation*}
The operator $\mathcal{L}^{c}$  is called the canonical Lax operator associated with our choice of $f$ and $V$.
 \item Let $\{v_i \, |\,  i\in J\}$ be a basis of $V$ and let $Q^c = \sum_{i\in J} v_i \otimes w_i$. Then the W-algebra $\mathcal{W}(\bar{\mathfrak{g}}, f)$ is freely generated by the elements $w_i$'s as a differential algebra.
 \end{enumerate}
 \end{prop}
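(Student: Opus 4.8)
The plan is to treat part (1) as a supersymmetric Drinfeld--Sokolov gauge fixing and to solve the gauge equation $e^{\mathrm{ad} N_c}\mathcal{L}^u=\mathcal{L}^c$ recursively in the $\tfrac12\mathbb{Z}$-grading inherited from $\mathfrak{g}=\bigoplus_i\mathfrak{g}(i)$. Writing $\mathcal{L}^u=D-f\otimes 1+Q^u$ with $Q^u=\sum_{i\in S}q^i\otimes\bar q_i\in\mathfrak{b}_+\otimes\mathcal{V}(\bar{\mathfrak{b}})$ and expanding $N_c=\sum_{k>0}N^{(k)}$ with $N^{(k)}\in\mathfrak{g}(k)\otimes\mathcal{V}(\bar{\mathfrak{b}})$, I would compare the two sides grade by grade. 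Since $[\,\cdot\,,-f\otimes 1]$ lowers the grading by $\tfrac12$ while the derivative term $e^{\mathrm{ad} N_c}(D)=D-D(N_c)+\cdots$ preserves it, the grade-$k$ component of the equation takes the shape $Q^c_{[k]}=Q^u_{[k]}+[N^{(k+1/2)},-f\otimes1]+C_{[k]}$, where $C_{[k]}$ gathers the nested commutators and the derivative contributions and depends only on the components $N^{(j)}$ with $j\le k$, already fixed at earlier stages. I would then impose $Q^c_{[k]}\in V$: because $\mathrm{ad} f$ restricts to a linear bijection $\mathfrak{n}\to[\mathfrak{n},f]$ by \eqref{eq:odd good}, and $\mathfrak{b}_+=[\mathfrak{n},f]\oplus V$, the projection of the right-hand side onto $[\mathfrak{n},f]$ along $V$ is cancelled by a unique choice of $N^{(k+1/2)}$.

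This recursion determines $N_c$ uniquely and terminates because $\mathfrak{g}$ is finite dimensional, so the grading is bounded above; uniqueness of both $N_c$ and $\mathcal{L}^c$ then follows from the uniqueness at each stage. Two points require care in the odd/super setting. First, parity bookkeeping: one must check that the $N^{(k+1/2)}$ produced by the recursion assemble into an \emph{even} element of $\mathfrak{n}\otimes\mathcal{V}(\bar{\mathfrak{b}})$, which I would verify by tracking parities through the twisted action $D(g\otimes X)=(-1)^{p(g)}g\otimes D(X)$ and the bracket \eqref{new Lie bracket}. Second, the sign conventions in the expansion of $e^{\mathrm{ad} N_c}$ and in the grade-$k$ identity must be handled consistently, but they do not affect solvability. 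A minor subtlety is that $V$ need not be graded; one resolves this by first carrying out the gauge fixing into a graded complement of $[\mathfrak{n},f]$ in $\mathfrak{b}_+$ and then observing that projection along $V$ remains an isomorphism on the relevant graded pieces.

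For part (2) I would first argue that each $w_i$ lies in $\mathcal{W}(\bar{\mathfrak{g}},f)$. Identifying $\mathcal{W}$ with its image under $\rho$ inside $\mathcal{V}(\bar{\mathfrak{b}})$, the coefficients of the canonical Lax operator are precisely the invariants of the residual gauge action, so the $w_i$ are gauge invariant and hence lie in $\mathcal{W}$. For free generation I would run a leading-term (triangularity) argument with respect to the conformal weight on $\mathcal{V}(\bar{\mathfrak{b}})$. The grade-$k$ part of $Q^c$ equals $\sum_{q^i\in\mathfrak{g}(k)}q^i\otimes\bar q_i$ plus corrections that are either products of generators or $D$-derivatives of strictly lower conformal weight. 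Consequently, modulo such terms, the coefficient $w_i$ of a basis vector $v_i\in V$ has lowest-weight linear part equal to the dual generator $\bar q_i$; distinct $w_i$ therefore carry distinct leading generators, which forces the $w_i$ together with their $D$-derivatives to be algebraically independent. Finally, the good conditions \eqref{eq:odd good} force $\mathfrak{g}^f\subseteq\mathfrak{b}$ and give $\dim V=\dim\mathfrak{b}_+-\dim\mathfrak{n}=\dim\mathfrak{g}(0)=\dim\mathfrak{g}^f$, so $|J|$ matches the number of free generators of $\mathcal{W}(\bar{\mathfrak{g}},f)$ predicted by the structure theory, and the $w_i$ generate the whole algebra.

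The main obstacle is part (2): establishing rigorously the triangular leading-term structure and deducing from it both algebraic independence and surjectivity onto $\mathcal{W}$. The recursion in part (1) is essentially mechanical once the grading and parity conventions are fixed, but to conclude that the $w_i$ \emph{freely} generate $\mathcal{W}$ one needs tight control of the conformal-weight filtration on $\mathcal{V}(\bar{\mathfrak{b}})$, the injectivity of $\rho$ on $\mathcal{W}$, and the matching generator count, and this is where the substantive work lies.
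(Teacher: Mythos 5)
A preliminary remark: the paper never proves this proposition itself --- its ``proof'' is the citation to Theorem 3.14 of \cite{CS2020} --- so your attempt can only be measured against the standard Drinfeld--Sokolov gauge-fixing argument, whose part (1) recursion you set up correctly \emph{when $V$ is homogeneous for the $\tfrac{1}{2}\mathbb{Z}$-grading}. The first genuine gap is exactly the point you wave off as a minor subtlety. If $V$ is not graded, the condition $Q^c\in V\otimes\mathcal{V}(\bar{\mathfrak{b}})$ couples different degrees, so the unknown components of $N_c$ no longer enter each degree only through the invertible map $\mathrm{ad}\,f$; they also enter through their $D$-derivatives, and the recursion degenerates into a genuine differential equation which in general has no solution in the polynomial algebra $\mathcal{V}(\bar{\mathfrak{b}})$. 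Your proposed repair (gauge into a graded complement first, then project along $V$) does not produce an element of $V\otimes\mathcal{V}(\bar{\mathfrak{b}})$, and no repair is possible, because for a degree-mixing complement the statement itself fails. Already for $\mathfrak{gl}(1|1)$ (Example \ref{ex:gl(n|n)} with $n=1$, where $[\mathfrak{n},f]=\mathbb{C}(e_{11}+e_{22})$), take $V'=\mathbb{C}e_{22}\oplus\mathbb{C}\bigl(e_{12}+\lambda(e_{11}+e_{22})\bigr)$ with $\lambda\neq 0$ and write $N_c=e_{12}\otimes n$: the membership condition reads, up to convention-dependent signs, $\bar e_{11}+n=\lambda\bigl(\bar e_{21}+D(n)-n(\bar e_{11}+\bar e_{22})\bigr)$, and since $n$ has no constant term its part linear in the generators is $n_{\mathrm{lin}}\mp\lambda D(n_{\mathrm{lin}})=\lambda \bar e_{21}-\bar e_{11}$, which forces the coefficient of $\bar e_{11}^{(i)}$ in $n$ to be $-(\pm\lambda)^{i}$ for every $i\in\mathbb{Z}_+$; no differential polynomial does this. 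Hence gradedness of $V$ --- which holds for the complement $V=\bigoplus_{i=1}^{N}\mathbb{C}e_{iN}$ of \eqref{eq:decomp} actually used later, and which must be implicit in the cited theorem --- is a necessary hypothesis, and your recursion is valid only under it.

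The second gap is the generation claim in part (2), which you partly acknowledge. Your two sound ingredients are gauge invariance of the $w_i$ (from the uniqueness in part (1) together with Proposition \ref{prop:another def of W}) and algebraic independence via leading terms. But you then deduce generation from $|J|=\dim\mathfrak{g}^f$ and ``the number of free generators predicted by the structure theory''. This is circular --- that prediction is part of what the proposition asserts --- and insufficient even if granted: in a differential algebra, an algebraically independent family of the right cardinality need not generate. In $\mathbb{C}[u^{(n)}\,|\,n\in\mathbb{Z}_+]$ the element $u^2$ and its $D$-derivatives are algebraically independent, yet they generate a proper differential subalgebra. The missing step is short and standard: for any $w\in\mathcal{W}(\bar{\mathfrak{g}},f)$, gauge invariance and \eqref{eq:Q_univ} give $w=w(Q)=w(Q^c)$, and evaluating the differential polynomial $w$ at $Q^c=\sum_{i\in J}v_i\otimes w_i$, using the identification of elements of $\mathcal{V}(\bar{\mathfrak{b}})$ with functions on $\mathcal{F}$, writes $w$ explicitly as a differential polynomial in the $w_i$ and their $D$-derivatives. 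It is this evaluation argument, not the dimension count, that yields surjectivity; without it part (2) is unproved.
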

 
 \begin{proof}
See Theorem 3.14 in  \cite{CS2020}. 
 \end{proof}

There is another realization of the classical SUSY W-algebras. We identify an element in $\mathcal{V}(\bar{\mathfrak{b}})$ with a linear function from the odd space $\mathcal{F}:=(\mathfrak{b}_+\otimes \mathcal{V}(\bar{\mathfrak{b}}))_{\bar{1}}$ to $\mathcal{V}(\bar{\mathfrak{b}})$ defined by 
\begin{equation} \label{eq:b- functional}
    \bar{a}^{(n)}(b\otimes X):= (b|a)X^{(n)}, \quad AB(b\otimes X)= A(b\otimes X) B(b\otimes X) 
\end{equation}
for $a\in \mathfrak{b},$ $b\in \mathfrak{b}_+$, and $A,B,X\in \mathcal{V}(\bar{\mathfrak{b}})$. In particular, for $Q:= \bigoplus_{i\in S} q^i \otimes \bar{q}_i$ and  $A \in \mathcal{V}(\bar{\mathfrak{b}})$, we have 
\begin{equation} \label{eq:Q_univ}
    A(Q)=A.
\end{equation}
For $P\in \mathcal{F}$, we consider the corresponding operator  $\mathcal{L}_P:=  D+P-f\otimes 1$. Two elements $P$ and $P'$ in $\mathcal{F}$ are called gauge equivalent if there is an element $N\in \mathfrak{n}\otimes \mathcal{V}(\bar{\mathfrak{b}})$ such that $e^{\text{ad\,} N}\mathcal{L}_P= \mathcal{L}_{P'}.$ On the other hand, since $\mathcal{W}(\bar{\mathfrak{g}},f)$ is a subalgebra of $\mathcal{V}(\bar{\mathfrak{b}})$, it can also be realized as an algebra of functions on $\mathcal{F}.$ As such, it turns out that this subalgebra $\mathcal{W}(\bar{\mathfrak{g}},f)$ identifies with the gauge invariant functions in the following sense.

\begin{prop} \cite{CS2020}\label{prop:another def of W} 
The differential algebra $\mathcal{W}(\bar{\mathfrak{g}},f)$ is the set of gauge invariant functions $w$, that is $w \in \mathcal{W}(\bar{\mathfrak{g}},f)$ if and only if 
 $w(P)=w(P')$ whenever two elements $P$ and $ P'$ in $\mathcal{F}$ are gauge equivalent.
\end{prop}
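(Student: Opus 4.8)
The plan is to prove both inclusions separately, exploiting the two realizations of $\mathcal{W}(\bar{\mathfrak{g}},f)$ established earlier: the defining one as the kernel of $\rho\{\bar v\,{}_\chi\,\cdot\}_{\text{aff}}$ for $v\in\mathfrak{n}$, and the canonical-form realization from Proposition \ref{Prop:canonical form}. The key observation linking gauge transformations to the $\chi$-bracket is that, under the functional identification \eqref{eq:b- functional}, the infinitesimal gauge action $\mathcal{L}_P\mapsto e^{\text{ad}\,N}\mathcal{L}_P$ for $N\in\mathfrak{n}\otimes\mathcal{V}(\bar{\mathfrak{b}})$ is generated at first order by $\text{ad}\,N$, and the directional derivative of a function $w\in\mathcal{V}(\bar{\mathfrak{b}})$ along the orbit direction $[N,\mathcal{L}_P]$ should coincide (up to a sign) with $\rho$ applied to $\{\bar v\,{}_\chi w\}_{\text{aff}}|_{\chi=0}$ when $N=v\otimes 1$ for $v\in\mathfrak{n}$. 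First I would make this identification precise.

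\emph{Gauge invariant $\Rightarrow$ lies in $\mathcal{W}$.} Suppose $w$ is gauge invariant. For each $v\in\mathfrak{n}$ and parameter $\epsilon$, the family $P_\epsilon$ defined by $\mathcal{L}_{P_\epsilon}=e^{\epsilon\,\text{ad}(v\otimes 1)}\mathcal{L}_P$ stays in a single gauge orbit, so $w(P_\epsilon)$ is constant in $\epsilon$. Differentiating at $\epsilon=0$ gives $(d w)_P\big([v\otimes 1,\mathcal{L}_P]\big)=0$. The bracket $[v\otimes 1,\mathcal{L}_P]$ expands, using \eqref{new Lie bracket} and the semi-direct action of $D$, into a term $[v,P]$ together with the derivative term coming from $[v\otimes 1,D]=\pm\, v\otimes 1$ acting as $D(v)$, and a term $[v\otimes 1,-f\otimes 1]=-[v,f]\otimes 1$. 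Matching this expression term by term against the explicit affine bracket $\{\bar v\,{}_\chi\,w\}_{\text{aff}}=(-1)^{p(v)}(\overline{[v,\cdot]}+k\chi(v|\cdot))$, evaluated via the master formula \eqref{eq:master} and the functional dictionary \eqref{eq:b- functional}, one sees that the vanishing of the directional derivative for all $v\in\mathfrak{n}$ is exactly the condition $\rho(\{\bar v\,{}_\chi w\}_{\text{aff}})=0$. The $\chi$-linear part is captured by the derivative-of-$D$ term and the invariant form $(v|\cdot)$, while the $\chi^0$ part is captured by $[v,P]$ and $[v,f]$; the projection $\rho$ is precisely the substitution that sends the $\mathfrak{n}$-components to their pairing with $f$, matching the shift by $-f\otimes 1$ in $\mathcal{L}_P$. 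Hence $w\in\mathcal{W}(\bar{\mathfrak{g}},f)$.

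\emph{$\mathcal{W}\Rightarrow$ gauge invariant.} Conversely, take $w\in\mathcal{W}(\bar{\mathfrak{g}},f)$. By Proposition \ref{Prop:canonical form}(1), every $\mathcal{L}_P$ is gauge equivalent to a unique canonical operator $\mathcal{L}^c$ with $Q^c\in V\otimes\mathcal{V}(\bar{\mathfrak{b}})$, and by part (2) the generators $w_i$ of $\mathcal{W}$ are recovered as the $V$-components of $Q^c$. Since $w$ is a differential polynomial in these $w_i$, its value is determined by $\mathcal{L}^c$ alone. Because gauge-equivalent $P,P'$ share the same canonical form $\mathcal{L}^c$ (uniqueness in Proposition \ref{Prop:canonical form}(1)), we get $w(P)=w(P')$, so $w$ is gauge invariant. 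I expect the main obstacle to be the \emph{first} direction: carefully tracking the signs and the odd/even grading when differentiating the exponentiated gauge action and reconciling the semi-direct product conventions for $D$ in \eqref{eq: universal L} with the sesquilinearity signs of the affine $\chi$-bracket, so that the $\chi=0$ evaluation lands exactly on $\rho(\{\bar v\,{}_\chi w\}_{\text{aff}})$ rather than off by an overall sign or a boundary term in $D\mathcal{V}(\bar{\mathfrak{b}})$.
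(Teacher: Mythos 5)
The paper never proves this proposition — it is imported wholesale from \cite{CS2020} — so your proposal has to stand on its own, and its first half does not.

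The genuine gap is in your first direction (gauge invariant $\Rightarrow$ $w\in\mathcal{W}(\bar{\mathfrak{g}},f)$): you differentiate only along \emph{constant} gauge parameters $N=v\otimes 1$. First, there is a parity obstruction: $\mathcal{F}$ is the odd subspace of $\mathfrak{b}_+\otimes\mathcal{V}(\bar{\mathfrak{b}})$, so $e^{\text{ad}\,N}$ maps $\mathcal{F}$ into $\mathcal{F}$ only when $N$ is even; when $v\in\mathfrak{n}$ is odd (and many elements of $\mathfrak{n}$ are odd in the $\mathfrak{gl}(m|n)$ examples), $v\otimes 1$ is odd, your curve $P_\epsilon$ leaves $\mathcal{F}$, and $w(P_\epsilon)$ is not even defined — admissible parameters in the direction of an odd $v$ have the form $v\otimes c$ with $c$ odd, hence nonconstant. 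Second, and more seriously, even for even $v$ the constant-parameter computation recovers only the $\chi^0$-coefficient of $\rho(\{\bar v\,{}_\chi\,w\}_{\text{aff}})$. The coefficients of $\chi^n$, $n\geq 1$, enter the gauge variation only through $[v\otimes c,D]=\mp\,v\otimes D(c)$ and through the Leibniz expansion of $D^n(c\,\bar{q}_i)$, both of which vanish or collapse when $c=1$; indeed your identity ``$[v\otimes 1,D]=\pm v\otimes 1$ acting as $D(v)$'' is a miscomputation, since $D$ acts on the $\mathcal{V}(\bar{\mathfrak{b}})$ tensor factor and $D(1)=0$, so that bracket is zero. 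What your argument actually proves is $\rho(\{\bar v\,{}_\chi\,w\}_{\text{aff}})|_{\chi=0}=0$ for all $v\in\mathfrak{n}$, which is strictly weaker than the defining condition $\rho(\{\bar v\,{}_\chi\,w\}_{\text{aff}})=0$ in $\mathbb{C}[\chi]\otimes\mathcal{V}(\bar{\mathfrak{b}})$; already in the non-SUSY $\mathfrak{sl}_2$ Drinfeld--Sokolov picture the field $b'$ is invariant under all constant gauge transformations yet is not a W-algebra element. The repair is to run the infinitesimal argument with $N=v\otimes c$ for \emph{arbitrary} $c\in\mathcal{V}(\bar{\mathfrak{b}})$ of parity complementary to $p(v)$: writing $\rho(\{\bar v\,{}_\chi\,w\}_{\text{aff}})=\sum_n\chi^n a_n$, the variation of $w$ at the universal point $Q$ is, up to signs, $\sum_n \pm\,a_n\,c^{(n)}$, and its vanishing for all $c$ forces every $a_n=0$, because a differential operator with coefficients in $\mathcal{V}(\bar{\mathfrak{b}})$ annihilating all of $\mathcal{V}(\bar{\mathfrak{b}})$ is zero.

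Your second direction is essentially sound but leans on more than the quoted statement: Proposition \ref{Prop:canonical form} concerns the universal operator $\mathcal{L}^u$ only. To apply it to an arbitrary $\mathcal{L}_P$ you should note (i) that evaluation at $P$ is a differential algebra homomorphism commuting with $D$, so evaluating $e^{\text{ad}\,N_c}\mathcal{L}^u=\mathcal{L}^c$ at $P$ exhibits a canonical form of $\mathcal{L}_P$ with coefficients $w_i(P)$, and (ii) a uniqueness statement for canonical forms inside an arbitrary gauge orbit (proved by the same grading induction from $[\mathfrak{n},f]\oplus V=\mathfrak{b}_+$ and injectivity of $\text{ad}\,f|_{\mathfrak{n}}$), which is stronger than uniqueness of $N_c$ for $\mathcal{L}^u$ alone. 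With these two points supplied, the conclusion $w(P)=w(P')$ for gauge-equivalent $P,P'$ follows as you say.
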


 For $a \in \mathcal{V}(\bar{\mathfrak{b}})$, we denote its variational derivative with respect to $\bar{\mathfrak{b}}$ by
\begin{equation} \label{eq:varational_matrix}
\frac{\delta a}{\delta q} := \sum_{i\in S} q_i \otimes \frac{\delta a}{\delta \bar{q}_i}. 
\end{equation}
The following theorem holds.
\begin{thm}  \cite{CS2020}
The degree 1 Lie superalgebra bracket on $\int \mathcal{W}(\bar{\mathfrak{g}}, f)$ can be realized via the universal Lax operator and variational derivatives. More precisely, for $w_1, w_2 \in \mathcal{W} (\bar{\mathfrak{g}}, f)$, 
 \[  \{\smallint w_1\, , \  \smallint w_2 \} = -  \int \left( \left. \frac{\delta w_1 }{\delta q} \right| \left[\mathcal{L}^u,  \frac{\delta w_2 }{\delta q} \right]\right). \]
\end{thm}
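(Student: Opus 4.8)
The plan is to prove the identity first for the affine SUSY PVA $\mathcal{V}(\bar{\mathfrak{g}})$, where no reduction is present, and only then transport it to $\mathcal{W}(\bar{\mathfrak{g}},f)$ through the homomorphism $\rho$. Fix dual bases $\{g^a\}$ and $\{g_a\}$ of $\mathfrak{g}$ with $(g^a|g_b)=\delta_{ab}$, chosen compatibly with the grading so that $g_a\in\mathfrak{b}$ forces $g^a\in\mathfrak{b}_+$, and set $\mathcal{L}^{\mathrm{aff}}=D+\sum_a g^a\otimes\bar g_a$ and $\frac{\delta a}{\delta g}=\sum_a g_a\otimes\frac{\delta a}{\delta\bar g_a}$. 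The first goal is the affine formula
\begin{equation*}
\{\smallint a,\smallint b\}_{\mathrm{aff}}=-\int\Big(\tfrac{\delta a}{\delta g}\,\Big|\,\big[\mathcal{L}^{\mathrm{aff}},\tfrac{\delta b}{\delta g}\big]\Big),\qquad a,b\in\mathcal{V}(\bar{\mathfrak{g}}).
\end{equation*}
To establish it I would start from the closed expression \eqref{eq:Lie str} for the degree $1$ Lie bracket in terms of the fundamental brackets $\{\bar g_a{}_D\bar g_b\}_{\mathrm{aff}}$, which by the affine relation decompose into a structure-constant part proportional to $\overline{[g_a,g_b]}$ and a central part proportional to $(g_a|g_b)$. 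On the right-hand side I would expand $[\mathcal{L}^{\mathrm{aff}},\frac{\delta b}{\delta g}]$ into its $D$-contribution and its $\sum_a g^a\otimes\bar g_a$-contribution using the Lie bracket \eqref{new Lie bracket} and the twisted $D$-action, pair with $\frac{\delta a}{\delta g}$, and integrate. The central part is then reproduced by the $D$-contribution after one integration by parts, and the structure-constant part by the remaining contribution via the reproducing identity $\sum_a\bar g_a\,(g^a|x)=\bar x$ for $x\in\mathfrak{g}$, combined with invariance $(x|[y,z])=([x,y]|z)$ of the form.

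Next I would record three properties of $\rho$: it is an algebra homomorphism, it commutes with $D$ (hence with $\int$ and with setting $\chi=0$), and it fixes $\chi$. Because every $w\in\mathcal{W}(\bar{\mathfrak{g}},f)\subseteq\mathcal{V}(\bar{\mathfrak{b}})$ is independent of the generators $\bar g_a$ with $g_a\in\mathfrak{n}$, its variational derivatives obey $\frac{\delta w}{\delta\bar g_a}=0$ for $g_a\in\mathfrak{n}$, so $\rho\big(\frac{\delta w}{\delta g}\big)=\frac{\delta w}{\delta q}$. The crucial computation is $\rho(\mathcal{L}^{\mathrm{aff}})=\mathcal{L}^u$: acting by $\rho$ on the $\mathcal{V}(\bar{\mathfrak{b}})$-factor fixes the summands with $g_a\in\mathfrak{b}$, which recover $\sum_{i\in S}q^i\otimes\bar q_i$, and converts the $\mathfrak{n}$-summands into $-\big(\sum_{g_a\in\mathfrak{n}}(f|g_a)g^a\big)\otimes 1=-f\otimes 1$, the last equality following from the reproducing property of the dual bases together with $f\in\mathfrak{g}_{<0}$. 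Since both the pairing $(\cdot|\cdot)$ and the Lie bracket \eqref{new Lie bracket} are assembled from the multiplication of $\mathcal{V}(\bar{\mathfrak{b}})$, on which $\rho$ is multiplicative, $\rho$ commutes with the commutator $[\mathcal{L}^{\mathrm{aff}},\cdot\,]$ and with the pairing.

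Finally I would combine the two steps. Writing $\{\smallint w_1,\smallint w_2\}=\smallint\rho\big(\{w_1{}_\chi w_2\}_{\mathrm{aff}}|_{\chi=0}\big)$ and invoking the affine formula at the level of densities (so modulo $D\mathcal{V}(\bar{\mathfrak{g}})$), I would push $\rho$ through the pairing using the previous paragraph, replacing $\mathcal{L}^{\mathrm{aff}}$ by $\mathcal{L}^u$ and $\frac{\delta w_i}{\delta g}$ by $\frac{\delta w_i}{\delta q}$, to arrive at $-\int\big(\frac{\delta w_1}{\delta q}\,|\,[\mathcal{L}^u,\frac{\delta w_2}{\delta q}]\big)$.

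The hard part will be the sign bookkeeping in the affine step: the parity-reversal map $g\mapsto\bar g$, the odd $\chi$-bracket, the twisted action $D(g\otimes X)=(-1)^{p(g)}g\otimes DX$ and the sign in \eqref{new Lie bracket} all contribute, and one must verify that the prefactors $(-1)^{p(a)\tilde{\jmath}+\tilde{\imath}\tilde{\jmath}}$ appearing in \eqref{eq:Lie str} conspire with the invariance and reproducing identities to leave exactly the single overall sign $-1$ in front of the pairing. A secondary subtlety is confirming that $\rho$ genuinely commutes with $[\mathcal{L}^{\mathrm{aff}},\cdot\,]$ even though $\mathcal{L}^{\mathrm{aff}}$ carries the $\bar{\mathfrak{n}}$-generators that $\rho$ does not fix; this is precisely where the multiplicativity of $\rho$ and its commutation with $D$ are indispensable.
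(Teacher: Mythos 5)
The paper does not prove this theorem: as with Propositions \ref{prop:LCA to PVA}, \ref{Prop:canonical form} and \ref{prop:another def of W}, its ``proof'' is the citation to \cite{CS2020}, so there is no internal argument to compare against, and I judge your proposal on its own merits. Your two-step plan --- an affine identity on $\mathcal{V}(\bar{\mathfrak{g}})$ followed by reduction through $\rho$ --- is sound, and every structural claim it rests on is correct: $\rho(\mathcal{L}^{\mathrm{aff}})=\mathcal{L}^u$ is exactly what $\rho$ was designed to achieve (the $\mathfrak{n}$-summands collapse to $-f\otimes 1$ because $(f|g_a)\neq 0$ forces $g_a\in\mathfrak{g}(\tfrac12)\subset\mathfrak{n}$, and the reproducing identity then gives $\sum_a (f|g_a)g^a=f$); for $w\in\mathcal{W}(\bar{\mathfrak{g}},f)\subset\mathcal{V}(\bar{\mathfrak{b}})$ one has $\frac{\delta w}{\delta\bar g_a}=0$ whenever $g_a\in\mathfrak{n}$, so $\frac{\delta w}{\delta g}=\frac{\delta w}{\delta q}$ and this is fixed by $\rho$; $\rho$ commutes with $D$ (the images $-D^m(f|g)$ vanish for $m\geq1$ since $(f|g)$ is a constant), so total derivatives stay total derivatives and the affine identity, which holds only modulo $D\mathcal{V}(\bar{\mathfrak{g}})$, survives the push-forward to functionals; and multiplicativity plus parity-preservation of $\rho$ let it pass through the bracket \eqref{new Lie bracket} and the pairing even though $\mathcal{L}^{\mathrm{aff}}$ carries the $\bar{\mathfrak{n}}$-generators.

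Two caveats keep this from being a finished proof rather than a correct strategy. First, the affine identity --- which, as you say yourself, carries essentially all the content --- is only sketched. The mechanism you name is the right one: with the natural pairing $(g\otimes X|h\otimes Y)=(-1)^{p(X)p(h)}(g|h)XY$, the $[D,\cdot]$-contribution paired against $\frac{\delta a}{\delta g}$ matches the central ($\chi$-) part of the affine bracket in \eqref{eq:Lie str} after one integration by parts (both sides carry the prefactor $-(-1)^{(p(a)+1)(\tilde\imath+1)}$ once $(g_i|g_j)\neq0$ forces $\tilde\imath=\tilde\jmath$), and the remaining contribution matches the structure-constant part via invariance of the form together with $\sum_a\bar g_a\,(g^a|x)=\bar x$; but this verification is work you have deferred, not done. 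Second, the pairing on $\mathfrak{g}\otimes\mathcal{V}(\bar{\mathfrak{b}})$ is nowhere defined in this paper for general $\mathfrak{g}$ --- only the $\mathfrak{gl}(m|n)$ version \eqref{eq:bilinear} appears, in Section 3 --- so a self-contained write-up must first fix that convention; until then the claim that all signs conspire to ``exactly one overall $-1$'' is not yet a well-posed statement.
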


\section{SUSY quadratic Gelfand-Dickey brackets and SUSY W-algebras} \label{sec:GD bracket}

\subsection{SUSY pseudo-differential operators}
\label{subsec:operator} Let $N$ be an integer greater than $2$.
Let $\mathcal{W}_N:=\mathbb{C}[u_i^{(n)}|i=1,\cdots, N, \, n\in \mathbb{Z}_+]$ be the differential superalgebra where the generators have parity $p(u_i)\equiv i$ (mod $2$) and 
the odd derivation $D$ is defined on the generators by $D(u_i^{(n)})=u_i^{(n+1)}$. The algebra of pseudo-differential operators on $\mathcal{W}_N$ is the superspace of Laurent series 
  $\mathcal{W}_N(\!(D^{-1})\!):= \mathcal{W}_N \otimes \mathbb{C}(\!( D^{-1})\!)$ endowed with the product 
\begin{equation}
    \begin{aligned}
         & D\cdot aD^m:= a'D^m +(-1)^{p(a)}a D^{m+1},\\
         & D^{-1}\cdot a D^m = \sum_{k\in \mathbb{Z}_+}(-1)^{\frac{k(k-1)}{2}+(k-1)p(a)}a^{(k)}D^{m-k-1},
    \end{aligned}
\end{equation}
where $a\in \mathcal{W}_N$, $m\in \mathbb{Z}$ and $a':= D(a), \, a^{(n)} := D^n(a)$. 
In general, for $n\in \mathbb{Z}$, we have
\begin{equation}
    \begin{aligned}
        & D^{2n}\cdot a D^m=\sum_{j\in\mathbb{Z}_+}{ \binom{n}{j} } a^{(2j)}D^{m+2n-2j},\\
        & D^{2n+1}\cdot a D^m=\sum_{j\in\mathbb{Z}_+}{ \binom{n}{j} } \Big( (-1)^{p(a)}a^{(2j)}D^{m+1+2n-2j}+a^{(2j+1)}D^{m+2n-2j} \Big).\\
    \end{aligned}
\end{equation}

Elements in the subalgebra $\mathcal{W}_N[D]:= \mathcal{W}_N \otimes \mathbb{C}[D]$  are called  super-differential operators.
 For a pseudo-differential operator $A(D)=\sum_{m\in \mathbb{Z}}a_m D^m$, its adjoint $A^*(D)\in\mathcal{W}_N(\!(D^{-1})\!)$ is defined by 
\begin{equation}
    \Big(\sum_{m\in \mathbb{Z}_+} aD^m\Big)^*:=\sum_{m\in \mathbb{Z}_+} (-1)^{mp(a)+\frac{m(m+1)}{2}}D^m a.
\end{equation}
Thus  
$ D^*=-D$,  $(D^2)^*=-D^2$, $ (D^{-1})^*=D^{-1}$ and $a^*=a$
for any $a\in \mathcal{W}_N$. We have the following lemma.
\begin{lem} Let $A$ and $B$ be pseudo-differential operators on $\mathcal{W}_N$. Then
\begin{equation} \label{eq:adjoing comm}
(AB)^*=(-1)^{p(A)p(B)}B^*A^*.
\end{equation}
\end{lem}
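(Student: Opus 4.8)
The plan is to show that the adjoint is an anti-automorphism up to the Koszul sign $(-1)^{p(A)p(B)}$, by verifying the claim on a generating set and extending it multiplicatively. Throughout, take $A,B$ homogeneous and write $P(A)$ for the statement that $(AB)^*=(-1)^{p(A)p(B)}B^*A^*$ holds for every homogeneous pseudo-differential operator $B$. Since, for fixed $B$, the coefficient of each power $D^k$ in $(AB)^*$ and in $B^*A^*$ involves only finitely many terms of $A$ (as $A$ has order bounded above), both sides are linear and continuous in $A$, so it suffices to prove $P(A)$ when $A=aD^m$ is a monomial with $a\in\mathcal{W}_N$ homogeneous; by symmetry the same reduction applies to $B$ inside each $P(A)$.

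Next I would prove multiplicativity of the class of operators satisfying $P$: if $P(X)$ and $P(Y)$ hold, then so does $P(XY)$. Applying $P(X)$ with test operator $YB$ and then $P(Y)$ with test operator $B$, and using $p(YB)=p(Y)+p(B)$ together with the special case $(XY)^*=(-1)^{p(X)p(Y)}Y^*X^*$ of $P(X)$ at $B=Y$, one reduces the claim to the identity of exponents $p(X)(p(Y)+p(B))+p(Y)p(B)\equiv (p(X)+p(Y))p(B)+p(X)p(Y)\pmod 2$, which holds since both equal $p(X)p(Y)+p(X)p(B)+p(Y)p(B)$. Because every monomial $aD^m$ is a finite product of elements $a\in\mathcal{W}_N$ and copies of $D^{\pm 1}$, it now suffices to establish $P$ on these generators.

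For $P(a)$ with $a\in\mathcal{W}_N$, reduce $B$ to a monomial $bD^n$, compute $(ab\,D^n)^*$ from the definition of the adjoint, and compare with $(-1)^{p(a)p(B)}B^*a$; the two signs agree after using supercommutativity $ba=(-1)^{p(a)p(b)}ab$ in $\mathcal{W}_N$. For $P(D)$, namely $(DB)^*=-(-1)^{p(B)}B^*D=(-1)^{p(B)}B^*D^*$, take $B=bD^n$ and expand $DB=b'D^n+(-1)^{p(b)}bD^{n+1}$ via the product rule $Db=b'+(-1)^{p(b)}bD$; taking adjoints term by term and factoring out the common sign $(-1)^{np(b)+n(n+1)/2}$ reduces the identity to $D^n b'-D^{n+1}b=-(-1)^{p(b)}D^n b D$, which follows instantly from $D^{n+1}b=D^n(Db)=D^n b'+(-1)^{p(b)}D^n b D$. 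Finally, $P(D^{-1})$ is deduced from $P(D)$ with no further computation: applying $P(D)$ to $C=D^{-1}B$ and using $D\,D^{-1}=1$ gives $B^*=(-1)^{p(B)}(D^{-1}B)^*D$, whence $(D^{-1}B)^*=(-1)^{p(B)}B^*D^{-1}=(-1)^{p(B)}B^*(D^{-1})^*$.

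I expect the main obstacle to be the sign bookkeeping, concentrated in the two unavoidable places: the multiplicativity step $P(X)\wedge P(Y)\Rightarrow P(XY)$, where the Koszul signs from reordering must match exactly, and the base case $P(D)$. The device that keeps $P(D)$ clean is to refuse to expand $D^n$ acting on a function into normal form, using instead the one-step identity $D^{n+1}b=D^n(Db)$. The only non-algebraic subtlety is the reduction to monomials: one must check that truncating $A$ and $B$ to finitely many terms leaves the coefficient of a fixed power $D^k$ unchanged, which holds precisely because both operators have order bounded above.
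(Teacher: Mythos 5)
Your proof is correct and follows essentially the same route as the paper: verify the identity on the generators ($a\in\mathcal{W}_N$, $D$, and $D^{-1}$, with the same key computation for $D\cdot bD^n$ and the same trick of recovering $(D^{-1}B)^*$ by multiplying by $D$) and then extend multiplicatively. The only difference is presentational: you make explicit the multiplicativity step $P(X)\wedge P(Y)\Rightarrow P(XY)$ and the coefficient-wise reduction to monomials, which the paper compresses into the word ``inductively.''
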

\begin{proof}
First, one can easily check \eqref{eq:adjoing comm} is true when $A=D^m$ and $B=D^n$ for any $m,n\in \mathbb{Z}$ and when $A\in \mathcal{W}_N.$ Now suppose that $A=D$ and $B=bD^n$ for some $b\in \mathcal{W}_N$ and $n\in \mathbb{Z}.$ Then 
    \begin{equation}
        \begin{aligned}
           &  (AB)^*= (D\cdot bD^n)^* = (b'D^n+(-1)^{p(b)}b D^{n+1})^* \\
            & = (-1)^{n(p(b)+1)+\frac{n(n+1)}{2}}D^nb'+(-1)^{np(b)+\frac{(n+1)(n+2)}{2}}D^{n+1}b \\
            & = (-1)^{p(b)+n}\big( (-1)^{np(b)+ \frac{n(n+1)}{2}+1}D^n bD\big)=(-1)^{p(A)p(B)}B^*A^*.
        \end{aligned}
    \end{equation} 
    Here, we used $bD=(-1)^{p(b)+1}b'+(-1)^{p(b)}Db.$ Inductively, we can see that \eqref{eq:adjoing comm} holds when $A\in \mathcal{W}_N[D].$ Moreover, we have
    \begin{equation}
        (D^{-1}B)^*D=(D^{-1}B)^*(-D)^*=(-1)^{p(B)+1}(-DD^{-1}B)^*=(-1)^{p(B)}B^*
    \end{equation}
    and hence $(D^{-1}B)^*= (-1)^{p(B)}B^*D^{-1}=(-1)^{p(B)}B^*(D^{-1})^*$. This implies \eqref{eq:adjoing comm} holds for any pseudo-differential operator $A$. 
\end{proof}

For a pseudo-differential operator $A\in \mathcal{W}_N(\!(D^{-1})\!)$, the differential and integral part of $A(D)$ are denoted by 
$A(D)_+:= \sum_{m\in \mathbb{Z_+}}a_m D^m$ and $A(D)_-:= \sum_{m\in \mathbb{Z}_+}a_{-m-1}D^{-m-1}$
respectively, and the residue $ \text{ Res } (A(D))$ of $A(D)$ is the $(-1)$-th coefficient $a_{-1}.$ The properties in the following lemma are useful throughout this paper. 

\begin{lem} \label{lem:res}
Let $A$ and $B$ be two pseudo-differential operators on $\mathcal{W}_N$ and $a\in \mathcal{W}_N$. Then the following properties hold.
\begin{enumerate}[(1)]
    \item $\text{Res }(A)= \text{Res }(A^*),$
    \item $\text{Res }(Aa)=(-1)^{p(a)}\text{Res }(A)a,$
    \item $\smallint \text{Res }(AB)=(-1)^{p(A)p(B)}\smallint \text{Res }(BA).$
\end{enumerate}
\end{lem}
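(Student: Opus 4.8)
The plan is to prove the three residue identities in order, using the adjoint-reversal law \eqref{eq:adjoing comm} together with the explicit product formulas for $D^m \cdot a D^n$ that precede the statement. Throughout I would reduce each claim to the case of monomials $A = a D^m$, $B = b D^n$ by linearity, since all the operations involved (residue, adjoint, multiplication) are $\mathbb{C}$-linear and everything is a (topological) sum of such monomials.

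For part (1), $\mathrm{Res}(A) = \mathrm{Res}(A^*)$, the key observation is that only the $D^{-1}$-term of $A$ can contribute to either side. Writing $A = \sum_m a_m D^m$, the adjoint is $A^* = \sum_m (-1)^{m p(a_m) + m(m+1)/2} D^m a_m$, so I must extract the coefficient of $D^{-1}$ in $D^{-1} a_{-1}$. Using the product rule $D^{-1} \cdot c\, D^0 = \sum_{k \geq 0} (-1)^{k(k-1)/2 + (k-1)p(c)} c^{(k)} D^{-k-1}$, the $D^{-1}$-coefficient comes from $k=0$, which gives exactly $a_{-1}$ (the sign $(-1)^{(-1)p(c)}$ at $k=0$ and the prefactor $(-1)^{(-1)\cdot p(a_{-1}) + (-1)\cdot 0/2}$ must be checked to cancel). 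So the main task here is a careful sign bookkeeping showing that the net sign at the surviving term is $+1$; I expect this to go through cleanly since $m=-1$ makes $m(m+1)/2 = 0$.

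Part (2), $\mathrm{Res}(Aa) = (-1)^{p(a)} \mathrm{Res}(A)\,a$, follows by moving the constant $a$ past the powers of $D$ using the product formulas and again isolating the $D^{-1}$-coefficient; only the $D^{-1}$-term of $A$ times the zeroth-order term of the expansion of $D^{-1} a$ survives, producing the sign $(-1)^{p(a)}$ from commuting $a$ past $D^{-1}$. This is the most direct of the three. Part (3), $\smallint \mathrm{Res}(AB) = (-1)^{p(A)p(B)} \smallint \mathrm{Res}(BA)$, is the heart of the lemma and the step I expect to be the main obstacle, since it is an identity modulo $D(\mathcal{W}_N)$ rather than on the nose. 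My plan is to combine parts (1) and (2): I would write $\smallint \mathrm{Res}(AB) = \smallint \mathrm{Res}((AB)^*)$ by part (1), then apply \eqref{eq:adjoing comm} to get $(AB)^* = (-1)^{p(A)p(B)} B^* A^*$, so $\smallint \mathrm{Res}(AB) = (-1)^{p(A)p(B)} \smallint \mathrm{Res}(B^* A^*)$. It then remains to show $\smallint \mathrm{Res}(B^* A^*) = \smallint \mathrm{Res}(AB)$ — equivalently, after renaming, that $\smallint \mathrm{Res}(AB) \equiv \smallint \mathrm{Res}(BA)$ up to the adjoint and the parity sign. The cleanest route is to prove directly that $\mathrm{Res}(AB) - (-1)^{p(A)p(B)}\mathrm{Res}(BA) \in D(\mathcal{W}_N)$ for monomials, using that $\mathrm{Res}([A,B])$ is a total $D$-derivative; this last fact is the standard super-analogue of the statement that the residue of a supercommutator is exact, and I would verify it on monomials $a D^m$, $b D^n$ by expanding both products, extracting the $D^{-1}$-coefficients, and exhibiting their difference explicitly as $D$ applied to a residue-type expression. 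The delicate part throughout will be tracking the Koszul signs $(-1)^{p(a)p(b)}$ and the quadratic sign $(-1)^{m(m+1)/2}$ in the adjoint, so I would fix conventions once at the start and check the signs only at the single surviving coefficient in each expansion rather than carrying the full Laurent series.
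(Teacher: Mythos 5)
Your proposal, in the form you finally settle on, is correct and follows essentially the same route as the paper: for (1) and (2) one observes that only the $D^{-1}$-term of $A$ can contribute and tracks the single surviving sign, and for (3) one expands $AB$ and $BA$ on monomials $aD^m$, $bD^n$, extracts residues, and compares them modulo $D(\mathcal{W}_N)$; the paper does precisely this, recording the expansions as the binomial-coefficient formulas \eqref{eq:res_lem_1} and \eqref{eq:res_lem_2} and concluding with binomial identities together with integration by parts.

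However, the adjoint reduction with which you open the proof of (3) is circular and should be removed. Since $*$ is a parity-preserving involution, part (1) applied to $B^*A^*$ together with \eqref{eq:adjoing comm} gives $\mathrm{Res}(B^*A^*)=\mathrm{Res}\bigl((B^*A^*)^*\bigr)=(-1)^{p(A)p(B)}\mathrm{Res}(AB)$ \emph{identically}, not merely modulo $D(\mathcal{W}_N)$. Your chain $\smallint\mathrm{Res}(AB)=(-1)^{p(A)p(B)}\smallint\mathrm{Res}(B^*A^*)$ is therefore a tautology; the remaining step as you literally state it, $\smallint\mathrm{Res}(B^*A^*)=\smallint\mathrm{Res}(AB)$, is false in general (for $A=D^{-1}$, $B=u_1$ it would force $\smallint u_1=0$), and the intended correction $\smallint\mathrm{Res}(B^*A^*)=\smallint\mathrm{Res}(BA)$ is just statement (3) restated. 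No combination of (1), (2) and adjoints can yield (3): its entire content is that the residue of a supercommutator lies in $D(\mathcal{W}_N)$, which is exactly the monomial computation you sketch at the end. That computation, carried out with all the signs, \emph{is} the proof; make sure it is done in full rather than treated as a leftover verification.
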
 
\begin{proof}
    Let $A=\sum_{i\in \mathbb{Z}}a_i D^i$. Then $\text{Res }(A)=\text{ Res }(A^*)=a_{-1}$ and $\text{ Res }(Aa)=(-1)^{p(a)}\text{ Res }(A)a=(-1)^{p(a)}a_{-1}a.$ Hence we get (1) and (2). Let us show (3). For $m,n\in \mathbb{Z},$ one can observe that $\text{ Res }(aD^{2m}bD^{2n})=0=\text{ Res }(bD^{2n}aD^{2m})$. Also, we have
    \begin{equation} \label{eq:res_lem_1}
        \begin{aligned}
             & \text{ Res }(a D^{2m+1} bD^{2n})= (-1)^{p(b)} {\binom{m}{n+m+1} }a  b^{(2m+2n+2)}, \\
            & \text{ Res }(a D^{2m+1} bD^{2n-1})=  { \binom{m}{n+m}  }a  b^{(2m+2n+1)}, \\
        \end{aligned}
    \end{equation}
   and 
       \begin{equation}\label{eq:res_lem_2}
        \begin{aligned}
             &  \text{ Res } (bD^{2n}a D^{2m+1})=  {\binom{n}{n+m+1}  }b a^{(2n+2n+2)},  \\
            &  \text{ Res }( bD^{2n-1}a D^{2m+1})=  {\binom{n-1}{n+m}}b  a^{(2m+2n+1)}, \\
        \end{aligned}
    \end{equation}
    Considering the fact that  ${ \binom{m}{n+m+}  }=(-1)^{n+m} { \binom{n-1}{n+m} }$ and ${\binom{m}{n+m+1}  }=(-1)^{n+m+1} {  \binom{n}{n+m+1}}$ and comparing  \eqref{eq:res_lem_1} and \eqref{eq:res_lem_2} conclude the proof.
\end{proof}

\subsection{SUSY quadratic Gelfand-Dickey bracket} \label{sec:G-D}

Consider the monic super-differential operator  
\begin{equation}
L_N(D)=D^N+u_1D^{N-1}+...+u_N. 
\end{equation}
Following \cite{HN91} and \cite{FFR92} we define the  \textit{SUSY quadratic Gelfand-Dickey bracket} on  $\smallint \mathcal{W}_N$ by 
\begin{equation}\label{SGD}
  \{\smallint a,\smallint b\} := (-1)^{p(a)+N} \text{ Res }\int  \, \Big((L \frac{\delta a}{\delta L})_+L-L( \frac{\delta a}{\delta L}L)_+ \Big) \, \frac{\delta b}{\delta L},
\end{equation}
where for all $\smallint a \in \smallint \mathcal{W}_N$,
the variational derivative of $ \smallint a$ with respect to $L$  is the pseudo-differential operator 
\begin{equation} \label{Xvder}
    \frac{\delta a}{\delta L} := (-1)^{p(a)}\sum_{k=1}^N D^{k-N-1}\sum_{l \in \mathbb{Z}_+}(-1)^{k+kl+\frac{l(l+1)}{2}}D^l(\frac{\partial a}{\partial u_{k}^{(l)}})\in \mathcal{W}_N(\!(D^{-1})\!).
\end{equation}
Hence the parity of $\frac{\delta a}{\delta L}$ is $p(a)+N+1$ (mod $2$).
We remark that, at this moment, it is not clear if the bracket \eqref{SGD} is a degree $1$ Lie superalgebra bracket or not. We will prove this by realizing it as a reduction of a affine SUSY PVA bracket. Let us consider the odd linear matrix differential operator $\mathcal{L}^{can}$ in  $\mathfrak{gl}(m|n)\otimes \mathcal{W}_N[D]$, where  $m=n+1,\, n$ and $N=m+n$: 
\begin{equation} \label{eq:L}
\begin{aligned}
   \mathcal{L}^{can} 
   & =\sum_{i=1}^N  e_{ii}\otimes D +\sum_{i=1}^N 
 e_{iN}\otimes (-1)^{N+1}u_{N+1-i}-\sum_{i=1}^{N-1}e_{i+1,i}\otimes 1. 
 \end{aligned}
 \end{equation}
Here we use the basis $\{e_{ij}|i,j=1,\cdots, N\}$ of $ \mathfrak{gl}(m|n)$ introduced in Example \ref{ex:gl(n+1|n)} and \ref{ex:gl(n|n)} so that $p(e_{ij})\equiv i+j$ (mod $2$). In the space of matrix operators $\mathfrak{gl}(m|n) \otimes \mathcal{W}_N[D],$ we consider the associative multiplication and Lie bracket defined by 
\begin{equation} \label{eq:bracket_2}
  \begin{aligned}
      & (e_{ij}\otimes a)\cdot (e_{kl}\otimes b)=(-1)^{(k+l)p(a)} \delta_{jk} e_{il}\otimes ab, \\
& [e_{ij}\otimes a, e_{kl}\otimes b]=(e_{ij}\otimes a)\cdot (e_{kl}\otimes b) - (-1)^{(i+j+p(a))(k+l+p(b))}(e_{kl}\otimes b)\cdot (e_{ij}\otimes a) .
  \end{aligned}  
\end{equation}

\vskip 5mm

We can visualize $\mathcal{L}^{can}$ in its matrix form
 \begin{equation} \label{eq:matrix_lax_2}
\mathcal{L}^{can}={ \begin{pmatrix} 
   D &0&\dots&\dots &0  &  (-1)^N u_N \\
   -1& D &\dots&\dots &0 &  (-1)^{N-1} u_{N-1} \\
  \vdots &\vdots& \ddots &  & \vdots &  \vdots \\
   \vdots &\vdots& &  \ddots &\vdots  &  \vdots \\
   0&0 & \dots&\dots & D & u_{2}  \\
   0&0 & \dots& \dots &-1  & D-u_{1}  \\
   \end{pmatrix}  =\begin{pmatrix} 
   D &0&\dots&\dots &0  &  s_N (b_{N}-Db_{N-1}) \\ 
    -1& D &\dots&\dots &0 &  s_{N-1} (b_{N-1}-Db_{N-2}) \\
   \vdots &\vdots& \ddots &  & \vdots &  \vdots \\
   \vdots &\vdots& &  \ddots &\vdots  &  \vdots \\
   0&0 & \dots&\dots &D &   s_2(b_2-Db_1)  \\
  0&0 & \dots& \dots &-1  & s_1 b_1  \\
   \end{pmatrix},}
\end{equation}
where $i=0,...,N$,  $s_i=(-1)^{\frac{i(i-1)}{2}+\frac{N(N+1)}{2}}$ and 
\begin{equation}
    b_i(D)=(D^{i-N}L^*(D))_+ \in \mathcal{W}_N[D].
\end{equation}
In particular, we have $b_0= (-1)^{\frac{N(N+1)}{2}}$ and hence $s_1 b_1= D-s_1(b_1-Db_0)$ and $b_N=L^*(D).$

We associate to any $a \in \mathcal{W}_N$ the $N\times N$ matrix pseudo-differential operator $\nabla_a$ whose $ij$-th entry for $i,j=1,2,\cdots, N$ is
\begin{equation} \label{eq:matrix differential}
    (\nabla_a)_{ij}=(-1)^{(j+1)p(a)+
    (j+1)(i+1)+
    \frac{j(j-1)}{2}+\frac{i(i-1)}{2}}b_{N-i}(\frac{\delta a}{\delta L})^* D^{j-1}.
\end{equation}
Hence the $4\times 4$ left-top corner of $\nabla_a$ is
\begin{equation}
    \begin{pmatrix}
         b_{N-1}(\frac{\delta a}{\delta L})^*&(-1)^{p(a)+1}b_{N-1}(\frac{\delta a}{\delta L})^* D&-b_{N-1}(\frac{\delta a}{\delta L})^*D^2&(-1)^{p(a)}b_{N-1}(\frac{\delta a}{\delta L})^*D^3\\
         -b_{N-2}(\frac{\delta a}{\delta L})^* & (-1)^{p(a)+1}b_{N-2}(\frac{\delta a}{\delta L})^* D&b_{N-2}(\frac{\delta a}{\delta L})^*D^2&(-1)^{p(a)}b_{N-2}(\frac{\delta a}{\delta L})^*D^3\\
          -b_{N-3}(\frac{\delta a}{\delta L})^* &(-1)^{p(a)}b_{N-3}(\frac{\delta a}{\delta L})^* D&b_{N-3}(\frac{\delta a}{\delta L})^*D^2&(-1)^{p(a)+1}b_{N-3}(\frac{\delta a}{\delta L})^*D^3\\
           b_{N-4}(\frac{\delta a}{\delta L})^* &(-1)^{p(a)}b_{N-4}(\frac{\delta a}{\delta L})^* D&-b_{N-4}(\frac{\delta a}{\delta L})^*D^2&(-1)^{p(a)+1}b_{N-4}(\frac{\delta a}{\delta L})^*D^3
    \end{pmatrix}.
\end{equation}

\begin{lem} \label{lem:Lnabla-nablaL}
    Let $a\in \mathcal{W}_N$. For $\mathcal{L}^{can}$ in \eqref{eq:L}, we have the following equalities in $\mathfrak{gl}(m|n)\otimes \mathcal{W}_N(\!(D^{-1})\!):$
\begin{equation} \label{eq:L nabla}
    \mathcal{L}^{can} \nabla_a = \sum_{i=1}^N e_{1i} \otimes (-1)^{\frac{i(i-1)}{2}+(p(a)+1)(i+1)}L^* (\frac{\delta a}{\delta L})^* 
\end{equation}
and
\begin{equation}\label{eq:nabla L}
    \nabla_a \mathcal{L}^{can}=
    \left\{\begin{array}{ll}
         \displaystyle\sum_{i=1}^N e_{iN} \otimes (-1)^{p(a)+1+\frac{i(i+1)}{2}}b_i(\frac{\delta a}{\delta L})^* L^*&  \text{ if } N \text{ is even , }\\ 
         
         \displaystyle\sum_{i=1}^N e_{iN} \otimes (-1)^{\frac{i(i-1)}{2}+1}b_i(\frac{\delta a}{\delta L})^* L^*&   \text{ if } N \text{ is odd    . }
    \end{array}\right.
\end{equation}
\end{lem}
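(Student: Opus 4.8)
The plan is to establish both identities by computing $\mathcal{L}^{can}\nabla_a$ and $\nabla_a\mathcal{L}^{can}$ entrywise in $\mathfrak{gl}(m|n)\otimes\mathcal{W}_N(\!(D^{-1})\!)$, using the explicit entries \eqref{eq:matrix differential} of $\nabla_a$, the three families of nonzero entries of $\mathcal{L}^{can}$ visible in \eqref{eq:matrix_lax_2} (the diagonal $e_{ii}\otimes D$, the subdiagonal $-e_{i+1,i}\otimes 1$, and the last column $e_{iN}\otimes s_{N+1-i}(b_{N+1-i}-Db_{N-i})$), and the associative product \eqref{eq:bracket_2}. The one structural fact that drives every cancellation is the recursion
\[
Db_{i-1}=b_i-g_i,\qquad g_i:=b_i-Db_{i-1}\in\mathcal{W}_N ,
\]
which follows at once from $b_i=(D^{i-N}L^*)_+$ together with the observation that $D$ sends the purely integral part of a pseudo-differential operator to the order-zero term determined by its residue; by construction $g_i$ is, up to the scalar $s_{N+1-i}$, precisely the function occupying the last column of $\mathcal{L}^{can}$. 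I will also use the two boundary identities $b_N=L^*$ and $b_0=(-1)^{N(N+1)/2}$, together with the adjoint rule \eqref{eq:adjoing comm}.

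For $\nabla_a\mathcal{L}^{can}$, fix a column $q<N$. Only $(\mathcal{L}^{can})_{qq}=D$ and $(\mathcal{L}^{can})_{q+1,q}=-1$ contribute to the $(p,q)$-entry, so it is a combination of $(\nabla_a)_{pq}\cdot D$ and $(\nabla_a)_{p,q+1}$. Since every entry of $\nabla_a$ in row $p$ has the common left factor $b_{N-p}(\frac{\delta a}{\delta L})^*$ and differs between columns only through a $D$-power and a sign depending on $q$ via $(-1)^{(q+1)p(a)+(q+1)(p+1)+\frac{q(q-1)}{2}}$, one checks that $(\nabla_a)_{pq}\cdot D$ and $(\nabla_a)_{p,q+1}$ coincide up to a sign; feeding this into \eqref{eq:bracket_2}, whose extra factor $(-1)^{(k+l)p(\cdot)}$ is again $q$-dependent, the two contributions cancel and the whole column $q<N$ vanishes. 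What remains is the last column $\sum_{k}(\nabla_a)_{pk}\,(\mathcal{L}^{can})_{kN}$; substituting the entries of the last column of $\mathcal{L}^{can}$ and repeatedly applying the recursion $Db_{k-1}=b_k-g_k$ turns this into a telescoping sum that collapses to the last-column entry recorded in \eqref{eq:nabla L}, of the shape $(\text{sign})\,b_{\ast}(\frac{\delta a}{\delta L})^*L^*$.

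The computation of $\mathcal{L}^{can}\nabla_a$ is dual. Now the $(p,q)$-entry receives contributions from $k=p$ (the diagonal $D$), $k=p-1$ (the subdiagonal, only when $p\geq 2$) and $k=N$ (the last column, paired with $(\nabla_a)_{Nq}$). By associativity of operator multiplication the diagonal term produces $(Db_{N-p})(\frac{\delta a}{\delta L})^*D^{q-1}=(b_{N-p+1}-g_{N-p+1})(\frac{\delta a}{\delta L})^*D^{q-1}$, the subdiagonal term supplies the $b_{N-p+1}$-piece with the opposite sign, and the last-column term contributes the $g_{N-p+1}$-piece paired against $b_0$. For every $p\geq 2$ the recursion forces these three to cancel, so all rows but the first vanish. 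For $p=1$ there is no subdiagonal term, and the diagonal and last-column contributions combine through $Db_{N-1}=b_N-g_N=L^*-g_N$ so that the function pieces cancel, leaving the first-row entry $(\text{sign})\,L^*(\frac{\delta a}{\delta L})^*D^{q-1}$; matching the prefactor with \eqref{eq:matrix differential} produces the sign in \eqref{eq:L nabla}.

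The genuine difficulty is the sign bookkeeping rather than the algebra. Three independent sources of signs must be reconciled: the quadratic exponents $\frac{i(i-1)}{2}$ and $(j+1)(i+1)$ built into $\nabla_a$, the index-dependent factor $(-1)^{(k+l)p(\cdot)}$ coming from the super-commutative product \eqref{eq:bracket_2}, and the signs generated when commuting $D$ past functions and when applying the adjoint via \eqref{eq:adjoing comm} and the residue identities of Lemma \ref{lem:res}. Because $p(\frac{\delta a}{\delta L})=p(a)+N+1$, these combine into a genuine dependence on the parity of $N$, which is exactly what forces the case split in \eqref{eq:nabla L}, the two stated signs differing by $(-1)^{p(a)+i}$. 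Keeping the telescoping cancellations sign-consistent across that even/odd split, and pinning down the leftover $D$-power in each surviving entry, is the delicate step.
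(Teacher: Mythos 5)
Your proposal is correct and takes essentially the same approach as the paper: the paper's entire proof of this lemma is the single sentence ``It directly follows from computations,'' and your entrywise computation---driven by the recursion $D\cdot b_{i-1}=b_i-(b_i-Db_{i-1})$ with $b_i-Db_{i-1}\in\mathcal{W}_N$, the three-term cancellations in each row/column, and the telescoping in the last column---is precisely that computation, laid out in more structural detail than the paper itself gives. The sign bookkeeping you defer is exactly what the paper also leaves to the reader, and the cancellation mechanism you identify (e.g.\ $(\nabla_a)_{pq}\cdot D$ cancelling against $(\nabla_a)_{p,q+1}$ under the product rule \eqref{eq:bracket_2}) does check out.
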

\begin{proof}
     It directly follows from computations.
\end{proof}

Simply speaking, \eqref{eq:L nabla} implies that the only nonzero row of $\mathcal{L}^{can} \nabla_a$ is the first row and the first $4$ entries in that row are  
\begin{equation}
 \begin{pmatrix} 
   L^*(\frac{\delta a}{\delta L})^*&(-1)^{p(a)}L^*(\frac{\delta a}{\delta L})^*D&-L^*(\frac{\delta a}{\delta L})^*D^2& (-1)^{p(a)+1}L^*(\frac{\delta a}{\delta L})^*D^3 
   \end{pmatrix}.
\end{equation}
In a similar way, \eqref{eq:nabla L} says that the only nontrivial column of $\nabla_a \mathcal{L}^{can}$ is the last column and its first 4 entries are as follows:
\begin{equation}
    (-1)^{p(a)}\begin{pmatrix} 
     b_1 (\frac{\delta a}{\delta L})^*L^* \\
     b_2 (\frac{\delta a}{\delta L})^*L^* \\
     -b_3 (\frac{\delta a}{\delta L})^*L^*  \\
    -b_4 (\frac{\delta a}{\delta L})^*L^*  \\
   \end{pmatrix} \text{ when $N$ is even}, \quad 
   \begin{pmatrix}
   -b_1 (\frac{\delta a}{\delta L})^*L^* \\
    b_2 (\frac{\delta a}{\delta L})^*L^* \\
     b_3 (\frac{\delta a}{\delta L})^*L^*  \\
   -b_4 (\frac{\delta a}{\delta L})^*L^*  \\
   \end{pmatrix}
   \text{ when $N$ is odd}.
\end{equation}

\begin{lem} \label{lem:2.2-main1}
    Let $a,b$ be two elements in $\mathcal{W}_N$. We have 
    \begin{equation}\label{eq:lem1_state}
        \int  \text{ Res } ((L \frac{\delta a}{\delta L})_+L \frac{\delta b}{\delta L} )=(-1)^{(p(a)+p(b))N} \int \sum_{i=1}^N \text{ Res }(( \frac{\delta a}{\delta L})^*D^{i-1}) \text{ Res }(b_{N-i}( \frac{\delta a}{\delta L})^*L^* ) \, .
    \end{equation}
\end{lem}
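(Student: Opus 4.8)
The plan is to peel the dependence on $\tfrac{\delta b}{\delta L}$ off the left-hand side, producing the factors $\mathrm{Res}\big((\tfrac{\delta b}{\delta L})^* D^{i-1}\big)$, and then to recognise what remains as $\mathrm{Res}\big(b_{N-i}(\tfrac{\delta a}{\delta L})^* L^*\big)$. Write $P=\tfrac{\delta a}{\delta L}$ and $Q=\tfrac{\delta b}{\delta L}$; by \eqref{Xvder} both are pseudo-differential operators whose orders lie between $-N$ and $-1$. (Here the first factor on the right must be read as $\mathrm{Res}\big((\tfrac{\delta b}{\delta L})^* D^{i-1}\big)$ rather than $\mathrm{Res}\big((\tfrac{\delta a}{\delta L})^* D^{i-1}\big)$, as homogeneity in $b$ requires.)

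First I would extract the coefficients in $Q$. Since $(LP)_+ L$ is a super-differential operator, expanding $(LP)_+L=\sum_{j\ge 0}r_j D^j$ and using left $\mathcal{W}_N$-linearity of $\mathrm{Res}$ gives $\mathrm{Res}\big((LP)_+ L Q\big)=\sum_{j\ge 0} r_j\,\mathrm{Res}(D^j Q)$. By Lemma \ref{lem:res}(1) together with $(D^jQ)^*=(-1)^{jp(Q)}Q^*(D^j)^*$ and $(D^j)^*=(-1)^{j(j+1)/2}D^j$, one has $\mathrm{Res}(D^jQ)=(\text{sign})\,\mathrm{Res}(Q^*D^j)$, and this vanishes unless $0\le j\le N-1$ because $Q$ has order in $[-N,-1]$. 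Reindexing $j=i-1$ and noting $r_{i-1}=\mathrm{Res}\big((LP)_+ L D^{-i}\big)$ yields
\begin{equation*}
\mathrm{Res}\big((LP)_+ L Q\big)=\sum_{i=1}^N \epsilon_i\,\mathrm{Res}\big((LP)_+ L D^{-i}\big)\,\mathrm{Res}\big(Q^* D^{i-1}\big),\qquad \epsilon_i\in\{\pm1\},
\end{equation*}
whose second factor is already $\mathrm{Res}\big((\tfrac{\delta b}{\delta L})^* D^{i-1}\big)$.

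It then remains to prove the pointwise identity $\mathrm{Res}\big((LP)_+ L D^{-i}\big)=(\text{sign})\,\mathrm{Res}\big(b_{N-i}P^* L^*\big)$ for $1\le i\le N$. Here I would pass to adjoints: using $(AB)^*=(-1)^{p(A)p(B)}B^*A^*$ \eqref{eq:adjoing comm}, the fact that $*$ preserves the order of each term and hence commutes with the projection $(\cdot)_+$, and Lemma \ref{lem:res}(1), the left-hand side equals, up to a sign, $\mathrm{Res}\big(D^{-i}L^*(P^*L^*)_+\big)$, while $b_{N-i}=(D^{-i}L^*)_+$ makes the right-hand side $\mathrm{Res}\big((D^{-i}L^*)_+P^*L^*\big)$. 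Setting $X=D^{-i}L^*$ and $Y=P^*L^*$ and discarding $X_+Y_+$ (purely differential, hence zero residue) and $X_-Y_-$ (order $\le -2$), the claim collapses to the residue swap
\begin{equation*}
\mathrm{Res}(X_- Y_+)=(\text{sign})\,\mathrm{Res}(X_+ Y_-).
\end{equation*}

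This swap is the crux and the main obstacle: it is \emph{false} for generic operators (e.g. $X=aD^{-1}$, $Y=bD$ give $\mathrm{Res}(X_-Y_+)=-aD(b)\ne 0=\mathrm{Res}(X_+Y_-)$), and holds here only because $X$ and $Y$ share the right factor $L^*$ — equivalently, applying Lemma \ref{lem:res}(1) once more, because $X^*$ and $Y^*$ share the left factor $L$. I would establish it by a direct coefficient computation: since $D^{-i}$ is a pure shift, $(LD^{-i})_-=\sum_{j=1}^{i}(\pm)\,u_{N-i+j}D^{-j}$ is finite and explicit, so expanding both residues in the generators $u_k$ and their $D$-derivatives reduces the identity to a binomial identity of exactly the type used in the proof of Lemma \ref{lem:res}; should the equality only hold modulo $D\mathcal{W}_N$, one finishes after applying $\int$ and using the cyclicity in Lemma \ref{lem:res}(3). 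Finally I would collect all the signs accumulated above, using $p(P)=p(a)+N+1$ and $p(Q)=p(b)+N+1$, and check that their product reduces to the claimed $(-1)^{(p(a)+p(b))N}$, completing the proof.
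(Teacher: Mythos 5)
Your proposal breaks down at exactly the point you flag as the crux, and the root cause is that you resolved the statement's typo in the wrong direction. The printed identity is indeed corrupted (both residue factors contain $\frac{\delta a}{\delta L}$), but the correct reading --- the one the paper's proof establishes and the one consumed later in the proof of Theorem \ref{thm:2.2} --- keeps the first factor $\mathrm{Res}\big((\frac{\delta a}{\delta L})^*D^{i-1}\big)$ and replaces the \emph{second} factor by $\mathrm{Res}\big(b_{N-i}(\frac{\delta b}{\delta L})^*L^*\big)$. Homogeneity alone cannot distinguish the two possible fixes, and yours is the other one; it is a genuinely different identity, and it is false. Granting the correct version, your version is equivalent to the symmetry $\int\mathrm{Res}\big((L\frac{\delta a}{\delta L})_+L\frac{\delta b}{\delta L}\big)=\int\mathrm{Res}\big((L\frac{\delta b}{\delta L})_+L\frac{\delta a}{\delta L}\big)$, i.e.\ precisely to the residue swap you isolate --- and the shared factor $L^*$ does \emph{not} rescue that swap. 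Concretely, take $N=3$, $a=u_3$, $b=u_2$, so that $\frac{\delta a}{\delta L}=D^{-1}$ and $\frac{\delta b}{\delta L}=D^{-2}$. Then the left-hand side is $\int\mathrm{Res}\big((LD^{-1})_+LD^{-2}\big)=\int\big(u_1u_2^{(1)}+u_2^2-u_1u_3\big)$, while in your right-hand side only $i=2$ survives and it collapses to $\pm\int\mathrm{Res}\big(b_1D^{-1}L^*\big)=\pm\int u_1u_3$. These disagree for either sign: the discrepancy contains the monomial $u_2^2$, whose variational derivative $\delta/\delta u_2$ is nonzero, so it does not lie in $D\mathcal{W}_3$. (On the same example the paper's reading checks out: $-\int\mathrm{Res}\big(b_2(\frac{\delta b}{\delta L})^*L^*\big)=\int\big(u_1^{(1)}u_2+u_2^2-u_1u_3\big)$, which equals the left-hand side after integration by parts.) Hence the pointwise identity $\mathrm{Res}\big((L\frac{\delta a}{\delta L})_+LD^{-i}\big)=\pm\,\mathrm{Res}\big(b_{N-i}(\frac{\delta a}{\delta L})^*L^*\big)$ that your plan requires is false even modulo total derivatives, and no coefficient computation can establish it.

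The viable route --- the paper's --- peels off the dependence on $a$, not on $b$. Since $\frac{\delta a}{\delta L}$ has order in $[-N,-1]$, one has the finite reconstruction identity $\sum_{i=1}^N(-1)^i(LD^{-i})_+\,\mathrm{Res}\big(D^{i-1}Z\big)=(-1)^{p(Z)}(LZ_-)_+$ for any pseudo-differential operator $Z$; apply it to $Z=\frac{\delta a}{\delta L}$, multiply the resulting expression for $(L\frac{\delta a}{\delta L})_+$ on the \emph{left} by $L\frac{\delta b}{\delta L}$, and take residues. Lemma \ref{lem:res}(2) extracts the scalar factors $\mathrm{Res}\big(D^{i-1}\frac{\delta a}{\delta L}\big)$, and Lemma \ref{lem:res}(1),(3) together with $(A_+)^*=(A^*)_+$ convert $\mathrm{Res}\big(L\frac{\delta b}{\delta L}(LD^{-i})_+\big)$ into $\pm\,\mathrm{Res}\big(b_{N-i}(\frac{\delta b}{\delta L})^*L^*\big)$ and $\mathrm{Res}\big(D^{i-1}\frac{\delta a}{\delta L}\big)$ into $\pm\,\mathrm{Res}\big((\frac{\delta a}{\delta L})^*D^{i-1}\big)$. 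In this arrangement every residue is of a product of whole operators, so $(\cdot)_+$ never has to be traded between the differential and integral parts of two different factors; only Lemma \ref{lem:res} is used and the problematic swap never arises. Your opening step (extracting the coefficients of $\frac{\delta b}{\delta L}$) is correct algebra, but it strands the projection on the factor $(L\frac{\delta a}{\delta L})_+LD^{-i}$, where it cannot be moved --- which is why the approach cannot be completed.
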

\begin{proof}
 For  $j\in \mathbb{Z}$ and $v\in \mathcal{W}_N$, the pseudo-differential operator $Z=D^{-j} v$ satisfies $\text{ Res } (D^{i-1}Z)=\delta_{ij}(-1)^{p(v)}v$.
Hence for any pseudo-differential operator $Z$, we have 
\begin{equation} \label{eq:lem1-1}
 \sum_{i=1}^N (-1)^{i}(LD^{-i})_+  \text{ Res } (D^{i-1} Z)=(-1)^{p(Z)} (LZ_-)_+.
\end{equation}
   In particular, if we substitute $Z=Z_-=\frac{\delta a}{\delta L}$ then
\begin{equation} \label{eq:lem1_Xf}
  \sum_{i=1}^N (-1)^{i}(LD^{-i})_+ \text{ Res } (D^{i-1} \frac{\delta a}{\delta L})=(-1)^{N+p(a)+1} (L \frac{\delta a}{\delta L})_+.
\end{equation}
After multiplying on the left both sides of \eqref{eq:lem1_Xf} by $L \frac{\delta b}{\delta L}$ and taking residues, we get 
$$ \sum_{i=1}^N \text{ Res }( L \frac{\delta b}{\delta L}(LD^{-i})_+) \text{ Res } (D^{i-1} \frac{\delta a}{\delta L})= \text{ Res } ( L \frac{\delta b}{\delta L} (L \frac{\delta a}{\delta L})_+)$$  
using (2) of Lemma \ref{lem:res}.
Finally, by 
(1) and (3) of Lemma \ref{lem:res}, we have 
$$ \int \sum_{i=1}^N \text{ Res }(( \frac{\delta a}{\delta L})^*D^{i-1}) \text{ Res }((D^{-i}L^*)_+(\frac{\delta b}{\delta L})^*L^* )=(-1)^{(p(a)+p(b))N}\int \text{ Res } ((L \frac{\delta a}{\delta L})_+L \frac{\delta b}{\delta L})$$
for any pseudo-differential operators $A$ and $B$,  which is the same equality as \eqref{eq:lem1_state}. 
Here we used $(D^{i-1})^*=(-1)^{\frac{(i-1)i}{2}}(D^{i-1})$, $(D^{-i})^*=(-1)^{\frac{(-i)(-i+1)}{2}}D^{-i}$ and $\frac{(i-1)i}{2}\equiv \frac{(-i)(-i+1)}{2}$ (mod 2).
 \end{proof}

\begin{lem}\label{lem:2.2-main2}
    Let $a,b$ be two elements in $\mathcal{W}_N$. We have 
    \begin{equation}
        \int  \text{ Res } (L(\frac{\delta a}{\delta L} L)_+ \frac{\delta b}{\delta L} )=(-1)^{p(b)+N+1} \int \sum_{i=1}^N (-1)^{i(p(a)+p(b))} \text{ Res }(b_{N-i}(\frac{\delta a}{\delta L})^*)  \text{ Res }(L^*(\frac{\delta b}{\delta L})^*D^{i-1}).
    \end{equation}
   
\end{lem}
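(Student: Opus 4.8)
The plan is to mirror the proof of Lemma \ref{lem:2.2-main1}, inserting one symmetrization step at the beginning to account for the fact that the inner factor here is $(\frac{\delta a}{\delta L}L)_+$ rather than $(L\frac{\delta a}{\delta L})_+$. Throughout I write $X=\frac{\delta a}{\delta L}$ and $Y=\frac{\delta b}{\delta L}$, which are purely integral operators of orders between $-1$ and $-N$, of parities $p(a)+N+1$ and $p(b)+N+1$ respectively, and I repeatedly use the elementary observation that the residue of a product of two differential operators, or of two integral operators (orders $\le -1$), vanishes.

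First I would use cyclicity, part (3) of Lemma \ref{lem:res}, to move the leftmost $L$ to the far right, rewriting $\int\mathrm{Res}(L(XL)_+Y)$ as $\pm\int\mathrm{Res}((XL)_+YL)$. Since $(XL)_+$ is differential, the summand $\mathrm{Res}((XL)_+(YL)_+)$ vanishes, so one may replace $YL$ by $(YL)_-$ and obtain $\int\mathrm{Res}(L(XL)_+Y)=\pm\int\mathrm{Res}((XL)_+(YL)_-)$. The purpose of this step is that now each factor carries exactly one copy of $L$, one attached to $a$ and one to $b$, which is precisely the distribution of $L$'s on the right-hand side of the claim (one $L$ inside $b_{N-i}$, one $L^*$ in the second residue). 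This is the structural difference with Lemma \ref{lem:2.2-main1}, where both copies of $L$ land in the same factor.

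Next I would expand this pairing by coefficient extraction, the analogue of \eqref{eq:lem1-1}. For a differential operator $P$ and an integral operator $R$ one has $\mathrm{Res}(PR)=\sum_{i\ge 1}\mathrm{Res}(PD^{-i})\,\mathrm{Res}(D^{i-1}R)$: writing $R=\sum_i D^{-i}r_i$, the factor $\mathrm{Res}(PD^{-i})$ reads off the coefficient of $D^{i-1}$ in $P$, while $\mathrm{Res}(D^{i-1}R)$ recovers $r_i$, and the two accompanying signs $(-1)^{p(r_i)}$ from part (2) of Lemma \ref{lem:res} cancel. Applying this with $P=(XL)_+$ and $R=(YL)_-$, and restoring the full operators $XL$ and $YL$ inside each residue (the discarded pieces being integral times integral, hence residue-free), gives $\int\mathrm{Res}(L(XL)_+Y)=\pm\int\sum_{i=1}^N\mathrm{Res}(XLD^{-i})\,\mathrm{Res}(D^{i-1}YL)$, where the sum truncates at $N$ since $(XL)_+$ has order at most $N-1$.

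Finally I would apply the adjoint, part (1) of Lemma \ref{lem:res}, to each factor separately. Using \eqref{eq:adjoing comm}, the identity $(D^m)^*=(-1)^{m(m+1)/2}D^m$, and once more the vanishing of residues of products of two integral operators, the first residue becomes $\pm\,\mathrm{Res}((D^{-i}L^*)_+X^*)=\pm\,\mathrm{Res}(b_{N-i}(\tfrac{\delta a}{\delta L})^*)$, recalling $b_{N-i}=(D^{-i}L^*)_+$, and the second becomes $\pm\,\mathrm{Res}(L^*Y^*D^{i-1})=\pm\,\mathrm{Res}(L^*(\tfrac{\delta b}{\delta L})^*D^{i-1})$; these are exactly the two factors in the statement. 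The structural steps above are forced, so I expect the main obstacle to be purely the bookkeeping of the $\mathbb{Z}/2$-signs: those coming from the parities of $X$, $Y$ and $L=p(L)\equiv N$, from the cyclicity and the $\mathrm{Res}(Aa)$ rule, and from the adjoints of powers of $D$, must be combined to verify that the total sign is precisely $(-1)^{p(b)+N+1}(-1)^{i(p(a)+p(b))}$.
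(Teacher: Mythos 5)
Your proposal is correct and follows essentially the same route as the paper's proof: the paper likewise uses cyclicity (Lemma \ref{lem:res}(3)) together with the vanishing of residues of differential--differential and integral--integral products to reach $(-1)^{N(p(a)+p(b)+1)}\int\mathrm{Res}\bigl(\frac{\delta a}{\delta L}\,(L(\frac{\delta b}{\delta L}L)_-)_+\bigr)$, then expands by the coefficient-extraction identity \eqref{eq:lem1-1} applied to $Z=\frac{\delta b}{\delta L}L$ (your scalar pairing $\mathrm{Res}(PR)=\sum_i\mathrm{Res}(PD^{-i})\,\mathrm{Res}(D^{i-1}R)$ is an equivalent repackaging of that operator identity), and finishes with adjoints via Lemma \ref{lem:res}(1), \eqref{eq:adjoing comm} and $(A_+)^*=(A^*)_+$. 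The sign bookkeeping you deferred does close: the cyclic moves contribute $(-1)^{N(p(a)+p(b)+1)}$, and combining this with the signs from $(D^{\pm m})^*=(-1)^{m(m\pm1)/2}D^{\pm m}$ and from \eqref{eq:adjoing comm} in the two adjoint steps yields precisely $(-1)^{p(b)+N+1+i(p(a)+p(b))}$.
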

\begin{proof}
For pseudo-differential operators $A$ and $B$, we know $\text{ Res }(A_+B_+)=\text{ Res }(A_-B_-)=0$. Hence 
\begin{equation}
    \text{ Res }(AB_+)=\text{ Res }(A_-B_+)=\text{ Res }(A_-B). 
\end{equation}
Also, recall that the differential part of $\frac{\delta a}{\delta L}$ is trivial. Hence 
  \begin{equation} \label{eq:lem2-2}
      \begin{split}
         \int  \text{ Res }( L(\frac{\delta a}{\delta L} L )_+ \frac{\delta b}{\delta L}) &= \int (-1)^{(p(a)+N+1)(p(b)+N+1)} \text{ Res } ( \frac{\delta b}{\delta L} L (\frac{\delta a}{\delta L} L)_+)  \\
          &= \int (-1)^{N(p(a)+p(b)+1)} \text{ Res }( \frac{\delta a}{\delta L} L  (\frac{\delta b}{\delta L} L)_{-}) \\
          & = \int (-1)^{N(p(a)+p(b)+1)} \text{ Res } (  \frac{\delta a}{\delta L} (L  (\frac{\delta b}{\delta L}L)_{-})_+).
      \end{split}
  \end{equation}
 Applying \eqref{eq:lem1-1}  in the preceding proof for $Z=\frac{\delta b}{\delta L} L$, we have 
 \begin{equation} \label{eq:lem2-3-1}
     (-1)^{p(b)+1}(L(\frac{\delta b}{\delta L} L)_-)_+= \sum_{i=1}^{N}(-1)^i (LD^{-i})_+ \text{ Res } (D^{i-1} \frac{\delta b}{\delta L} L).
 \end{equation}
 Multiplying on the left both sides of \eqref{eq:lem2-3-1} by  $\frac{\delta a}{\delta L}$, taking the residue and applying \eqref{eq:lem2-2}, one gets 
  \begin{equation*}
      \begin{split}
         \int  \text{ Res }( L(\frac{\delta a}{\delta L} L )_+ \frac{\delta b}{\delta L}) &= \sum_{i=1}^N (-1)^{N(p(a)+p(b)+1)+i+p(b)+1} \int \text{ Res } \big( \frac{\delta a}{\delta L} (LD^{-i})_+ \text{ Res } ( D^{i-1} \frac{\delta b}{\delta L} L)  \big ) 
         \\
         &= \sum_{i=1}^N (-1)^{N(p(a)+p(b)+1)} \int \text{ Res } (\frac{\delta a}{\delta L}  (LD^{-i})_+) \text{ Res } ( D^{i-1} \frac{\delta b}{\delta L} L).
         \end{split}
  \end{equation*}
  Here, for the last equality, we used (2) of Lemma \ref{lem:res}.
  Finally, the result follows from Lemma \ref{lem:res} (1) and the fact that $(A_+)^*=(A^*)_+$ for any pseudo-differential operator $A$. 
\end{proof}

\begin{thm}  \label{thm:2.2}
Recall that the integers $m,n$ are uniquely defined by $N=m+n$ and $n \leq m \leq n+1$. Let $(\quad | \quad ): \mathfrak{gl}(m|n) \otimes \mathcal{W}_N \times  \mathfrak{gl}(m|n) \otimes \mathcal{W}_N \to \mathcal{W}_N$ be the bilinear map defined by 
\begin{equation} \label{eq:bilinear}
    (e_{ij}\otimes a|e_{kl}\otimes b)= (-1)^{p(a)(k+l)}(-1)^{i+1}\delta_{il}\delta_{jk} \,  ab
\end{equation}
for $i,j,k,l\in \{1,2,\cdots,N=m+n\}$ and $a,b\in \mathcal{W}_N.$ 
Then the Gelfand-Dickey bracket can be written in terms of the matrix operators $\mathcal{L}^{can}$ and $\nabla_a$. More precisely, we have
    \begin{equation}
    \int (\text{ Res }(\nabla_a)| [\mathcal{L}^{can}, \text{ Res }(\nabla_b)]) =(-1)^{N+p(a)} \int \text{ Res } ((L \frac{\delta a}{\delta L})_+L \frac{\delta b}{\delta L}-L(\frac{\delta a}{\delta L} L)_+\frac{\delta b}{\delta L}),
\end{equation}
where the linear map $\text{ Res }: \mathfrak{gl}(m|n)\otimes \mathcal{W}_N(\!(D^{-1})\!)\to \mathcal{W}_N$ is given by $\text{ Res }(e\otimes A):=e\otimes \text{ Res }(A).$
\end{thm}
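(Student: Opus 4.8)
The plan is to reduce the matrix-operator expression on the left-hand side to the two scalar residue expressions handled separately in Lemmas \ref{lem:2.2-main1} and \ref{lem:2.2-main2}. The key structural input is that the commutator $[\mathcal{L}^{can}, \text{Res}(\nabla_b)]$ splits into the difference $\mathcal{L}^{can}\cdot\text{Res}(\nabla_b) - (\pm)\,\text{Res}(\nabla_b)\cdot\mathcal{L}^{can}$, and Lemma \ref{lem:Lnabla-nablaL} tells us exactly what each of these two products looks like: the product $\mathcal{L}^{can}\nabla_b$ is supported on its first row only, while $\nabla_b\mathcal{L}^{can}$ is supported on its last column only. (One must be slightly careful here, since Lemma \ref{lem:Lnabla-nablaL} is stated for $\mathcal{L}^{can}\nabla_b$ rather than $\mathcal{L}^{can}\text{Res}(\nabla_b)$; I would first argue that taking residues commutes appropriately with these matrix products entry by entry, so that the shapes predicted by \eqref{eq:L nabla} and \eqref{eq:nabla L} persist after applying $\text{Res}$.) The upshot is that the bilinear form $(\text{Res}(\nabla_a)\,|\,[\mathcal{L}^{can},\text{Res}(\nabla_b)])$ only pairs $\text{Res}(\nabla_a)$ against two thin matrices, so only a single row of $\nabla_a$ and a single column of $\nabla_a$ contribute.

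First I would compute the term coming from $\mathcal{L}^{can}\,\text{Res}(\nabla_b)$. Since this matrix lives in its first row, pairing it via \eqref{eq:bilinear} against $\text{Res}(\nabla_a)$ picks out the first column of $\text{Res}(\nabla_a)$, whose $i$-th entry by \eqref{eq:matrix differential} is a sign times $\text{Res}(b_{N-i}(\frac{\delta a}{\delta L})^*)$. The delta functions $\delta_{il}\delta_{jk}$ in the pairing then contract the $\chi^{i-1}$-type powers of $D$ appearing in the first row of $\mathcal{L}^{can}\nabla_b$ against these entries. After collecting the explicit sign prefactors $(-1)^{p(\cdot)(k+l)}(-1)^{i+1}$ together with the signs $s_i$, $t_{i,j}^{m,n}$ already embedded in \eqref{eq:matrix differential} and \eqref{eq:L nabla}, I expect the resulting scalar sum to match, up to the global sign $(-1)^{N+p(a)}$, exactly the right-hand side of Lemma \ref{lem:2.2-main2} (the $L(\frac{\delta a}{\delta L}L)_+\frac{\delta b}{\delta L}$ term). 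Symmetrically, the term coming from $\text{Res}(\nabla_b)\,\mathcal{L}^{can}$, supported in its last column, pairs against the last row of $\text{Res}(\nabla_a)$, i.e. the entries $\text{Res}((\frac{\delta a}{\delta L})^*D^{i-1})$, and should reproduce the $(L\frac{\delta a}{\delta L})_+L\frac{\delta b}{\delta L}$ term of Lemma \ref{lem:2.2-main1}.

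The main obstacle will be sign bookkeeping: reconciling the parity factors built into the Lie bracket \eqref{eq:bracket_2}, the bilinear form \eqref{eq:bilinear}, the entrywise signs of $\nabla_a$ in \eqref{eq:matrix differential}, and the two cases ($N$ even versus $N$ odd) distinguished in \eqref{eq:nabla L}. In particular the prefactors in Lemmas \ref{lem:2.2-main1} and \ref{lem:2.2-main2} already carry delicate factors like $(-1)^{(p(a)+p(b))N}$ and $(-1)^{i(p(a)+p(b))}$, and these must telescope against the pairing signs to leave only the uniform $(-1)^{N+p(a)}$. I would handle this by treating the two parities of $N$ as separate computations, verifying in each case that the per-index sign $(-1)^{i(\cdots)}$ from the pairing cancels the case-dependent sign in \eqref{eq:nabla L}, and using the integration-by-parts identity $\int\text{Res}(AB)=(-1)^{p(A)p(B)}\int\text{Res}(BA)$ from Lemma \ref{lem:res}(3) to move factors into the form demanded by Lemmas \ref{lem:2.2-main1} and \ref{lem:2.2-main2}. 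Once both contributions are identified, subtracting them and matching against \eqref{SGD} completes the proof.
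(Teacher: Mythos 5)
Your overall architecture matches the paper's: both reduce the pairing to the scalar identities of Lemmas \ref{lem:2.2-main1} and \ref{lem:2.2-main2} by splitting the commutator into its two products and invoking the thin-support shapes of Lemma \ref{lem:Lnabla-nablaL}. But there is a genuine gap at exactly the point you flag and then wave away: the claim that taking residues ``commutes appropriately with these matrix products'' is false, and the entire difficulty of the theorem lives there. Since $\mathcal{L}^{can}$ carries the operator $D$ on its diagonal, multiplication by $\mathcal{L}^{can}$ shifts coefficients: for a scalar pseudo-differential operator $A=\sum_k a_kD^k$ one has $\text{Res}(AD)=a_{-2}$, which is neither $\text{Res}(A)\,D$ (an order-one operator) nor anything built from $a_{-1}$ alone. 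Consequently $\mathcal{L}^{can}\,\text{Res}(\nabla_b)$ is \emph{not} supported on the first row and $\text{Res}(\nabla_b)\,\mathcal{L}^{can}$ is \emph{not} supported on the last column: the telescoping between adjacent columns of $\nabla_b$ that produces the thin shapes in \eqref{eq:L nabla} and \eqref{eq:nabla L} is destroyed once each entry is replaced by its residue, because an entry times $D$ can no longer cancel against the neighbouring scalar entry. Hence your two term-by-term identifications (first product with Lemma \ref{lem:2.2-main2}, second with Lemma \ref{lem:2.2-main1}) are each individually wrong.

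The missing ingredient is precisely the paper's equation \eqref{eqaux3}:
\begin{equation*}
        \text{Res}(\nabla_b \mathcal{L}^{can}) = - \text{Res}(\nabla_b )\mathcal{L}^{can} + \text{Res}(\nabla_b ) D + (\nabla_b)_{-2},\qquad
        \text{Res}(\mathcal{L}^{can} \nabla_b ) =  \mathcal{L}^{can} \text{Res}(\nabla_b) +(-1)^{p(b)} \text{Res}(\nabla_b ) D +(-1)^{p(b)} (\nabla_b)_{-2},
\end{equation*}
where $(\nabla_b)_{-2}$ denotes the coefficient of $D^{-2}$ in $\nabla_b$. These identities quantify the failure of $\text{Res}$ to commute with the products, and the crucial observation is that the correction terms $\text{Res}(\nabla_b)D+(\nabla_b)_{-2}$ enter the two identities with signs that cancel exactly when the products are recombined into $[\mathcal{L}^{can},\text{Res}(\nabla_b)]$; only after this cancellation may one substitute the computations \eqref{eqaux1}--\eqref{eqaux2} (equivalently Lemmas \ref{lem:2.2-main1} and \ref{lem:2.2-main2}), which are stated for $\text{Res}(\mathcal{L}^{can}\nabla_b)$ and $\text{Res}(\nabla_b\mathcal{L}^{can})$, not for the products of residues. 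So your final answer is right only because these error terms happen to cancel in the commutator --- a cancellation your argument never establishes and cannot see, since it denies that the error terms exist. To repair the proof you must state and verify \eqref{eqaux3} (an elementary computation using Lemma \ref{lem:res}(2)) and carry the corrections through the commutator before invoking the two scalar lemmas.
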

\begin{proof}
Recall \eqref{eq:matrix differential} and Lemma \ref{lem:Lnabla-nablaL}. By direct computations, we get  
\begin{equation} \label{eqaux1}
\begin{split}
    (\text{ Res } \nabla_a |\text{ Res }(\nabla_b\mathcal{L}^{can}))&= (-1)^{(p(a)+p(b))(N+1)+N+1} \sum_{i=1}^N \text{ Res }((\frac{\delta a}{\delta L})^*D^{i-1}) \text{ Res }(b_{N-i}(\frac{\delta a}{\delta L})^*L^* ), \\
    (\text{ Res } \nabla_a | \text{ Res }(\mathcal{L}^{can} \nabla_b))&=(-1)^{p(a)+p(b) 
    } \sum_{i=1}^N (-1)^{i(p(a)+p(b))}\text{ Res }(b_{N-i}(\frac{\delta a}{\delta L})^*) \text{ Res }(L^*(\frac{\delta b}{\delta L})^*D^{i-1}).
    \end{split}
\end{equation}
Hence by Lemma \ref{lem:2.2-main1} and \ref{lem:2.2-main2}, we have 
\begin{equation}\label{eqaux2}
        \begin{split}
            \int \text{ Res } ((L\frac{\delta a}{\delta L})_+L \frac{\delta b}{\delta L} )&= (-1)^{p(a)+p(b)+N+1}\int (\text{ Res } \nabla_a |\text{ Res }(\nabla_b\mathcal{L}^{can})), \\
            \int \text{ Res } (L(\frac{\delta a}{\delta L} L)_+\frac{\delta b}{\delta L} )&= (-1)^{p(a)+N+1}\int (\text{ Res } \nabla_a | \text{ Res }(\mathcal{L}^{can} \nabla_b)).
        \end{split}
    \end{equation}
Moreover, if we denote by $(\nabla_b)_{-2}$ the coefficient of $D^{-2}$ in $\nabla_b$, it is clear that 
\begin{equation} \label{eqaux3}
    \begin{split}
        \text{ Res }(\nabla_b \mathcal{L}^{can}) &= - \text{ Res }(\nabla_b )\mathcal{L}^{can} + \text{ Res }(\nabla_b ) D + (\nabla_b)_{-2},\\
        \text{ Res }(\mathcal{L}^{can} \nabla_b ) &=  \mathcal{L}^{can} \text{ Res }(\nabla_b) +(-1)^{p(b)} \text{ Res }(\nabla_b ) D +(-1)^{p(b)} (\nabla_b)_{-2}.
    \end{split}
\end{equation}
It follows from equations \eqref{eqaux2} and \eqref{eqaux3} that
\begin{equation}
    \begin{split}
       \int &(\text{ Res }(\nabla_a)| [\mathcal{L}^{can}, \text{ Res } (\nabla_b)]) =  \int (\text{ Res }(\nabla_a)| \mathcal{L}^{can}\text{ Res }(\nabla_b)+(-1)^{p(b)} \text{ Res }(\nabla_b )\mathcal{L}^{can})  \\
        & \hskip 10mm = \int(\text{ Res }(\nabla_a)| \text{ Res }(\mathcal{L}^{can}\nabla_b)+(-1)^{p(b)+1} \text{ Res }(\nabla_b \mathcal{L}^{can}))\\
        & \hskip 10mm =(-1)^{p(b)+1} \int(\text{ Res }(\nabla_a)|\text{ Res }(\nabla_b \mathcal{L}^{can})) + \int(\text{ Res }(\nabla_a)| \text{ Res }(\mathcal{L}^{can}\nabla_b)) \\
        & \hskip 10mm = (-1)^{N+p(a)} \int \text{ Res } ((L \frac{\delta a}{\delta L})_+L\frac{\delta b}{\delta L}-L( \frac{\delta a}{\delta L} L)_+\frac{\delta b}{\delta L}).
    \end{split}
\end{equation}
\end{proof}

\subsection{SUSY W-algebras and SUSY Gelfand-Dickey brackets} \label{subsec:W and GD}
In this section, we consider the classical SUSY W-algebras introduced in Examples \ref{ex:gl(n+1|n)} and \ref{ex:gl(n|n)}. 
The Lie superalgebra $\mathfrak{g}$ is $\mathfrak{gl}(m|n)$ for $m=n+1$ or $m=n$  and is spanned by the elements $e_{ij}$, $i,j\in \{1,2,\cdots, N=m+n\}$ whose parity is $i+j$ (mod $2$). The odd element $f= \sum_{i=1}^{N-1} e_{i+1,i}$ and the subspace $V=\bigoplus_{i=1}^N e_{iN}$ satisfy the decomposition 
\begin{equation} \label{eq:decomp}
\mathfrak{b}_+=[\mathfrak{n},f]\oplus V,
\end{equation}
where $\mathfrak{n}=\bigoplus_{i<j}e_{ij}$ and $\mathfrak{b}_+=\bigoplus_{i\leq j}e_{ij}$. Recall that for any choice of dual bases $q:=\{(q^i,q_i)|i\in S\}$ of $\mathfrak{b}_+$ and $\mathfrak{b}$ we associate the universal Lax operator 
$$ \mathcal{L}^{u}= D+\sum_{i \in S} q^{i} \otimes \bar{q_i} +f \otimes 1\, \,. \,  $$ 
  Proposition \ref{Prop:canonical form} states that there exists a unique operator  $\mathcal{L}^c$ gauge equivalent to $\mathcal{L}^{u}$ of the form 
\begin{equation} \label{eq:canonical lax}
    \mathcal{L}^c= D+ { \begin{pmatrix} 
   0 &0&\dots&\dots &0  &  w_N \\
   -1& 0 &\dots&\dots &0 &  w_{N-1} \\
  \vdots &\vdots& \ddots &  & \vdots &  \vdots \\
   \vdots &\vdots& &  \ddots &\vdots  &  \vdots \\
   0&0 & \dots&\dots & 0 & w_{2}  \\
   0&0 & \dots& \dots &-1  & w_{1}  \\
   \end{pmatrix} }
\end{equation} Moreover, the set $\{w_1 ,w_2, \cdots, w_N \}$ generates $\mathcal{W}(\bar{\mathfrak{g}},f)$. We construct a differential algebra isomorphism from $\mathcal{W}_N$ to $\mathcal{W}(\bar{\mathfrak{g}},f)$ by letting 
\begin{equation} \label{eq:u and w}
    \phi(u_i):= (-1)^{i}w_i \ , \  i=1,2,\cdots, N.
\end{equation}
Note that under this isomorpshim, we can identify $\mathcal{L}^c$ to the matrix $\mathcal{L}^{can}$ in \eqref{eq:matrix_lax_2}. We will denote them both by $\mathcal{L}^c$ and omit $\phi$.

\vskip 3mm

Recall that in Proposition \ref{prop:another def of W}, we used the fact that every element in $a\in \mathcal{V}(\bar{\mathfrak{b}})$ can be considered as a function from the odd space $\mathcal{F}:= (\mathfrak{b}_+\otimes \mathcal{V}(\bar{\mathfrak{b}}))_{\bar{1}}$  to $\mathcal{V}(\bar{\mathfrak{b}})$. Throughout this section, we let 
\[Q:= \sum_{i\in S }q^i\otimes \bar{q}_i.\]
In addition, for $a\in \mathcal{V}(\bar{\mathfrak{b}})$ and a pair $q:=\{(q^i,q_i)|i\in S\}$ of bases of $\mathfrak{b}_+$ and $\mathfrak{b}$, the variational derivative $\frac{\delta a}{\delta q}$ is defined by \eqref{eq:varational_matrix}. The variational derivative can be interpreted as a differential algebra analogue of the gradient in the following sense.

\begin{lem} \label{lem:var_meaning}
Let $a\in \mathcal{V}(\bar{\mathfrak{b}})$, $h\in \mathcal{F}$. For a constant even variable $\epsilon$, we have 
  \[ \int\Big( h \, |\, \frac{\delta{a}}{\delta{q}} \Big) = \int \frac{d}{d\epsilon}a(Q+\epsilon h) |_{\epsilon=0}.\]
\end{lem}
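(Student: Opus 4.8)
The plan is to prove the identity by reducing it to a statement about a single monomial generator of the differential algebra and then invoking linearity and the Leibniz rule. The key observation is that both sides are linear in $a$, so it suffices to establish the formula when $a$ is a monomial in the variables $\bar{q}_i^{(m)}$, and then the general case follows by additivity. First I would fix a monomial $a = \bar{q}_{i_1}^{(m_1)} \cdots \bar{q}_{i_r}^{(m_r)}$ and an element $h = \sum_{i \in S} b_i \otimes X_i \in \mathcal{F}$, recalling from \eqref{eq:Q_univ} that evaluating $a$ at $Q$ simply returns $a$ itself, and that by \eqref{eq:b- functional} we have $\bar{q}_i^{(m)}(Q + \epsilon h) = \bar{q}_i^{(m)} + \epsilon X_i^{(m)}$ since $(q^i | q_j) = \delta_{ij}$.

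Next I would compute $\frac{d}{d\epsilon} a(Q + \epsilon h)|_{\epsilon = 0}$ explicitly by the product rule. This is a sum over the $r$ factors of the monomial, where in the $k$-th term the factor $\bar{q}_{i_k}^{(m_k)}$ is replaced by $X_{i_k}^{(m_k)}$, with an appropriate Koszul sign coming from moving the derivation past the preceding odd factors. Comparing with the right-hand side, I would expand $\big(h \mid \frac{\delta a}{\delta q}\big)$ using the definition \eqref{eq:varational_matrix} of $\frac{\delta a}{\delta q} = \sum_i q_i \otimes \frac{\delta a}{\delta \bar{q}_i}$ together with the bilinear pairing; the pairing $(b_i \otimes X_i \mid q_i \otimes \frac{\delta a}{\delta \bar{q}_i})$ collapses the matrix part via $(q^j | q_i)$-type duality and leaves $X_i \cdot \frac{\delta a}{\delta \bar{q}_i}$, so that $\big(h \mid \frac{\delta a}{\delta q}\big) = \sum_{i \in S} \pm\, X_i\, \frac{\delta a}{\delta \bar{q}_i}$ for suitable signs.

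The heart of the matter is then the integration-by-parts step: after applying $\int$, the higher derivatives $X_i^{(m)}$ appearing in the $\epsilon$-derivative can be moved off of $X_i$ and onto the $\frac{\partial a}{\partial \bar{q}_i^{(m)}}$ factor, at the cost of the signs recorded in the definition \eqref{eq:var-deriv} of the variational derivative $\frac{\delta a}{\delta \bar{q}_i} = \sum_m (-1)^{m p(\bar{q}_i) + \frac{m(m+1)}{2}} D^m \frac{\partial a}{\partial \bar{q}_i^{(m)}}$. The `integration by parts' property $\int D(c) d = -(-1)^{p(c)} \int c\, D(d)$ noted after the definition of the projection $\int$ is exactly what converts the $X_i^{(m)}$ in the naive $\epsilon$-derivative into the $D^m$-shifted form defining $\frac{\delta a}{\delta \bar{q}_i}$.

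I expect the main obstacle to be the careful bookkeeping of signs, since we are working in a supersymmetric setting with an \emph{odd} derivation $D$: each integration by parts contributes a parity-dependent sign, the Koszul reordering of odd factors contributes another, and the sign $(-1)^{m p(\bar{q}_i) + \frac{m(m+1)}{2}}$ built into the variational derivative must match these precisely. The quadratic-in-$m$ sign $\frac{m(m+1)}{2}$ coming from iterating the odd $D$ is the subtle piece; I would handle it by verifying the single-variable, single-derivative case $a = \bar{q}_i^{(m)}$ first to pin down the sign convention, then reduce the monomial case to it via the Leibniz rule, confirming that all signs assemble correctly under $\int$.
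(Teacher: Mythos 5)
Your proposal is correct and follows essentially the same route as the paper's proof: expand $\frac{d}{d\epsilon}a(Q+\epsilon h)|_{\epsilon=0}$ as $\sum_{i,n} h_i^{(n)}\,\partial a/\partial \bar{q}_i^{(n)}$ (the paper does this directly by the chain rule rather than via monomials, but this is the same computation), then integrate by parts to shift $D^n$ onto the partial derivatives, and match the resulting sign $(-1)^{n(\tilde{\imath}+1)+\frac{n(n+1)}{2}}$ against the definitions of $\frac{\delta a}{\delta \bar{q}_i}$ and the pairing. Your sign bookkeeping plan (iterated odd integration by parts producing the $\frac{n(n+1)}{2}$-type sign, checked first on $a=\bar{q}_i^{(m)}$) is exactly what makes the two sides agree.
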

\begin{proof}
    Let us denote $h:=\sum_{i\in S} q^i \otimes h_i$. Since $h$ is an odd element, the parity $p(h_i)=1-\tilde{\imath}$ for $\tilde{\imath}:=p(q^i).$ By \eqref{eq:Q_univ}, we have $a(Q)=a$ and
    hence 
    \[\int \frac{d}{d\epsilon}a(Q+\epsilon h)|_{\epsilon=0}=  \sum_{i\in S,\, n\in \mathbb{Z}_+} \int h_i^{(n)}\frac{\partial a}{\partial \bar{q}_i^{(n)}}=  \sum_{i\in S, \, n\in \mathbb{Z}_+}  \int (-1)^{n(\tilde{\imath}+1)+\frac{n(n+1)}{2}}h_i \Big( \frac{\partial a}{\partial \bar{q}_i^{(n)}}\Big),\]
    for $\tilde{\imath}=p(q_i).$ On the other hand, we have 
    \[  \Big(h\, |\, \frac{\delta a}{\delta q}\Big)=(-1)^{n(\tilde{\imath}+1)+\frac{n(n+1)}{2}}h_i \bigg( \frac{\partial a}{\partial {\bar{q}}_i^{(n)}}\bigg). \]
    This proves the lemma.
\end{proof}

\begin{lem} Let $a$ be an element of $\mathcal{W}_N$ and $Q$, $h$ and $\epsilon $ be the same as in Lemma \ref{lem:var_meaning}. If we consider the function $u[h](\epsilon):=u(Q+\epsilon h)$ where $u=(u_i)_{i=1}^N$ is the sequence of elements in \eqref{eq:u and w}, then we have 
\[\int \Big(h| \frac{\delta a}{\delta q} \Big)=\int \sum_{i=1}^N u_i[h]'(0)\frac{\delta a}{\delta u_i}.\]
\end{lem}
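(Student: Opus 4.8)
The plan is to chain together the previous lemma with the chain rule for variational derivatives. The statement relates two gradients of the same function $a$: one computed with respect to the matrix variable $q$ (living in $\mathcal{V}(\bar{\mathfrak{b}})$), and one computed with respect to the scalar generators $u_i$ of $\mathcal{W}_N$. Since $a \in \mathcal{W}_N \subset \mathcal{V}(\bar{\mathfrak{b}})$ can be viewed both as a function of $Q$ and as a differential polynomial in the $u_i$, the link between the two is exactly the substitution $u_i = u_i(Q)$ recorded in \eqref{eq:u and w} (reading $w_i$ off the canonical form $\mathcal{L}^c$). The key point is that everything reduces to computing a directional derivative and then recognizing it as a variational pairing.

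First I would apply Lemma \ref{lem:var_meaning} directly, which gives
\[
\int \Big(h \, \Big| \, \frac{\delta a}{\delta q}\Big) = \int \frac{d}{d\epsilon} a(Q+\epsilon h)\Big|_{\epsilon = 0}.
\]
Next I would invoke the chain rule: since $a$ is a differential polynomial in the $u_i$, and each $u_i = u_i(Q)$ is a function on $\mathcal{F}$, the value $a(Q + \epsilon h)$ equals $a$ evaluated at the perturbed generators $u_i[h](\epsilon) = u_i(Q + \epsilon h)$. Differentiating at $\epsilon = 0$ and using that $D$ commutes with the substitution (because $D(u_i^{(j)}) = u_i^{(j+1)}$ holds in both realizations), I get
\[
\frac{d}{d\epsilon} a(Q + \epsilon h)\Big|_{\epsilon = 0} = \sum_{i=1}^N \sum_{n \in \mathbb{Z}_+} u_i[h]^{(n)}(0) \, \frac{\partial a}{\partial u_i^{(n)}},
\]
where $u_i[h]^{(n)}(0) = D^n\big(u_i[h]'(0)\big)$, with appropriate signs tracked carefully because $D$ is odd and $u_i[h]'(0)$ may be of either parity.

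I would then integrate by parts inside $\int$ to move all the $D^n$ off of $u_i[h]'(0)$ and onto $\frac{\partial a}{\partial u_i^{(n)}}$, collapsing the inner sum into the variational derivative $\frac{\delta a}{\delta u_i}$ as defined in \eqref{eq:var-deriv}. Matching the signs produced by the odd integration-by-parts rule $\int D(a)b = -\int(-1)^{p(a)} a\, D(b)$ against the signs $(-1)^{m\tilde\imath + \frac{m(m+1)}{2}}$ appearing in the definition of $\frac{\delta a}{\delta u_i}$ is precisely the step that must be done with care, and this sign bookkeeping is the main obstacle. Once the inner sum is recognized as $\frac{\delta a}{\delta u_i}$, one arrives at
\[
\int \Big(h \, \Big| \, \frac{\delta a}{\delta q}\Big) = \int \sum_{i=1}^N u_i[h]'(0) \, \frac{\delta a}{\delta u_i},
\]
which is the desired identity. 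The only genuine content beyond the preceding lemma is the chain rule together with this parity-sensitive integration by parts; no new structural input about the W-algebra is needed.
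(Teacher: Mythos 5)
Your proposal is correct and follows essentially the same route as the paper: apply Lemma \ref{lem:var_meaning}, reinterpret $a(Q+\epsilon h)$ as the differential polynomial $a$ evaluated at the perturbed generators $u[h](\epsilon)=u+\epsilon\, u[h]'(0)+O(\epsilon^2)$, and identify the resulting directional derivative with the variational pairing $\int\sum_i u_i[h]'(0)\,\frac{\delta a}{\delta u_i}$. The only difference is one of detail — the paper states this last identification directly as the defining property of $\delta/\delta u_i$, while you spell out the chain rule and the odd integration by parts; note that since $\epsilon$ is even, $u_i[h]'(0)$ has the same parity as $u_i$, which is exactly why the signs $(-1)^{n\tilde\imath+\frac{n(n+1)}{2}}$ in \eqref{eq:var-deriv} match those produced by iterating $\int D(a)b=-(-1)^{p(a)}\int a\,D(b)$.
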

\begin{proof}
By Lemma \ref{lem:var_meaning}, we get 
\begin{equation} \label{eq:lem3.8}
\begin{split}
    \int \Big(h |\frac{\delta a}{\delta q} \Big)
         &=  \int \frac{d}{d\epsilon}a(Q+\epsilon h)|_{\epsilon=0} =  \int \frac{d}{d\epsilon}a(u(Q+\epsilon h))|_{\epsilon=0} \\
        &=  \int \frac{d}{d\epsilon}a(u+\epsilon u[h]'(0)+O(\epsilon^2))|_{\epsilon=0} = \int \sum_{i=1}^N{u_i[h]'(0)\frac{\delta a}{\delta u_i}}.
\end{split}
\end{equation}
Here in $a(Q+\epsilon h)$, $a$ is regarded as a function on $\mathcal{F}$. On the other hand, since $\mathcal{W}_N$ is an  algebra of differential polynomials generated by $N$ elements $u_1,\cdots, u_N$, when we write $w(u(Q+\epsilon h))$, $a$ is regarded as a differential polynomial on $N$ variables.
\end{proof}

For a pseudo-differential operator $M=\sum_{i\in \mathbb{Z}} M_i D^i\in \mathcal{V}(\bar{\mathfrak{b}}_-)(\!(D^{-1})\!)$, we denote
\begin{equation} \label{eq:functional}
    M(h):=M_i(h) D^i
\end{equation}
for $h\in \mathcal{F}.$ Then the following lemma holds.

\begin{lem}
    Let $a$ be in $\mathcal{W}_N$. Then 
    \begin{equation}
        \text{ Res } \, (\frac{\delta a}{\delta L})^*  \, \, L^*(Q+\epsilon h)= (-1)^{p(a)N}\sum_{i=1}^{N} u_i(Q+ \epsilon h) \frac{\delta a}{\delta u_i}.
    \end{equation}
\end{lem}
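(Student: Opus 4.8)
The plan is to reduce the statement to an explicit residue computation for a product of two pseudo-differential operators — the first coming from $\frac{\delta a}{\delta L}$ and the second from $L^*$ — and then to track the signs carefully. First I would rewrite the variational derivative in terms of the scalar variational derivatives: comparing \eqref{Xvder} with \eqref{eq:var-deriv} and using $p(u_k)\equiv k$, one gets
\[ \frac{\delta a}{\delta L}=(-1)^{p(a)}\sum_{k=1}^N (-1)^k D^{k-N-1}\frac{\delta a}{\delta u_k}. \]
Taking adjoints with the help of \eqref{eq:adjoing comm}, of $(D^m)^*=(-1)^{\frac{m(m+1)}{2}}D^m$, and of $(\frac{\delta a}{\delta u_k})^*=\frac{\delta a}{\delta u_k}$ (which has parity $p(a)+k$), I obtain a finite sum
\[ \Big(\frac{\delta a}{\delta L}\Big)^*=\sum_{k=1}^N \sigma_k\,\frac{\delta a}{\delta u_k}\,D^{k-N-1},\qquad \sigma_k=(-1)^{p(a)+k+(k+N+1)(p(a)+k)+\frac{(k-N-1)(k-N)}{2}}. \]
In parallel I write $L^*=(-1)^{\frac{N(N+1)}{2}}D^N+\sum_{i=1}^N \tau_i D^{N-i}u_i$ with $\tau_i=(-1)^{(N-i)i+\frac{(N-i)(N-i+1)}{2}}$, keeping the generators $u_i$ on the right.

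Next I would use that evaluation at $Q+\epsilon h$ is well behaved: by \eqref{eq:b- functional} the map $v\mapsto v(Q+\epsilon h)$ is an algebra homomorphism commuting with $D$, so it extends coefficientwise to an algebra homomorphism of pseudo-differential operators and gives $L^*(Q+\epsilon h)=(-1)^{\frac{N(N+1)}{2}}D^N+\sum_i \tau_i D^{N-i}\,u_i(Q+\epsilon h)$. Leaving $(\frac{\delta a}{\delta L})^*$ un-evaluated (its coefficients already lie in $\mathcal{W}_N$), I form the product and take its residue. The $D^N$ term contributes only the non-negative powers $\sigma_k(-1)^{\frac{N(N+1)}{2}}\frac{\delta a}{\delta u_k}D^{k-1}$, which have zero residue. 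For the remaining terms I use $D^{k-N-1}D^{N-i}=D^{k-i-1}$ together with the elementary fact, read off from the multiplication rules, that $\mathrm{Res}(D^m v)=0$ unless $m=-1$, in which case $\mathrm{Res}(D^{-1}v)=(-1)^{p(v)}v$. This collapses the double sum onto the diagonal $k=i$ and yields
\[ \mathrm{Res}\Big[\Big(\tfrac{\delta a}{\delta L}\Big)^*L^*(Q+\epsilon h)\Big]=\sum_{i=1}^N \sigma_i\tau_i(-1)^i\,\frac{\delta a}{\delta u_i}\,u_i(Q+\epsilon h). \]

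Finally, after commuting the supercommuting factors $\frac{\delta a}{\delta u_i}$ (of parity $p(a)+i$) and $u_i(Q+\epsilon h)$ (of parity $i$) past each other, the claim reduces to the single sign identity $\sigma_i\tau_i(-1)^i(-1)^{(p(a)+i)i}\equiv (-1)^{p(a)N}\pmod 2$ for every $i$. I expect this bookkeeping to be the main obstacle. To dispatch it I would split the total exponent into its bilinear part and its triangular-number part: the bilinear terms collapse to $Np(a)$ after cancellation, while the two contributions $\frac{(i-N-1)(i-N)}{2}$ and $\frac{(N-i)(N-i+1)}{2}$, under the substitution $j=N-i$, both become $\frac{j(j+1)}{2}$ and hence add up to the even number $j(j+1)$. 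This confirms the sign and completes the proof.
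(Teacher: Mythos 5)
Your proof is correct, and every sign I checked ($\sigma_k$, $\tau_i$, the residue rule $\mathrm{Res}(D^{-1}v)=(-1)^{p(v)}v$, and the final identity $\sigma_i\tau_i(-1)^{i}(-1)^{(p(a)+i)i}=(-1)^{Np(a)}$) works out. However, it is organized differently from the paper's argument, which is a two-line proof: the paper first invokes the identity
\[
\text{Res}\,\Big(\tfrac{\delta a}{\delta L}\Big)^{*}L^{*}=(-1)^{N(p(a)+N+1)}\,\text{Res}\,L\,\tfrac{\delta a}{\delta L},
\]
which is an immediate consequence of $\mathrm{Res}(A)=\mathrm{Res}(A^{*})$ (Lemma \ref{lem:res}(1)) together with $(AB)^{*}=(-1)^{p(A)p(B)}B^{*}A^{*}$ (eq. \eqref{eq:adjoing comm}), and then reads off $\mathrm{Res}\,L\,\frac{\delta a}{\delta L}=\sum_i u_i\frac{\delta a}{\delta u_i}$ directly from the definition \eqref{Xvder}, since only the diagonal terms $u_iD^{N-i}\cdot D^{i-N-1}\frac{\delta a}{\delta u_i}$ contribute to the residue. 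In other words, the paper applies the adjoint to the whole product at once, so that all sign bookkeeping is absorbed into the single global factor $(-1)^{N(p(a)+N+1)}=(-1)^{Np(a)}$, whereas you compute the adjoints of $\frac{\delta a}{\delta L}$ and $L$ summand by summand and then verify a per-index sign identity by splitting it into bilinear and triangular-number parts. The underlying mechanism (collapse onto the diagonal $k=i$ via $\mathrm{Res}(D^m v)=0$ unless $m=-1$) is the same in both. What your version buys is transparency on two points the paper leaves implicit: that the identity is purely formal in the coefficients of $L$ (so it remains valid when $u_i$ is replaced by $u_i(Q+\epsilon h)$ while $\frac{\delta a}{\delta L}$ is left unevaluated), and that evaluation at $Q+\epsilon h$ is a differential-algebra homomorphism, which justifies moving it across the adjoint and the powers of $D$. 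What the paper's route buys is brevity and immunity from the delicate sign ledger that you correctly identify as the main hazard of your approach.
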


\begin{proof}
    This result follows from the fact that $ \text{ Res }  \,  (\frac{\delta a}{\delta L})^* \,  L^*=(-1)^{N(p(a)+N+1)} \text{ Res }  \, L \,  \frac{\delta a}{\delta L}$ and the definition of $\frac{\delta a}{\delta L}$.
\end{proof}

For a given $h\in \mathcal{F}$, we consider the following deformation of $\nabla_a$:
\begin{equation} \label{eq:deform_matrix differential}
    (\nabla_a[h](\epsilon) )_{ij}=(-1)^{(j+1)p(a)+
    (j+1)(i+1)+
    \frac{j(j-1)}{2}+\frac{i(i-1)}{2}}b_{N-i}(Q)(\frac{\delta a}{\delta L})^*(Q+\epsilon h),
\end{equation}
where $Q=\sum_{i\in S} q^i\otimes \bar{q}_i\in \mathcal{F}.$

\begin{lem} \label{lem:2.3-4}
    Let  $a$ be in $\mathcal{W}_N$. Then 
    \begin{equation}
       \text{ Res } \, (\nabla_{a} [h] (\epsilon)| \mathcal{L}^c (Q + \epsilon h ))=  (-1)^{p(a)} \sum_{i=1}^{N} u_i(Q + \epsilon h ) \frac{\delta a}{\delta u_i}.
    \end{equation}
\end{lem}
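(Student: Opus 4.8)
The plan is to recognize Lemma \ref{lem:2.3-4} as the functional-valued, $h$-deformed version of the computation already carried out in Theorem \ref{thm:2.2}, specialized to the last term of the bracket. Concretely, the left-hand side $\text{ Res }(\nabla_a[h](\epsilon)|\mathcal{L}^c(Q+\epsilon h))$ is the deformation of the quantity $(\text{ Res }\nabla_a|\text{ Res }(\mathcal{L}^{can}\nabla_b))$ appearing in \eqref{eqaux1}, but now with $\nabla_b$ replaced by the trivial/identity contribution and with the relevant arguments evaluated at $Q+\epsilon h$. So first I would expand $\mathcal{L}^c(Q+\epsilon h)\nabla_a[h](\epsilon)$ (or the appropriate product dictated by the pairing \eqref{eq:bilinear}) using Lemma \ref{lem:Lnabla-nablaL}, which identifies the single nonzero row or column of $\mathcal{L}^{can}\nabla_a$ in terms of $L^*(\frac{\delta a}{\delta L})^*$. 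The point of that lemma is that all the matrix bookkeeping collapses: after applying $\text{ Res }$ and the pairing, only one entry survives.

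Second, I would feed the surviving scalar residue into the preceding lemma, which states
\begin{equation*}
\text{ Res }\,(\tfrac{\delta a}{\delta L})^*\,L^*(Q+\epsilon h)=(-1)^{p(a)N}\sum_{i=1}^N u_i(Q+\epsilon h)\tfrac{\delta a}{\delta u_i}.
\end{equation*}
This is precisely the bridge that converts the operator-theoretic residue into the desired sum $\sum_i u_i(Q+\epsilon h)\tfrac{\delta a}{\delta u_i}$. The main work is checking that when one evaluates the matrix pairing $(\nabla_a[h](\epsilon)|\mathcal{L}^c(Q+\epsilon h))$ and takes $\text{ Res }$, the signs coming from \eqref{eq:bilinear}, from the entries \eqref{eq:deform_matrix differential} of $\nabla_a[h](\epsilon)$, and from Lemma \ref{lem:Lnabla-nablaL} combine to give exactly the factor $(-1)^{p(a)}$, so that the $(-1)^{p(a)N}$ from the previous lemma is cancelled up to the stated sign. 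I would verify this by tracking the sign contribution $(-1)^{(j+1)p(a)+(j+1)(i+1)+\frac{j(j-1)}{2}+\frac{i(i-1)}{2}}$ together with the $s_i,\,b_i$ structure along the one surviving diagonal of the product.

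The key structural observation that makes this clean is that $\nabla_a[h](\epsilon)$ in \eqref{eq:deform_matrix differential} uses $b_{N-i}(Q)$ evaluated at the undeformed $Q$ but $(\frac{\delta a}{\delta L})^*$ evaluated at $Q+\epsilon h$; this asymmetry is designed so that upon multiplication by $\mathcal{L}^c(Q+\epsilon h)$ the $b_{N-i}$ factors telescope against $\mathcal{L}^c$ exactly as in the undeformed identity \eqref{eq:nabla L}, leaving a clean $L^*(\frac{\delta a}{\delta L})^*$ evaluated consistently at $Q+\epsilon h$. I therefore expect the computation to be essentially a deformed replay of Lemma \ref{lem:Lnabla-nablaL} followed by the residue identity, rather than a genuinely new calculation.

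The hard part will be the sign accounting: because $\mathcal{L}^c$ is an odd operator and the pairing \eqref{eq:bilinear} carries parity-dependent signs $(-1)^{p(a)(k+l)}(-1)^{i+1}$, one must be careful that the evaluation at $Q+\epsilon h$ (an odd element) does not introduce spurious signs relative to the undeformed case, and that the interaction between $D$ acting through $\mathcal{L}^c$ and the $\epsilon$-deformation is handled correctly. I would isolate this by first proving the identity at $\epsilon=0$ (where it reduces to a special case of the Theorem \ref{thm:2.2} computation) and then arguing that both sides are manifestly polynomial in $\epsilon$ with the deformation entering only through the argument $Q+\epsilon h$ of $u_i$ and of $(\frac{\delta a}{\delta L})^*$ in the same way, so the identity propagates to all $\epsilon$ verbatim.
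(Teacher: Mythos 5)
Your proposal is correct and follows essentially the same route as the paper: the paper's proof consists precisely of the direct computation that the pairing collapses to $(\nabla_a[h](\epsilon)\,|\,\mathcal{L}^c(Q+\epsilon h)) = (-1)^{p(a)(N+1)}(\frac{\delta a}{\delta L})^*\,L^*(Q+\epsilon h)$ (your step using the structure behind Lemma \ref{lem:Lnabla-nablaL} and the pairing \eqref{eq:bilinear}), followed by the preceding residue lemma, whose sign $(-1)^{p(a)N}$ combines with $(-1)^{p(a)(N+1)}$ to give the stated $(-1)^{p(a)}$. Your additional $\epsilon=0$-plus-polynomiality remark is not needed in the paper but is harmless.
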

\begin{proof}
   It follows from the fact that for all $w$ in $\mathcal{W}_N$ we have 
    \begin{equation}
        (\nabla_a [h] (\epsilon)| \mathcal{L}^c( Q + \epsilon h))=  (-1)^{p(a)(N+1)} (\frac{\delta a}{\delta L})^*(Q) L^* ( Q+ \epsilon h).
    \end{equation}
\end{proof}

Let $\mathcal{M}$ be a matrix superdifferential operator entries $\mathcal{M}_{ij}\in  \mathcal{V}(\bar{\mathfrak{b}})(\!(D^{-1})\!)$ for $i,j=1,\cdots, N=m+n$. For $h\in \mathcal{F}$, we denote by $\mathcal{M}(h)$ the  matrix differential operator with entries $\mathcal{M}_{ij}(h)$. Here  $\mathcal{M}_{ij}(h)$ is defined in \eqref{eq:functional}. Recall that $\mathfrak{n}\subset \mathfrak{gl}(m|n)$ is the set of positive degree elements and we have defined $N_c\in \mathfrak{n}\otimes \mathcal{V}(\bar{\mathfrak{b}})$ in Proposition \ref{Prop:canonical form}. $N_c$ is fixing the gauge as follows $e^{ad\, N_c}(D+Q- f\otimes 1) = \mathcal{L}^c$.

\begin{lem} \label{eq:2.3lem_main}
For all $a \in \mathcal{W}_N,$
    \begin{equation}
         e^{-ad\, N_c } \text{ Res }\nabla_{a} -  \frac{\delta a}{\delta q} \in \mathfrak{n}\otimes \mathcal{V}(\bar{\mathfrak{b}}).
    \end{equation}
\end{lem}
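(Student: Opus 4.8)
The plan is to detect the claimed congruence by pairing against the odd space $\mathcal{F} = (\mathfrak{b}_+\otimes\mathcal{V}(\bar{\mathfrak{b}}))_{\bar 1}$ and then reducing to the integrated identities already assembled in Lemmas \ref{lem:var_meaning}--\ref{lem:2.3-4}. First I would record the linear-algebra reduction: since $\mathfrak{g} = \mathfrak{n}\oplus\mathfrak{b}$ as superspaces and $\frac{\delta a}{\delta q}\in\mathfrak{b}\otimes\mathcal{V}(\bar{\mathfrak{b}})$, the assertion $e^{-\mathrm{ad}\,N_c}\mathrm{Res}\,\nabla_a - \frac{\delta a}{\delta q}\in\mathfrak{n}\otimes\mathcal{V}(\bar{\mathfrak{b}})$ is equivalent to saying that the $\mathfrak{b}$-component of $e^{-\mathrm{ad}\,N_c}\mathrm{Res}\,\nabla_a$ equals $\frac{\delta a}{\delta q}$. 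Because the form \eqref{eq:bilinear} restricts to a perfect pairing between $\mathfrak{b}$ and $\mathfrak{b}_+$ (with $(q^i|q_j)=\delta_{ij}$), this $\mathfrak{b}$-component is recovered by pairing with all $h\in\mathcal{F}$; and since the coefficients of $h$ range over arbitrary elements of $\mathcal{V}(\bar{\mathfrak{b}})$, the fundamental lemma of the calculus of variations (nondegeneracy of $\int: \mathcal{V}(\bar{\mathfrak{b}})\times\mathcal{V}(\bar{\mathfrak{b}})\to \mathcal{V}(\bar{\mathfrak{b}})/D\mathcal{V}(\bar{\mathfrak{b}})$) upgrades an integrated equality to the honest equality of $\mathfrak{b}$-components. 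Thus it suffices to prove, for every $h\in\mathcal{F}$,
\[ \int\big(h\,\big|\,e^{-\mathrm{ad}\,N_c}\mathrm{Res}\,\nabla_a\big) = \int\big(h\,\big|\,\tfrac{\delta a}{\delta q}\big). \]

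Next I would transport the gauge onto $h$. By $\mathrm{ad}$-invariance of the form and evenness of $N_c$, the left-hand side equals $\int(e^{\mathrm{ad}\,N_c}h\,|\,\mathrm{Res}\,\nabla_a)$. To identify $e^{\mathrm{ad}\,N_c}h$ I would differentiate the defining relation $e^{\mathrm{ad}\,N_c(Q+\epsilon h)}(D+Q+\epsilon h - f\otimes 1) = \mathcal{L}^c(Q+\epsilon h)$ of Proposition \ref{Prop:canonical form} at $\epsilon=0$. Writing $\dot N := \frac{d}{d\epsilon}|_0 N_c(Q+\epsilon h)\in\mathfrak{n}\otimes\mathcal{V}(\bar{\mathfrak{b}})$ and using the derivative-of-exponential formula yields
\[ e^{\mathrm{ad}\,N_c}h = \tfrac{d}{d\epsilon}\big|_0 \mathcal{L}^c(Q+\epsilon h) - [\tilde n,\mathcal{L}^c], \qquad \tilde n := \tfrac{e^{\mathrm{ad}\,N_c}-1}{\mathrm{ad}\,N_c}\dot N\in\mathfrak{n}\otimes\mathcal{V}(\bar{\mathfrak{b}}), \]
the membership $\tilde n\in\mathfrak{n}\otimes\mathcal{V}(\bar{\mathfrak{b}})$ holding because $\mathfrak{n}$ is a subalgebra.

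I would then pair this identity with $\mathrm{Res}\,\nabla_a$. The correction term $\int([\tilde n,\mathcal{L}^c]\,|\,\mathrm{Res}\,\nabla_a)$ vanishes: moving the bracket across the form and invoking Lemma \ref{lem:Lnabla-nablaL}, the residues of $\mathcal{L}^c\nabla_a$ and $\nabla_a\mathcal{L}^c$ are supported respectively on the first row $(e_{1i})$ and the last column $(e_{iN})$, and by \eqref{eq:bilinear} these are orthogonal to the strictly upper-triangular $\tilde n\in\mathfrak{n}$ (this is precisely the infinitesimal gauge-invariance of $a$). For the remaining term I would invoke Lemma \ref{lem:2.3-4}: differentiating its non-integrated identity $\mathrm{Res}(\nabla_a[h](\epsilon)\,|\,\mathcal{L}^c(Q+\epsilon h)) = (-1)^{p(a)}\sum_i u_i(Q+\epsilon h)\frac{\delta a}{\delta u_i}$ at $\epsilon=0$, integrating, and applying the directional-derivative lemma $\int(h|\frac{\delta a}{\delta q}) = \int\sum_i u_i[h]'(0)\frac{\delta a}{\delta u_i}$, gives
\[ \int\big(\mathrm{Res}\,\nabla_a\,\big|\,\tfrac{d}{d\epsilon}\big|_0\mathcal{L}^c(Q+\epsilon h)\big) + \int\big(\tfrac{d}{d\epsilon}\big|_0\nabla_a[h](\epsilon)\,\big|\,\mathcal{L}^c\big) = (-1)^{p(a)}\int\big(h\,\big|\,\tfrac{\delta a}{\delta q}\big). \]
Here the second (cross) integral again vanishes by the same first-row/last-column residue structure of Lemma \ref{lem:Lnabla-nablaL} together with the residue identities of Lemma \ref{lem:res}, leaving exactly the term I need. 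Collecting the pieces and reconciling the factor $(-1)^{p(a)}$ against the super-symmetry of the pairing then produces the integrated identity above.

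The main obstacle I expect is the two vanishing statements — the gauge-correction term $\int([\tilde n,\mathcal{L}^c]\,|\,\mathrm{Res}\,\nabla_a)$ and the cross term $\int(\tfrac{d}{d\epsilon}|_0\nabla_a[h]\,|\,\mathcal{L}^c)$ — since both rest on the precise support of $\mathcal{L}^c\nabla_a$ and $\nabla_a\mathcal{L}^c$ from Lemma \ref{lem:Lnabla-nablaL} and are where the gauge-invariance of $a$ is genuinely used; keeping the super-signs consistent through the invariance of the form and the factor $(-1)^{p(a)}$ of Lemma \ref{lem:2.3-4} is the other delicate bookkeeping point. An alternative, purely computational route would substitute the explicit entries \eqref{eq:matrix differential} of $\nabla_a$ and expand $e^{-\mathrm{ad}\,N_c}$ directly, but since $N_c$ is only characterized implicitly by the gauge-fixing condition, the structural pairing argument above seems preferable.
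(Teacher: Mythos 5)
Your proposal is correct and takes essentially the same route as the paper's proof: the same reduction to the integrated pairing $\int(\,\cdot\,|h)$ against all $h\in\mathcal{F}$, the same first-order expansion of $\mathcal{L}^c(Q+\epsilon h)$ obtained by differentiating the gauge-fixing relation (your $\tilde n$ is exactly the paper's $\phi_h$), the same two vanishing statements (the commutator term via upper-triangularity of $\mathcal{L}^c\nabla_a$ and $\nabla_a\mathcal{L}^c$, and the cross term from varying $\nabla_a$), and the same use of Lemmas \ref{lem:var_meaning} and \ref{lem:2.3-4} to identify the surviving term. The only difference is organizational: you transport the gauge onto $h$ and work from the left-hand side, whereas the paper differentiates $\int \text{Res}\,(\nabla_a(Q+\epsilon h)\,|\,\mathcal{L}^c(Q+\epsilon h))$ and works from the right-hand side; these are the same computation read in opposite directions.
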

\begin{proof}
In order to prove the lemma, it is enough to show that 
\begin{equation} \label{eq:lem2.5-1}
    \int (e^{-ad\,N_c} \text{ Res } \nabla_a |h)= \int \Big( \frac{\delta a}{\delta q} \, |\, h \Big)
\end{equation}
for any $h \in \mathcal{F}.$ Using Lemma \ref{lem:var_meaning} and \ref{lem:2.3-4}, the RHS of \eqref{eq:lem2.5-1} can be written as follows:
\begin{equation} \label{eq:lem2.5-2}
      \begin{split}
          \int \Big( \frac{\delta a}{\delta q}\, | \, h \Big) &= (-1)^{p(a)+1} \int \Big( h|\frac{\delta a}{\delta q} \Big)  \\
          &= (-1)^{p(a)+1}\int \sum_{i=1}^N{u_i'(Q+\epsilon h)(0)\frac{\delta f}{\delta u_i}} \\
           &=-\frac{d}{d\epsilon} \int \text{ Res }\big(\nabla_{a}(Q+ \epsilon h) |\mathcal{L}^c(Q+\epsilon h) \big)|_{\epsilon=0}. \\
    \end{split}
\end{equation}
    Now let us show the last term in \eqref{eq:lem2.5-2} also equals to \eqref{eq:lem2.5-1}.
 There is a $\mathfrak{n}$ valued function $M$ such that $N_c(Q+\epsilon h)= N_c(Q)+ \epsilon  M(Q) + O(\epsilon^2)$. For such $M$, we have :
\begin{equation} \label{eq:lem2.5-3}
    \begin{split}
        & \mathcal{L}^c(Q+\epsilon h)=e^{ad N_c(Q +\epsilon h)}(D+Q+\epsilon h-f \otimes 1) =e^{ad N_c+\epsilon ad M +O(\epsilon^2)}(\mathcal{L}(Q)+\epsilon h) \\
        &=\mathcal{L}^c(Q)+ \epsilon \,  e^{ad N_c}(h) +\epsilon \, [M+\frac{1}{2}[N_c,M]+\frac{1}{6}[N_c,[N_c,M]]+\cdots \, ,\mathcal{L}^c(Q)] + O(\epsilon^2)\\
        &=\mathcal{L}^c(Q)+ \epsilon \,  e^{ad N_c}(h) +\epsilon \, [\phi_h,\mathcal{L}^c(Q)] + O(\epsilon^2)
    \end{split}
    \end{equation}
where  $\phi_h=M+\frac{1}{2}[N_c,M]+\frac{1}{6}[N_c,[N_c,M]]+\cdots.$
  Note that since both $\mathcal{L}^c(Q)\nabla_a(Q)$ and $\nabla_a(Q)\mathcal{L}^c(Q)$ are upper triangular, 
    \begin{equation}  \label{eq:lem2.5-4}
        \int  \text{ Res }(\nabla_a (Q) |[\mathcal{M},\mathcal{L}^c(Q)])=0.
    \end{equation}
when $\mathcal{M}$ is an $\mathfrak{n}$-valued function.
Moreover, since the last row of $\nabla_a$ does not vary with $\epsilon$ but only the last column of $\mathcal{L}^c$ is nonzero, for all $\epsilon$,
 we see that
 \begin{equation}  \label{eq:lem2.5-5}
     ( \nabla_a [h](\epsilon) \, |\, \mathcal{L}^c(Q) ) = str ( \nabla_a (Q) \mathcal{L}^c(Q) ),
\end{equation}
and hence 
\begin{equation} \label{eq:lem2.5-6}
    ( \nabla_a [h](0)' | \mathcal{L}^c(Q)) = 0.
\end{equation}
    By \eqref{eq:lem2.5-4} and \eqref{eq:lem2.5-6}, we have 
    \begin{equation}
    \begin{split}
        \frac{d}{d\epsilon} &  \int 
 \text{ Res }(\nabla_a(Q+\epsilon h)  |\mathcal{L}^c(Q+\epsilon h))|_{\epsilon=0}  \\
         & =  \int \text{ Res }( \nabla_a[h](0)'  | \mathcal{L}^c(Q))
        +   \int  \text{ Res } \, (e^{-ad N_c }\nabla_a(Q)| h) + \int  \text{ Res }(\nabla_a(Q) |[\phi_h,\mathcal{L}^c(Q)]) \\
        &=  \int  \text{ Res } \, (e^{-ad N_c }\nabla_a(Q)| h) = - \int  \, (e^{-ad N_c } \text{ Res } \nabla_a| h).
    \end{split}
    \end{equation}
 Hence we proved that 
  \begin{equation}
          \int \Big( \frac{\delta a}{\delta q}\, |\, h\Big)   =  \smallint (e^{-ad N_c }\text{ Res }(\nabla_{a}(Q))| h)
  \end{equation}
for any $h\in \mathcal{F}$.
\end{proof}

\begin{thm} \label{thm:main1}
Let $a,b\in \mathcal{W}_N$ and $\mathcal{L}^u:=D+Q-f\otimes 1$ for $Q=\sum_{i\in I} q^i \otimes \bar{q}_i$. Then 
\begin{equation} \label{redqua}
    \int 
    \Big(\frac{\delta a}{\delta q}\big| \big[\mathcal{L}^u, \frac{\delta b}{\delta q}\big] \Big) =(-1)^{N+p(a)} \int \text{ Res } ((L \frac{\delta a}{\delta L})_+L \frac{\delta b}{\delta L}-L(\frac{\delta a}{\delta L} L)_+ \frac{\delta b}{\delta L}).
\end{equation}    
\end{thm}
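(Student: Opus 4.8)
The plan is to recognize the right-hand side of \eqref{redqua} as a \emph{gauge transform} of the left-hand side and to reduce the discrepancy to terms that vanish for structural reasons. First, by Theorem \ref{thm:2.2} together with the identification $\mathcal{L}^{can}=\mathcal{L}^c$ explained after \eqref{eq:u and w}, the right-hand side of \eqref{redqua} is precisely $\int(\mathrm{Res}\,\nabla_a\,|\,[\mathcal{L}^c,\mathrm{Res}\,\nabla_b])$. Thus it suffices to establish the frame-change identity $\int(\frac{\delta a}{\delta q}|[\mathcal{L}^u,\frac{\delta b}{\delta q}]) = \int(\mathrm{Res}\,\nabla_a\,|\,[\mathcal{L}^c,\mathrm{Res}\,\nabla_b])$, i.e. to transport the Gelfand--Dickey pairing written in the canonical frame back to the universal frame of $\mathcal{L}^u$.

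Set $g:=e^{\mathrm{ad}\,N_c}$, the inner automorphism of the semidirect product $\mathbb{C}[D]\ltimes(\mathfrak{g}\otimes\mathcal{V}(\bar{\mathfrak{b}}))$ fixing the gauge, so that $g\mathcal{L}^u=\mathcal{L}^c$ by Proposition \ref{Prop:canonical form}. Since $g$ is a Lie superalgebra automorphism preserving the invariant form $(\cdot|\cdot)$, I would first argue that $\int(gX|[\mathcal{L}^c,gY])=\int(X|[\mathcal{L}^u,Y])$ for all $X,Y\in\mathfrak{g}\otimes\mathcal{V}(\bar{\mathfrak{b}})$. Next, applying Lemma \ref{eq:2.3lem_main} and using that $\mathrm{ad}\,N_c$ preserves $\mathfrak{n}\otimes\mathcal{V}(\bar{\mathfrak{b}})$ (because $N_c\in\mathfrak{n}\otimes\mathcal{V}(\bar{\mathfrak{b}})$ and $[\mathfrak{n},\mathfrak{n}]\subset\mathfrak{n}$), I would write $\mathrm{Res}\,\nabla_a=g\frac{\delta a}{\delta q}+n_a$ and $\mathrm{Res}\,\nabla_b=g\frac{\delta b}{\delta q}+n_b$ with $n_a,n_b\in\mathfrak{n}\otimes\mathcal{V}(\bar{\mathfrak{b}})$.

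Expanding $\int(\mathrm{Res}\,\nabla_a|[\mathcal{L}^c,\mathrm{Res}\,\nabla_b])$ bilinearly then produces a principal term $\int(g\frac{\delta a}{\delta q}|[\mathcal{L}^c,g\frac{\delta b}{\delta q}])$, which equals the left-hand side by the frame-change identity above, together with three error terms. The two mixed terms $\int(\mathrm{Res}\,\nabla_a|[\mathcal{L}^c,n_b])$ and $\int(n_a|[\mathcal{L}^c,\mathrm{Res}\,\nabla_b])$ both vanish by the triangularity identity \eqref{eq:lem2.5-4} from the proof of Lemma \ref{eq:2.3lem_main}, namely that $\int(\mathrm{Res}\,\nabla_c|[\mathcal{M},\mathcal{L}^c])=0$ whenever $\mathcal{M}$ is $\mathfrak{n}$-valued; the second mixed term is first brought into this shape by invariance of the form (with $\mathcal{M}=n_a$), the first directly (with $\mathcal{M}=n_b$). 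The remaining term $\int(n_a|[\mathcal{L}^c,n_b])$ vanishes for grading reasons: the non-$D$ part of $\mathcal{L}^c$ lies in $(\mathfrak{b}_+\oplus\mathfrak{g}(-\tfrac12))\otimes\mathcal{V}(\bar{\mathfrak{b}})$, and since $[D,\mathfrak{n}]\subset\mathfrak{n}$, $[\mathfrak{b}_+,\mathfrak{n}]\subset\mathfrak{n}\subset\mathfrak{b}_+$ and $[\mathfrak{g}(-\tfrac12),\mathfrak{n}]\subset\mathfrak{b}_+$, we get $[\mathcal{L}^c,\mathfrak{n}\otimes\mathcal{V}(\bar{\mathfrak{b}})]\subset\mathfrak{b}_+\otimes\mathcal{V}(\bar{\mathfrak{b}})$; as the form \eqref{eq:bilinear} pairs $\mathfrak{n}$ trivially with $\mathfrak{b}_+$, the integrand itself is zero. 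Consequently all three error terms drop and the two sides of \eqref{redqua} agree.

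The main obstacle I anticipate is the careful justification that $g$ genuinely preserves the pairing in the form $\int(gX|[\mathcal{L}^c,gY])=\int(X|[\mathcal{L}^u,Y])$ inside the semidirect product: one must check that the $D$-component of $\mathcal{L}^u$ transforms correctly under $g$ (this is exactly the gauge computation behind $g\mathcal{L}^u=\mathcal{L}^c$) and that the $\mathcal{V}(\bar{\mathfrak{b}})$-valued signs in the invariant form \eqref{eq:bilinear} are compatible with $\mathrm{ad}$-invariance, so that $\frac{d}{dt}(e^{t\,\mathrm{ad}\,N_c}X|e^{t\,\mathrm{ad}\,N_c}Y)=0$ holds as a pointwise identity in $\mathcal{V}(\bar{\mathfrak{b}})$. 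This, together with the super-sign bookkeeping needed to recast the second mixed term into the shape of \eqref{eq:lem2.5-4}, is where the real care lies; once the invariance is pinned down, the vanishing of all three error terms follows essentially automatically from \eqref{eq:lem2.5-4} and the Dynkin grading.
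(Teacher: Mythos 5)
Your proposal is correct and follows essentially the same route as the paper's proof: both reduce the right-hand side to $\int(\text{ Res }\nabla_a\,|\,[\mathcal{L}^c,\text{ Res }\nabla_b])$ via Theorem \ref{thm:2.2}, and then identify this with the left-hand side using the gauge automorphism $e^{\pm\mathrm{ad}\,N_c}$, Lemma \ref{eq:2.3lem_main}, ad-invariance of the form, and the triangularity/orthogonality facts (\eqref{eq:lem2.5-4} and $(\mathfrak{n}\,|\,\mathfrak{b}_+)=0$). The only difference is organizational — the paper replaces one argument of the pairing at a time along a chain of equalities, whereas you expand bilinearly and kill the cross terms — but the ingredients and mechanism are identical.
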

\begin{proof}
 Since we have Theorem \ref{thm:2.2}, the following equality directly implies the theorem:
    \begin{equation} \label{eq:main_proof}
        \int 
    \Big(\frac{\delta a}{\delta q}\big| \big[\mathcal{L}^u, \frac{\delta b}{\delta q}\big] \Big) = \int ( \text{ Res }(\nabla_a)| [\mathcal{L}^c, \text{ Res } (\nabla_b)]) \, .
    \end{equation}
Let us show \eqref{eq:main_proof}. For $N_c$ in Lemma \ref{eq:2.3lem_main}, we have 
\begin{equation} \label{eq:2.3thorem_proof1}
         \int(\text{ Res } \nabla_a |[\mathcal{L}^c, \text{ Res } \nabla_b])= \int(e^{-ad\, N_c} \text{ Res } \nabla_a |e^{-ad\, N_c}[\mathcal{L}^c, \text{ Res } \nabla_b]).
\end{equation}
Since $[\mathcal{L}^c, \text{ Res } \nabla_b]$ is an upper triangular matrix by Lemma \ref{lem:Lnabla-nablaL}, we know that
\begin{equation}\label{eq:2.3thorem_proof2}
    \eqref{eq:2.3thorem_proof1}= \int \Big( \frac{\delta a}{\delta q} \, \big| \, \big[e^{-ad\, N_c}\mathcal{L}^c, e^{-ad\, N_c}\nabla_b \big] \Big)= \int \Big(\Big[\frac{\delta a}{\delta q} , \, \mathcal{L}^u \Big] \big| e^{-ad\, N_c}\nabla_b  \Big).
\end{equation}
By Lemma \ref{eq:2.3lem_main}, there is a $\mathfrak{n}$-valued function $M$ such that 
$ e^{-ad\, N_c} \text{ Res } \nabla_a +M= \frac{\delta a}{\delta q}$. Hence 
\begin{equation}
    \Big[\frac{\delta a}{\delta q}, \mathcal{L}^u\Big]=[e^{-ad \, N_c} \text{ Res } \nabla_a+M,\mathcal{L}^u]= e^{-ad \, N_c}[ \text{ Res } \nabla_a,\mathcal{L}^c]+[M, \mathcal{L}^u]
\end{equation}
is also an upper triangular matrix. Therefore using Lemma \ref{eq:2.3lem_main} again, we get 
\begin{equation}
    \eqref{eq:2.3thorem_proof2}=  \int 
    \Big(\frac{\delta a}{\delta q}\big| \big[\mathcal{L}^u, \frac{\delta b}{\delta q}\big] \Big).
\end{equation}
and conclude \eqref{eq:main_proof} holds.
\end{proof}

\begin{lem}
    The $\chi$-bracket between the generators $u_i$'s of $\mathcal{W}_N$ can be equivalently defined by the relation
    \begin{equation} \label{eq:LL}
    \begin{split}
 \{L(z) \, {}_{\chi} \, L(w)\}&= \langle \big(L(D+\chi+w)  (D+\chi+w-z)(D^2-\chi^2+w^2-z^2)
 ^{-1}\big)_+L(D+w) \\
 & -L(D+\chi+w) \big( (D+\chi+w-z)(D^2-\chi^2+w^2-z^2)
 ^{-1} L(D+w)\big)_+ \rangle
 \end{split}
    \end{equation}
  where for any pseudo-differential operator $A= \sum_{n \leq M} a_n D^n$, we denote $a_0$ by $\langle \, A \, \rangle.$  
    Moreover, \eqref{eq:LL} determines the entire $\chi$-bracket on $\mathcal{W}_N$ by \eqref{eq:master}.
   
\end{lem}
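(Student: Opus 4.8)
The plan is to treat the two assertions separately. The ``moreover'' clause is the easy half: by the master formula \eqref{eq:master}, knowing all brackets $\{u_i{}_\chi u_j\}$ determines the bracket on $\mathcal{W}_N$, and I claim $\{L(z){}_\chi L(w)\}$ is exactly the bivariate generating series of these. Writing $L(z)=\sum_{i=0}^N u_iz^{N-i}$ with $u_0=1$, one has $\frac{\partial L(w)}{\partial u_j^{(n)}}=\delta_{n0}w^{N-j}$ and $\frac{\partial L(z)}{\partial u_i^{(m)}}=\delta_{m0}z^{N-i}$, so in \eqref{eq:master} every term with $m>0$ or $n>0$ vanishes and $\{L(z){}_\chi L(w)\}=\sum_{i,j=1}^{N}(\pm)\,z^{N-i}w^{N-j}\{u_i{}_\chi u_j\}$, the signs being the specializations $s^{0,0}_{i,j}t^{0,0}_{i,j}$. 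Extracting the coefficient of $z^{N-i}w^{N-j}$ recovers each $\{u_i{}_\chi u_j\}$, which proves the ``moreover'' clause once \eqref{eq:LL} is established.

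For the identity \eqref{eq:LL} itself I would not compute the individual $W$-algebra brackets. Instead I would use that a local $\chi$-bracket on the algebra of differential polynomials $\mathcal{W}_N$ is recovered from the degree $1$ Lie superalgebra bracket it induces on $\smallint\mathcal{W}_N$: in \eqref{eq:Lie str} the variational derivatives $\frac{\delta a}{\delta u_i}$ and $\frac{\delta b}{\delta u_j}$ may be smeared against arbitrary test elements, and this pins down each super-differential operator $\{u_i{}_D u_j\}$. Since Theorem \ref{thm:main1} identifies the functional bracket induced by the $W$-algebra structure with the quadratic Gelfand--Dickey bracket \eqref{SGD}, it suffices to show that the $\chi$-bracket defined by \eqref{eq:LL} (and extended through \eqref{eq:master}) also induces \eqref{SGD} on $\smallint\mathcal{W}_N$; uniqueness then forces the two $\chi$-brackets to coincide.

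The verification is a supersymmetric Adler--Gelfand--Dickey computation. Specializing \eqref{Xvder} gives $\frac{\delta u_i}{\delta L}=D^{i-N-1}$, hence the closed forms $\frac{\delta L(z)}{\delta L}=\sum_{i=1}^{N}z^{N-i}D^{i-N-1}$ and $\frac{\delta L(w)}{\delta L}=\sum_{j=1}^{N}w^{N-j}D^{j-N-1}$; comparing \eqref{Xvder} with \eqref{eq:var-deriv} shows that contracting these series against $\frac{\delta a}{\delta u_i}$, $\frac{\delta b}{\delta u_j}$ reassembles the full operators $\frac{\delta a}{\delta L}$, $\frac{\delta b}{\delta L}$. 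Each series is a truncation, keeping the powers $D^{-1},\dots,D^{-N}$, of a supersymmetric resolvent in the spectral parameter $z$ (resp.\ $w$), whose rationalized form---with even denominator and odd numerator forced by $p(D)=p(z)=p(w)=1$---is precisely the kernel $(D+\chi+w-z)(D^2-\chi^2+w^2-z^2)^{-1}$ in \eqref{eq:LL}. I would then substitute the generating series into \eqref{eq:Lie str}, use the projection identity \eqref{eq:lem1-1} to turn the $(\cdot)_+$ projections into residue pairings, Lemma \ref{lem:res} to transport residues and adjoints, and the elementary relation $\langle A\rangle=\mathrm{Res}(AD^{-1})$ to pass between the coefficient-of-$D^0$ functional $\langle\cdot\rangle$ of \eqref{eq:LL} and the residue in \eqref{SGD}. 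Matching the two sides term by term yields \eqref{eq:LL}, and setting $\chi=0$ together with $\smallint$ recovers \eqref{SGD}.

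The main obstacle I anticipate is the bookkeeping of signs and of the resolvent. Because $z,w,\chi$ are all odd and obey braiding relations of the type $D\chi+\chi D=-2\chi^2$, one must commute the spectral parameters past $D$ inside $(D^2-\chi^2+w^2-z^2)^{-1}$ with care and identify this even operator as the correct square governing the odd numerator $D+\chi+w-z$; the asymmetric sign pattern $-\chi^2+w^2-z^2$ is dictated entirely by this super-factorization. Propagating these signs through the projections $(\cdot)_+$ and the parity-dependent prefactors $s^{m,n}_{i,j},t^{m,n}_{i,j}$ of the master formula is where the real labor lies, while the algebraic skeleton is the same as in the classical Gelfand--Dickey case.
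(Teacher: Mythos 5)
Your proposal is correct and takes essentially the same route as the paper's own proof: the paper likewise pins down the operators $\{u_i\,{}_D\,u_j\}$ by combining Theorem \ref{thm:main1} with smearing of the variational derivatives in \eqref{eq:Lie str}, and then resums them into \eqref{eq:LL} using the resolvent expansion $(D+\chi-z)^{-1}=\sum_{i\geq 0}(-1)^{\frac{i(i+1)}{2}}z^{i}(D+\chi)^{-i-1}$ together with the parity conventions for the odd spectral parameters $z,w$. Running this computation in the reverse direction (checking that \eqref{eq:LL} induces \eqref{SGD} on $\smallint\mathcal{W}_N$ and then invoking uniqueness) is the same argument with the same ingredients, not a genuinely different one.
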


\begin{proof}
By equation \eqref{eq:Lie str} we have for all $a \in \mathcal{W}_N$ and all $j=1,..., N$
\begin{equation}
 \{ \smallint a, \, \smallint u_j \}= \sum_{i=1}^N (-1)^{(a+i)j}  \int \{u_i {}_D u_j\}\frac{\delta a}{\delta u_i} \, .
\end{equation}
This implies that for any $a\in \mathcal{W}_N$, 
$$(-1)^{ja}\{u_i{}_D u_j\}(a)= (-1)^{i+N}\Big( [LD^{i-N-1}a]_+L-L[D^{i-N-1}aL]_+ \Big)_{[N-j]}$$
where the subscript $[N-j]$ means the coefficient of $D^{N-j}$.
For any $a,b \in \mathcal{W}_N$, we have $\{a \, {}_{\chi} \, b \} = \{a \, {}_{\chi+D} \, b \}(1)$. Hence
$$(-1)^{i+N} \{u_i \, {}_{\chi} \, u_j \}= \Big([L(\chi+D)(\chi+D)^{i-N-1}]_+L(D)-L(\chi+D)[(\chi+D)^{i-N-1}L(D)]_+ \Big)_{[N-j]}.
$$

Consider two independent odd indeterminate constants $z$ and $w$ 
and extend the domain of $\chi$-bracket to $\mathcal{W}_N[z,w]$ by the Leibniz rule. Thanks to the identities 
$$(D+\chi-z)^{-1}=\sum_{i=0}^{\infty} (-1)^{\frac{i(i+1)}{2}}z^i (D+\chi)^{-i-1}$$
and
 $$\{a z^m {}_\chi b w^n\}:= (-1)^{(p(a)+1)m}z^m \{ a{}_\chi b\}w^n$$
it follows that 
\begin{equation*}
\begin{aligned}
    & \sum_{i,j=1}^N (-1)^{\frac{(N-i)(N-i+3)}{2}}\{u_i z^{N-i} \, {}_{\chi} \, u_j w^{N-j}\}\\
    & = \sum_{i= 1}^N  (-1)^{\frac{(N-i)(N-i+1)}{2}}\Big( (L(D+\chi) z^{N-i} (D+\chi)^{i-N-1})_+L(D)-L(D+\chi) (z^{N-i}(D+\chi)^{i-N-1} L(D))_+\Big) (w)\\
    & = \Big( \big(L(D+\chi)  (D+\chi-z)^{-1}\big)_+L(D)-L(D+\chi) \big((D+\chi-z)^{-1} L(D)\big)_+\Big)(w).
    \end{aligned}
\end{equation*}
The Lemma follows from the identities 
$$ L(z)= \frac{1-\mathbf{i}}{2}\sum_{j=1}^N u_j (-1)^{(N-j)(N-j+3)/2} (\mathbf{i}z)^{N-j} + \frac{1+\mathbf{i}}{2}\sum_{j=1}^N u_j (-1)^{(N-j)(N-j+3)/2} (-\mathbf{i}z)^{N-j} \, $$
and 
$$(D+\chi+\omega+\mathbf{i}z)^2=(D+\chi+\omega-\mathbf{i}z)^2 \, $$
where $\mathbf{i}$ is the imaginary number in $\mathbf{i}^2=-1.$

\end{proof}

\begin{defn} \label{def:quad_GD}
    \ We denote the odd SUSY PVA bracket \eqref{eq:LL} on $\mathcal{W}_N$ by $\{ \, {}_{\chi} \, \}^q$ since it
    induces the quadratic SUSY Gelfand-Dickey bracket on the space of functionals $\smallint \mathcal{W}_N$.
\end{defn}

\subsection{Integrable hierarchy on $\mathcal{W}_{2n}$} \label{subsec:integrability_even}
In this section we consider the case where the super-differential operator 
\begin{equation*}
 L=D^{2n}+u_1 D^{2n-1}+...+u_{2n}
 \end{equation*}
has even degree $2n$ for $n>2$.
The quadratic bracket \eqref{eq:LL} can be deformed after replacing $L$ by $L+ \epsilon$ where $\epsilon$ is a even constant. Explicitly, the following formula defines an odd SUSY PVA bracket on $\mathcal{W}_{2n}$ compatible with the quadratic one, which we will call the \textit{linear odd SUSY Gelfand-Dickey bracket }
 \begin{equation} \label{eq:LLlin}
    \begin{split}
 \{L(z) \, {}_{\chi} \, L(w)\}^o&= \langle \,  L(D+\chi+w)  (D+\chi+w-z)(D^2-\chi^2+w^2-z^2)
 ^{-1} \, \rangle \\
 & - \langle \, (D+\chi+w-z)(D^2-\chi^2+w^2-z^2)
 ^{-1} L(D+w) \,  \rangle.
 \end{split}
    \end{equation}
Moreover, Theorem \ref{thm:main1} implies that its reduced form to the quotient $ \smallint \mathcal{W}_{2n}$ is
\begin{equation} \label{linred}
    \smallint 
    \{ \, a \, {}_{\chi} \, b  \,\}^o |_{ \chi=0} =(-1)^{p(a)} \int \text{ Res } \, (L \frac{\delta a}{\delta L}-\frac{\delta a}{\delta L} L) \frac{\delta b}{\delta L}.
\end{equation} 
Recall that given an odd SUSY PVA bracket on a differential algebra $\mathcal{P}$ and a functional $\smallint a \in \smallint \mathcal{P}$, the map $\{ \, \smallint a \, {}_{\chi} \, \cdot\, \}|_{ \chi=0}$ defines a derivation of $\mathcal{P}$ of parity $p(a)+1$. This derivation is evolutionary, which means that it supercommutes with $D$. Moreover, if $\mathcal{P}$ is an algebra of differential polynomials generated by elements $\{u_i\}_{i \in I}$ and $d$ is an even or odd evolutionary derivation, we have for all $b \in \mathcal{V}$
$$ \smallint d(b) = \sum_{i \in I} \int \, d(u_i) \frac{\delta b}{\delta u_i} \, .$$
Hence we deduce from \eqref{linred} and \eqref{redqua} that for all $ \smallint a \in \smallint \mathcal{W}_{2n}$
\begin{equation} \label{laxequations}
    \begin{split}
        & \{ \smallint \, a \, {}_{\chi} \, L \, \}^q|_{\chi=0}  = (-1)^{p(a)} ((L \frac{\delta a}{\delta L})_+L-L(\frac{\delta a}{\delta L} L)_+ ) \, ,\\
       &  {\{ \smallint \, a \, {}_{\chi} \, L \, \}^o}|_{\chi=0} = (-1)^{p(a)}[ \, L \, , \,  \frac{\delta a}{\delta L} \, ]_{+} \, , 
    \end{split}
\end{equation}
where  $\{ \, \smallint \, a \, {}_{\chi} \, L \, \}:=\sum_{i=1}^{2n}\{ \, \smallint\, a\, {}_{\chi} \, u_i\, \} D^{2n-i}$ and $[\, A \, , \, B \, ]$ denotes the commutator of two operators $A$ and $B$.

\vspace{5 mm}
We now relate both the quadratic and odd linear Gelfand-Dickey brackets  to a hierarchy of evolutionary derivations on the differential algebra $\mathcal{W}_{2n}$. There exists a unique $n$-th root of $L$ denoted by $L^{1/n}$ with  leading term $D^2$. Note that the $2n$-th root of $L$ does not exist in $\mathcal{W}_{2n}((D^{-1}))$.  Using $L^{1/n}$ we define the even derivations $(d/dt_k)_{k \geq 1}$ of $\mathcal{W}_{2n}$ in the classical way
\begin{equation} \label{evenhie}
    \frac{dL}{dt_k}=[\, L^{k/n}_+ \, , \, L \, ] \, , \, \, \, \, \, \, \, \, \, [ \, d/dt_k \, , \, D \, ]=0 \, .
\end{equation}
Due to the fact that $L$ is even, the nonSUSY picture translates directly to our setting.  Even though these arguments are very standard \cite{Dic91}, we rewrite them here for the unfamiliar reader. 
\begin{lem} \label{lem:density}
Let $k$ and $l$ be positive integers. 
    \begin{enumerate}
        \item [$(1)$] The derivations $d/dt_k$ and $d/dt_l$  commute,
        \item [$(2)$] The residue of $L^{l/n}$ is conserved for $d/dt_k$, i.e.
        $ \int \frac{d}{dt_k} \, \textit{ Res } \, L^{l/n} = 0 .$
    \end{enumerate}
\end{lem}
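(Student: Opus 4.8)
The plan is to transport the classical Gelfand--Dickey arguments of \cite{Dic91} to $\mathcal{W}_{2n}$, the decisive simplification being a parity observation: since $N=2n$ is even and $D$ is odd, each term $u_iD^{2n-i}$ of $L$ is even, so $L$, its $n$-th root $L^{1/n}$ (leading term $D^2$), all powers $L^{m/n}$, and their differential projections $L^{m/n}_+$ are \emph{even} pseudo-differential operators. Consequently every bracket below is an ordinary commutator and the super-signs coming from Lemma \ref{lem:res} and from the super-Jacobi and super-Leibniz rules collapse to $+1$. I would record two elementary facts first: the $d/dt_k$ commute with $D$ by definition in \eqref{evenhie}, hence they commute both with the projections $(\cdot)_{\pm}$ and with the extraction of $\text{Res}$; and since $[L^{k/n},L]=0$ one may rewrite $\frac{dL}{dt_k}=[L^{k/n}_+,L]=-[L^{k/n}_-,L]$.

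The first substantive step, needed for both parts, is the fractional-power flow identity $\frac{d}{dt_k}L^{m/n}=[L^{k/n}_+,L^{m/n}]$ for all $m$. I would prove it by setting $M:=L^{1/n}$ and differentiating $M^n=L$: the relation $\sum_{j=0}^{n-1}M^j\big(\tfrac{d}{dt_k}M\big)M^{n-1-j}=[L^{k/n}_+,M^n]$ is solved by $\tfrac{d}{dt_k}M=[L^{k/n}_+,M]$, since that candidate makes the left-hand side telescope. As $M$ has leading term $D^2$ with constant coefficient, the map $X\mapsto\sum_j M^jXM^{n-1-j}$ multiplies the leading term of $X$ by $n$ and raises its order by $2(n-1)$, so it is injective; this forces the claimed equality, and the general $m$ follows by the Leibniz rule for $d/dt_k$.

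For part $(2)$, writing $P_k:=L^{k/n}_+$, the identity above gives
\[\int\frac{d}{dt_k}\,\text{Res}\,L^{l/n}=\int\text{Res}\,[P_k,L^{l/n}]=\int\text{Res}\big(P_kL^{l/n}\big)-\int\text{Res}\big(L^{l/n}P_k\big),\]
where the first equality uses that $d/dt_k$ commutes with $\text{Res}$ and the last uses evenness to turn the supercommutator into an ordinary one. Because $P_k$ and $L^{l/n}$ are both even, Lemma \ref{lem:res}(3) holds with sign $+1$, so the two integrals coincide and cancel, which is exactly $\int\frac{d}{dt_k}\,\text{Res}\,L^{l/n}=0$. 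For part $(1)$, since the $d/dt_k$ are evolutionary it suffices to show $[\frac{d}{dt_k},\frac{d}{dt_l}]L=0$; expanding by Leibniz and Jacobi yields
\[\Big[\tfrac{d}{dt_k},\tfrac{d}{dt_l}\Big]L=\big[\,\tfrac{d}{dt_k}P_l-\tfrac{d}{dt_l}P_k+[P_l,P_k]\,,\,L\,\big],\]
so everything reduces to the Zakharov--Shabat identity $\frac{d}{dt_k}P_l-\frac{d}{dt_l}P_k=[P_k,P_l]$. This I would obtain by computing $\frac{d}{dt_k}P_l=([P_k,L^{l/n}])_+$, replacing $P_k$ by $-L^{k/n}_-$ using $[L^{k/n},L^{l/n}]=0$, and discarding products of two purely negative-order operators since their $(\cdot)_+$ part vanishes; the same decomposition of $[P_k,P_l]=[L^{k/n}_+,L^{l/n}_+]$ matches the two sides, making the inner operator vanish.

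I expect the main obstacle to be bookkeeping rather than conceptual: one must verify at each step that the operators involved are genuinely even, so that Lemma \ref{lem:res}(3), the super-Jacobi identity, and the super-Leibniz rule for $d/dt_k$ all specialize to their sign-free classical forms, and that $(\cdot)_\pm$ and $\text{Res}$ really do commute with $d/dt_k$ by virtue of $[d/dt_k,D]=0$. Once these parity checks are settled, the computation is the standard Dickey argument carried over verbatim to $\mathcal{W}_{2n}$.
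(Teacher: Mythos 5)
Your proposal is correct and follows essentially the same route as the paper: the fractional-power flow identity $\frac{d}{dt_k}L^{m/n}=[L^{k/n}_+,L^{m/n}]$ proved by differentiating $M^n=L$ and comparing leading terms, the residue-of-a-commutator argument (Lemma \ref{lem:res}(3)) for part $(2)$, and the Jacobi identity plus the vanishing of $([L^{k/n}_-,L^{l/n}_-])_+$ and of $[L^{k/n},L^{l/n}]$ for part $(1)$. The only difference is organizational: you package the projection manipulations as the Zakharov--Shabat identity $\frac{d}{dt_k}P_l-\frac{d}{dt_l}P_k=[P_k,P_l]$, whereas the paper carries out the identical computation inline in a single chain of equalities for $\frac{d^2L}{dt_kdt_l}-\frac{d^2L}{dt_ldt_k}$.
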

\begin{proof} 
    Let $X=\frac{d }{dt_k}(L^{1/n})-[\, L^{k/n}_+ \, , \, L^{1/n} \, ]$. We have 
    $$ \sum_{j=0}^{n-1} L^{j/n}X L^{(n-j-1)/n}= \frac{dL}{dt_k}-[\, L^{k/n}_+ \, , \, L \, ] = 0 \, ,$$
    hence $X=0$ by considering the leading term of the LHS. It follows by the Leibniz rule that
    $$\frac{d }{dt_k}(L^{l/n})=[\, L^{k/n}_+ \, , \, L^{l/n} \, ] \,  \, \, \text{   for all   } \, l \geq 1 \, .$$  \\
    From that equation and the fact that the residue of a commutator is always a total derivative, we get $(2)$.  
    As for the first statement,
    \begin{equation}
        \begin{split}
            \frac{d^2 L}{dt_k dt_l}-\frac{d^2 L}{dt_l dt_k} &= [ \, [\, L^{k/n}_+ \, , \, L^{l/n} \, ]_+ ,\, L \,] + [ \, L^{l/n}_+ \, [\, L^{k/n}_+ \, , \, L \, ] \, ]\\
            &- [ \, [\, L^{l/n}_+ \, , \, L^{k/n} \, ]_+ ,\, L \,] - [ \, L^{k/n}_+ \, [\, L^{l/n}_+ \, , \, L \, ] \, ] \\
            &= [ \, [\, L^{k/n}_+ \, , \, L^{l/n} \, ]_+ - [\, L^{k/n}_+ \, , \, L^{l/n}_+ \, ]\, - [\, L^{l/n}_+ \, , \, L^{k/n} \, ]_+ \, ,\, L \,] \\
            &= [ \, [\, L^{k/n}_+ \, , \, L^{l/n}_{-} \, ]_+ + [\, L^{k/n}_+ \, , \, L^{l/n}_+ \, ]_+\, + [\, L^{k/n}_{-} \, , \, L^{l/n}_{+} \, ]_+ \, ,\, L \,]  \\
            &= [ \, [\, L^{k/n} \, , \, L^{l/n} \, ]_+ - [\, L^{k/n}_{-} \, , \, L^{l/n}_{-} \, ]_+  \, ,\, L \,] =0 \, .
        \end{split}
    \end{equation}
\end{proof}
Now we compute the variational derivatives of the residues of fractional powers of $L$. For any integer $k \geq 1$ we define the odd functionals
\begin{equation}
    \int \, h_k = -\frac{n}{k}\, \int \, \text{res} \, L^{k/n} \, .
\end{equation}
\begin{lem} \label{varderhk}
For any integer $k \geq 1$, we have  
    \begin{equation}
       \frac{\delta h_k}{\delta L}= -(L^{k/n-1})_{-} \, \, .
    \end{equation}
\end{lem}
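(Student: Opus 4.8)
The plan is to read off the formula from the first variation of the functional $\int h_k$, matching it against the pairing that defines the variational derivative $\delta/\delta L$. The key simplification in this subsection is that $N=2n$ is even, so that $L$, its $n$-th root $M:=L^{1/n}$, and every variation $\delta L=\sum_{i=1}^{N}(\delta u_i)D^{N-i}$ are \emph{even}; consequently all pseudo-differential operators below are even and the sign $(-1)^{p(A)p(B)}$ in the cyclicity property of Lemma~\ref{lem:res}(3) is trivial. This lets the computation proceed exactly as in the non-supersymmetric Gelfand-Dickey theory \cite{Dic91}.

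First I would prove the fractional power rule
\begin{equation}
\int \text{res}\big(\delta(L^{k/n})\big)=\frac{k}{n}\int \text{res}\big(L^{k/n-1}\,\delta L\big),
\end{equation}
where $\delta$ is the even derivation attached to the variation. Since $\delta$ is a derivation and $L^{k/n}=M^{k}$, one has $\delta(M^{k})=\sum_{j=0}^{k-1}M^{j}(\delta M)M^{k-1-j}$, and by the (sign-free) cyclicity of $\int\text{res}$ each summand equals $\int\text{res}(M^{k-1}\delta M)$, whence $\int\text{res}(\delta M^{k})=k\int\text{res}(M^{k-1}\delta M)$. Expanding $\delta L=\delta(M^{n})$ in the same way and again cycling gives $\int\text{res}(L^{k/n-1}\delta L)=\int\text{res}(M^{k-n}\delta(M^{n}))=n\int\text{res}(M^{k-1}\delta M)$, valid for every integer $k\ge 1$ even when $k<n$. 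Dividing the two identities gives the power rule.

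Feeding this into $\int h_k=-\tfrac{n}{k}\int\text{res}\,L^{k/n}$ yields
\begin{equation}
\delta\!\int h_k=-\frac{n}{k}\cdot\frac{k}{n}\int\text{res}\big(L^{k/n-1}\,\delta L\big)=-\int\text{res}\big(L^{k/n-1}\,\delta L\big),
\end{equation}
so the normalisation $-\tfrac{n}{k}$ in the definition of $h_k$ is exactly what cancels the factor $k/n$. Because $\delta L$ is a differential operator, $(L^{k/n-1})_+\,\delta L$ is differential and has zero residue, so $L^{k/n-1}$ may be replaced by $(L^{k/n-1})_-$. Comparing with the defining property $\delta\int a=\int\text{res}\big(\tfrac{\delta a}{\delta L}\,\delta L\big)$ of the variational derivative, which is the content of \eqref{Xvder} read through integration by parts, shows that $\tfrac{\delta h_k}{\delta L}$ and $-(L^{k/n-1})_-$ pair identically with every admissible $\delta L$, i.e.\ they agree as variational derivatives, which is the claim.

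The main obstacle is the bookkeeping in this final comparison. One must verify that the residue pairing $\delta L\mapsto\int\text{res}(X\,\delta L)$ attached to an integral operator $X$ reproduces, through \eqref{Xvder}, precisely the componentwise derivatives $\delta a/\delta u_i$ with the correct overall sign $(-1)^{p(a)}$ and $N$-dependence; tracking these signs is where care is needed, although they collapse in the present even case. A point worth stating is that $-(L^{k/n-1})_-$ need not sit in the finite ``Volterra'' shape of \eqref{Xvder}, so the identity is to be understood at the level of variational derivatives, equivalently modulo operators annihilating every differential $\delta L$ of order $\le N-1$ under the pairing, a discrepancy that is invisible to the Gelfand-Dickey bracket.
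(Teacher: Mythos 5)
Your proof is correct and follows essentially the same route as the paper: both derive the fractional power rule $\int\mathrm{Res}\,\delta(L^{k/n})=\tfrac{k}{n}\int\mathrm{Res}\,(L^{k/n-1}\,\delta L)$ from the sign-free cyclicity of $\int\mathrm{Res}$ (the paper by differentiating $(L^{k/n})^n=L^k$ and multiplying by $L^{(1-n)k/n}$, you by two applications of the cyclic power rule through $M=L^{1/n}$, which is the same computation reorganized), and then conclude by matching against the pairing that characterizes $\delta/\delta L$, discarding $(L^{k/n-1})_+$ since it pairs trivially with differential $\delta L$. Your closing remark that the identity is really an identity of variational derivatives (i.e.\ of the first $N$ coefficients, the only ones the Gelfand--Dickey bracket sees) makes explicit a point the paper's proof passes over silently.
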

\begin{proof}
Recall that for any $ \smallint a \in \smallint \mathcal{W}_{2n}$, the variational derivative $\frac{\delta a}{\delta L}$ is uniquely defined by the universal property
\begin{equation} \label{univpro}
\int \, a(u + \epsilon h) = \int \, a(u) \, + \epsilon \int \, \text{ Res } (\sum_{i=1}^{2n} h_i D^{2n-i}) \frac{\delta a}{\delta L} + \text{o}(\epsilon) \, \, ,
\end{equation}
where $(h_i)_{i=1,...,2n} \in (\mathcal{W}_{2n})^{2n}$ is any $2n$-tuple satisfying $p(h_i)=p(u_i)$ and $\epsilon$ is a even constant parameter. In the formula above, differential polynomials are interpretated as functions of the generators $u$.
   Let us fix such a $2n$-tuple $(h_i)_{i=1,...,2n}$ and write $\delta L=\sum_{i=1}^{2n} h_i D^{2n-i}$. We denote $L^{k/n}(u)$ by $X$. There exists a unique pseudo-differential operator $\delta X$ such that 
   $$L^{k/m}(u+\epsilon h)=X+ \epsilon \delta X + o(\epsilon) \, .$$ 
   Taking the differential of both sides in the equality $(L^{k/n})^n=L^k$ we get
    $$ \delta X X^{n-1}+... + X^{n-1} \delta X= \delta L L^{k-1}+...+L^{k-1} \delta L,$$
    hence 
    $$ \delta X +... + X^{n-1} \delta X X^{1-n}= \delta L L^{k-1}X^{1-n}+...+L^{k-1} \delta L X^{1-n}.$$
    Using Lemma \ref{lem:res} we deduce that 
$$ n  \int  \text{ Res } \delta X= k \text{  Res  } \int \delta L L^{k/n-1}$$
which ends the proof since we have
$$  \int \text{ Res } L^{k/n}(u+\epsilon h) - \int \text{ Res } L^{k/n}(u)  \,=\epsilon \int \text{ Res } \delta X+ o(\epsilon) $$
and one can take $a(u)= \text{  Res  } L^{k/n}(u)$ in equation \eqref{univpro}.

\end{proof}

It follows directly from the previous Lemma and equation \eqref{laxequations} that for all $k \geq 1$
\begin{equation}
   \{ \smallint \, h_k \, {}_{\chi} \, L \, \}^q|_{ \chi=0} = [ \, L^{k/n}_+ \, , \, L \, ] \, \ =
    \, {\{   \smallint \, h_{k+n} \, {}_{\chi} \, L \, \}^o}|_{  \chi=0}\, .
\end{equation}
 Hence we have shown that 
 \begin{thm} \label{thm:integrability_even}
     The hierarchy of evolutionary derivations \eqref{evenhie} is biHamiltonian for the two compatible odd SUSY PVA brackets $\{ \, \, {}_{\chi} \, \, \}^q$ and $\{ \, \, {}_{\chi} \, \, \}^o$. Its hamiltonians are the functionals $( \smallint \, h_k)_{k \geq 1}$ which are conserved for each derivation in the hierarchy. They are in involution for the  induced odd linear and quadratic SUSY Gelfand-Dickey brackets.
 \end{thm}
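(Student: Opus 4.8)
The plan is to assemble the statement from the identity displayed immediately before the theorem together with Lemmas \ref{lem:density} and \ref{varderhk}. Nearly all of the analytic work has already been carried out in the preceding lemmas, so what remains is largely to organize it.

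First I would record the bi-Hamiltonian property. The identity
\begin{equation*}
\{ \smallint h_k \, {}_{\chi} \, L \}^q|_{\chi=0} = [L^{k/n}_+, L] = \{ \smallint h_{k+n} \, {}_{\chi} \, L \}^o|_{\chi=0},
\end{equation*}
combined with the fact that an evolutionary derivation of $\mathcal{W}_{2n}$ is completely determined by its action on $L$ (equivalently on the generators $u_i$), yields the equality of derivations
\begin{equation*}
\frac{d}{dt_k} = \{ \smallint h_k \, {}_{\chi} \, \cdot \,\}^q|_{\chi=0} = \{ \smallint h_{k+n} \, {}_{\chi} \, \cdot \,\}^o|_{\chi=0}.
\end{equation*}
This is precisely the Lenard--Magri recursion: it exhibits $d/dt_k$ as the Hamiltonian flow of $\{\, {}_\chi\, \}^q$ with Hamiltonian $\smallint h_k$ and of $\{\, {}_\chi\, \}^o$ with Hamiltonian $\smallint h_{k+n}$, the two brackets being compatible by construction.

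Next I would treat conservation and involution. By Lemma \ref{lem:density}(2) we have $\int \frac{d}{dt_k}\,\text{Res}\,L^{l/n}=0$, hence $\int \frac{d}{dt_k} h_l = 0$ for all $k,l\geq 1$, which is the claimed conservation. Since the degree-$1$ bracket of functionals is $\{\smallint a, \smallint b\} = \smallint\{a\,{}_\chi\, b\}|_{\chi=0}$ and $\{a\,{}_\chi\,\cdot\,\}|_{\chi=0}$ is the associated Hamiltonian derivation, the quadratic involution is immediate:
\begin{equation*}
\{\smallint h_k, \smallint h_l\}^q = \smallint\, \{h_k\,{}_\chi\, h_l\}^q|_{\chi=0} = \smallint\, \tfrac{d}{dt_k}(h_l) = 0.
\end{equation*}
For the linear bracket and indices $m\geq n+1$ the same computation, using $\{\smallint h_m \, {}_\chi\, \cdot\,\}^o|_{\chi=0}=d/dt_{m-n}$, gives $\{\smallint h_m, \smallint h_l\}^o = \smallint\, \tfrac{d}{dt_{m-n}}(h_l)=0$.

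The one point that does not come for free, and which I would flag as the main obstacle, is the range $m \leq n$ for the linear bracket, where the recursion no longer produces a genuine flow of the hierarchy and must be handled directly. Here Lemma \ref{varderhk} gives $\frac{\delta h_m}{\delta L} = -(L^{m/n-1})_-$, and since $[L,L^{m/n-1}]=0$ while $(L^{m/n-1})_+ = 0$ for $m<n$ and $L^{m/n-1}=1$ for $m=n$, equation \eqref{laxequations} forces
\begin{equation*}
\{\smallint h_m \, {}_\chi\, L\}^o|_{\chi=0} = [L,(L^{m/n-1})_-]_+ = -[L,(L^{m/n-1})_+]_+ = 0.
\end{equation*}
Thus the Hamiltonian derivation vanishes identically for $m\leq n$ and involution holds trivially there as well. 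Beyond this degeneration the argument is routine bookkeeping; one need only keep track of the parity signs (each $\smallint h_l$ is odd) when invoking skew-symmetry of the degree-$1$ bracket, but these signs do not affect any of the vanishing conclusions.
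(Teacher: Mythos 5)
Your proof is correct and follows essentially the same route as the paper: the theorem there is obtained by assembling the displayed identity immediately preceding it with Lemma \ref{lem:density} and Lemma \ref{varderhk}, exactly as you do. The one place you go beyond the paper's very terse argument is the involution claim for the linear bracket in the range $m \le n$, where the identification $\{\smallint h_m \,{}_{\chi}\, \cdot\,\}^o|_{\chi=0} = d/dt_{m-n}$ is unavailable; your direct verification that this Hamiltonian derivation vanishes identically (using $[L,L^{m/n-1}]=0$ together with $(L^{m/n-1})_+ = 0$ for $m<n$ and $L^{m/n-1}=1$ for $m=n$) is a correct addition that the paper leaves implicit.
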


\section{Even SUSY PVAs and $\mathcal{W}_{2n+1}$}
\label{sec:even PVA}
We now consider the case where our algebra of differential polynomials is generated by an odd monic differential operator
$$ L(D)= D^{2n+1}+u_1D^{2n}+... +u_{2n+1}.$$
There exists a unique $(2n+1)$-th monic root of $L$ which enables us to construct the hierarchy of evolutionary even derivations of $\mathcal{W}_{2n+1}$
\begin{equation} \label{evenhierarchy} \frac{dL}{dt_k}=[(L^{\frac{2k}{2n+1}})_+, L] \, , \, \, \,  \, [\frac{d}{dt_k} \, , \, D] =0 \, , \, \, \, k \geq 1.
\end{equation}
 These derivations pairwise commute using the same argument as in section \ref{subsec:integrability_even}. They have common conserved densities given by the residues of fractional powers of $L$. However, those corresponding to even operators are trivial since 
 \begin{equation}\label{evendensity}
    \int  \text{ Res } L^{\frac{2q}{2n+1}}=
 \int  \text{ Res } [ \, L^{\frac{q}{2n+1}} \, , \, L^{\frac{q}{2n+1}} \, ] = 0 \, .
 \end{equation}
Hence, the non trivial conserved densities of the hierarchy are even functionals. To find an underlying Hamiltonian structure one must adapt our language since the usual (odd) SUSY PVA brackets map even functionals to odd derivations.

\subsection{Even SUSY brackets}
\label{subsec:even PVA}

Let $\mathcal{R}$ be a $\mathbb{C}[D]$-module. A bilinear map 
\begin{equation}
    [ \ {}_\chi \ ]: \mathcal{R}\times \mathcal{R} \to  \mathbb{C}[\chi]\otimes \mathcal{R}   
\end{equation}
is called an {\it even} $\chi$-bracket  if it satisfies 
\begin{enumerate}[]
        \item (even parity) $p([a{}_{\chi} b]) = p(a)+p(b),$
        \item (sesquilinearity) $[Da{}_{\chi} b]= -\chi [a{}_\chi b],$ $[a{}_{\chi} Db]= (-1)^{p(a)}(D+\chi)[a{}_\chi b]$
\end{enumerate}
for $a,b\in\mathcal{R}.$

\begin{defn}
    A $\mathbb{C}[D]$-module $\mathcal{R}$ is called an even SUSY LCA if it is endowed with an even $\chi$-bracket 
    satisfying the following properties:
    \begin{enumerate}[]
        \item (skewsymmetry) $[ b{}_\chi a ] = (-1)^{p(a)p(b)+1} [a{}_{-D-\chi} b ]_{\leftarrow},$
        \item (Jacobi identity) $[a{}_{\chi}[b{}_\gamma c]]-(-1)^{p(a)p(b)}[b{}_{\gamma}[a{}_\chi c]]=[[a{}_{\chi}b]_{\chi+\gamma}c]$.
    \end{enumerate}
    In the Jacobi identity, we assume $\chi$ and $\gamma$ commute with the brackets. For example, 
    $[\chi^n A{}_{\gamma} B]:= \chi^n[A{}_\gamma B]$.
\end{defn}

For an even SUSY LCA $\mathcal{R}$, the superspace $\int \mathcal{R}:= \mathcal{R}/\partial \mathcal{R}$ is a Lie superalgebra for the induced bracket
\[ [\smallint a \, \, \smallint b] := \smallint [ a{}_{\chi} b]|_{\chi=0} \, .\]

\begin{defn}
    Let $\mathcal{P}$ be a unital SUSY differential algebra with an odd derivation $D$. An even SUSY LCA $(\mathcal{P}, D, \{\ {}_\chi \ \})$ is called an even SUSY PVA if it satisfies the (right) Leibniz rule:
    \begin{equation}
        \{a{}_{\chi}bc\}= \{a{}_\chi b\} c + (-1)^{p(a)p(b)}b\{a{}_\chi c\} \text{ for  } a,b,c\in \mathcal{P}.
    \end{equation}
\end{defn}

\begin{example}
    If $\mathcal{P}$ is the algebra of polynomials generated by an odd variable $u$ and an odd derivation $D$, then for any constant polynomial $Q \in \mathbb{C}[\chi]$ the following defines a even SUSY PVA structure on $\mathcal{P}$
    $$ \{ u \, {}_{\chi} \, u \} = Q(\chi^4)\ \, .$$
    This follows from Theorem $4.12$.
    We were not able to find another example on this specific differential algebra.
\end{example}

Let us define the following super-differential operator
     \begin{equation} \label{eq:(D+chi)sign}
    \{a \, {}_{D}  \, b \}:= \sum_{m\in \mathbb{Z}_+}(-1)^{m(p(a)+p(b))+\frac{m(m+1)}{2}}a_{[m]}b \, D^m 
     \end{equation}
where $a_{[m]}b$ is the $m$-th coefficient in the $\chi$-bracket $\{a{}_{\chi}b\}:=\sum_{m\in \mathbb{Z}_+}\chi^m a_{[m]}b$. Conversely, one retrieves the even $\chi$-bracket from this super-differential operator via the formula
\[\{a{}_{\chi} b\}= \{a{}_{\chi+D} b\} (1).\]

\begin{lem} \label{lem:ev_Lib_skewsymmetry}
    Let $a,b,c$ be elements in an even SUSY PVA $\mathcal{P}$. Then
     \[ \{a{}_{\chi+D}b\} (c) = (-1)^{p(a)p(b)+1} \{b{}_{-\chi-D}a\}_{\leftarrow} \, c.\]
\end{lem}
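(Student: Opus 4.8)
The plan is to reduce the identity to its value at $c=1$, where it becomes precisely the skew-symmetry axiom, by exploiting the following structural fact: an operator of \emph{shift form} $T=\sum_{m}\gamma_m(D+\chi)^m$, with coefficients $\gamma_m\in\mathcal{P}$ standing to the left of the powers of $D+\chi$, is completely determined by the element $T(1)\in\mathbb{C}[\chi]\otimes\mathcal{P}$. Indeed, using $D(1)=0$ and the defining relation $D\chi+\chi D=-2\chi^2$ one checks by induction that $(D+\chi)^m(1)=(-1)^{\frac{m(m-1)}{2}}\chi^m$, so that $T(1)=\sum_m(-1)^{\frac{m(m-1)}{2}}\gamma_m\chi^m$; moving each $\gamma_m$ past $\chi^m$ (which only introduces a sign on homogeneous components) shows that the coefficient of $\chi^m$ in $T(1)$ is a nonzero multiple of $\gamma_m$. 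Hence $T(1)=0$ forces every $\gamma_m$ to vanish, i.e. $T=0$.

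Both sides of the asserted identity will be recognized as shift-form operators applied to $c$. The left-hand side $\{a{}_{\chi+D}b\}(c)=\sum_m(-1)^{m(p(a)+p(b))+\frac{m(m+1)}{2}}(a_{[m]}b)(D+\chi)^m(c)$ is of shift form by definition, and by the relation $\{a{}_{\chi+D}b\}(1)=\{a{}_\chi b\}$ its value at $1$ is $\{a{}_\chi b\}$. For the right-hand side I read $\{b{}_{-\chi-D}a\}_{\leftarrow}c$ as the operator $c\mapsto\sum_n(-D-\chi)^n\big((b_{[n]}a)\,c\big)$; setting $c=1$ gives $\sum_n(-D-\chi)^n(b_{[n]}a)=\{b{}_{-\chi-D}a\}_{\leftarrow}$, so by the even skew-symmetry axiom (applied with $a$ and $b$ interchanged) the value at $1$ of $(-1)^{p(a)p(b)+1}\{b{}_{-\chi-D}a\}_{\leftarrow}c$ is exactly $(-1)^{p(a)p(b)+1}\{b{}_{-\chi-D}a\}_{\leftarrow}=\{a{}_\chi b\}$. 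Thus the two sides agree at $c=1$.

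It remains to verify that the right-hand side is itself of shift form, for only then does the determinacy principle apply to the difference of the two operators. First I would establish the operator identity $(D+\chi)\circ m_\beta=m_{D\beta}+(-1)^{p(\beta)}m_\beta\circ(D+\chi)$ for multiplication $m_\beta$ by a homogeneous $\beta\in\mathcal{P}$, which follows from the Leibniz rule for $D$ together with $\chi\beta=(-1)^{p(\beta)}\beta\chi$. Iterating this identity pushes every factor $(D+\chi)$ appearing in $\sum_n(-D-\chi)^n\circ m_{b_{[n]}a}$ to the right of all multiplications, rewriting the right-hand side as $\sum_m\gamma_m(D+\chi)^m$ with $\gamma_m\in\mathcal{P}$ (and $\chi$-free). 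With both operators now in shift form and agreeing at $c=1$, their difference is a shift-form operator killing $1$, hence identically zero, which is the claim. The main obstacle is precisely this last rewriting: tracking the signs produced by the anticommutation $D\chi+\chi D=-2\chi^2$ through the induction (and likewise in the computation of $(D+\chi)^m(1)$) is delicate, though entirely mechanical, and I expect no conceptual difficulty beyond this bookkeeping.
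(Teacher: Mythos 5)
Your proof is correct, and it ultimately rests on the same mechanism as the paper's: an identity of super-differential operators is verified by comparing their ``symbols'' (values at $1$ after the $\chi$-shift), and the equality of symbols is precisely the skew-symmetry axiom, since the left side evaluates at $c=1$ to $\{a{}_{\chi}b\}$ and the right side to $(-1)^{p(a)p(b)+1}\{b{}_{-\chi-D}a\}_{\leftarrow}$. Where you diverge is in how you justify that symbol comparison suffices. The paper first specializes the claimed identity to $\chi=0$, obtaining an identity of super-differential operators in $D$ alone, then applies the adjoint anti-involution $(AB)^*=(-1)^{p(A)p(B)}B^*A^*$ to both sides — which interchanges the normally-ordered side (coefficients to the left of powers of $D$) with the anti-normally-ordered one — and only then invokes ``equal iff symbols are equal'' together with skew-symmetry. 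You instead stay with the $\chi$-shifted operators throughout: you prove a determinacy principle (a shift-form operator $\sum_m\gamma_m(D+\chi)^m$ is determined by its value at $1$, via $(D+\chi)^m(1)=(-1)^{\frac{m(m-1)}{2}}\chi^m$, which agrees with the paper's sign $(-1)^{m+\frac{m(m+1)}{2}}$), and you bring the right-hand side into shift form by iterating the commutation rule $(D+\chi)\circ m_\beta=m_{D\beta}+(-1)^{p(\beta)}m_\beta\circ(D+\chi)$. This trades the paper's adjoint computation for an explicit normal-ordering induction. Your version is somewhat more self-contained — it needs neither the adjoint lemma nor the (implicit) equivalence between the $\chi=0$ identity and its $\chi$-shifted counterpart — while the paper's version outsources exactly that normal-ordering work to machinery it has already set up. Both arguments consume the same two inputs in the end, the evaluation formula for $(D+\chi)^m(1)$ and the skew-symmetry axiom, so the difference is one of organization rather than substance; your sign bookkeeping at the steps you flag as delicate is indeed the only real labor, and it checks out.
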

\begin{proof}
The statement can be written as an identity between super-differential operators
\begin{equation*}
   (-1)^{p(a)p(b)+1} \sum_{n \in \mathbb{Z}_+} (-\chi-D)^m b_{[m]}a= \sum_{m \in \mathbb{Z}_+} (-1)^{m(a+b)+\frac{m(m+1)}{2}} a_{[m]}b (\chi+D)^m
\end{equation*}
which is equivalent to 
\begin{equation*}
   (-1)^{p(a)p(b)+1} \sum_{m \in \mathbb{Z}_+} (-D)^m b_{[m]}a= \sum_{m \in \mathbb{Z}_+} (-1)^{m(a+b)+\frac{m(m+1)}{2}} a_{[m]}b  \, \, D^m \, .
\end{equation*}
Taking the adjoints of both sides yields 
\begin{equation} \label{eq:lem4.4_proof}
    (-1)^{p(a)p(b)+1} \sum_{m \in \mathbb{Z}_+} (-1)^{m(a+b)+\frac{m(m+1)}{2}} b_{[m]}a \, \,  D^n= \sum_{m \in \mathbb{Z}_+}  (-D)^m a_{[m]}b   \, .
\end{equation}
Note that two differential operators are equal if and only if their symbols are equal. Hence in order to show \eqref{eq:lem4.4_proof} it is enough to see 
\begin{equation*}
\begin{split}
    \sum_{m \in \mathbb{Z}_+}  (-D-\chi)^m (a_{[m]}b) &=
    (-1)^{p(a)p(b)+1} \sum_{m \in \mathbb{Z}_+} (-1)^{m(a+b)+\frac{m(m+1)}{2}} b_{[m]}a \, \,  (D+\chi)^m(1)    \, \\
    &= (-1)^{p(a)p(b)+1} \sum_{n \in \mathbb{Z}_+} (-1)^{m(a+b)+\frac{m(m+1)}{2}} b_{[m]}a \, \, (-1)^{m+\frac{n(n+1)}{2}} \chi^m    \, \\
    &= (-1)^{p(a)p(b)+1} \sum_{m \in \mathbb{Z}_+} \chi^m \,  b_{[m]}a 
\end{split}
\end{equation*}
which holds by the skewsymmetry axiom.
\end{proof}

\begin{prop}[Left Leibniz rule]
    Let $a,b,c$ be elements in an even SUSY PVA $\mathcal{P}.$ Then 
    \[ \{ab{}_{\chi} c\} =(-1)^{p(b)p(c)} \{a{}_{\chi+D}c\} (b) + (-1)^{p(a)(p(b)+p(c))}\{b{}_{\chi+D}c\}( a). \]
\end{prop}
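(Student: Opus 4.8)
The plan is to derive the left Leibniz rule from the three ingredients already established for an even SUSY PVA: the even skew-symmetry axiom, the (right) Leibniz rule of the definition, and the operator form of skew-symmetry recorded in Lemma \ref{lem:ev_Lib_skewsymmetry}. Both sides of the asserted identity live in $\mathbb{C}[\chi]\otimes\mathcal{P}$, so it suffices to match them coefficient by coefficient in $\chi$. Concretely, I would rewrite the right-hand side using Lemma \ref{lem:ev_Lib_skewsymmetry} and the left-hand side using skew-symmetry together with the right Leibniz rule, and then verify that the two resulting expressions agree.

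First I would transform the right-hand side. Applying Lemma \ref{lem:ev_Lib_skewsymmetry} to each summand gives
$$\{a{}_{\chi+D}c\}(b)=(-1)^{p(a)p(c)+1}\sum_{m}(-\chi-D)^m\big((c_{[m]}a)\,b\big),\qquad \{b{}_{\chi+D}c\}(a)=(-1)^{p(b)p(c)+1}\sum_{m}(-\chi-D)^m\big((c_{[m]}b)\,a\big),$$
where in each $\leftarrow$-expression the derivation $D$ differentiates the entire product sitting inside $(-\chi-D)^m$. Substituting these into the right-hand side presents it as an explicit element of $\mathbb{C}[\chi]\otimes\mathcal{P}$ whose two terms carry the prefactors $(-1)^{(p(a)+p(b))p(c)+1}$ and $(-1)^{p(a)p(b)+p(a)p(c)+p(b)p(c)+1}$.

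Next I would expand the left-hand side. The even skew-symmetry axiom yields $\{ab{}_\chi c\}=(-1)^{(p(a)+p(b))p(c)+1}\sum_m(-\chi-D)^m\big(c_{[m]}(ab)\big)$, and the right Leibniz rule read off at the level of $\chi$-coefficients gives $c_{[m]}(ab)=(c_{[m]}a)\,b+(-1)^{p(a)p(c)}a\,(c_{[m]}b)$. The first resulting term already coincides verbatim with the first term obtained above. For the second term I would use that $\mathcal{P}$ is supercommutative and that the even bracket has \emph{even} parity, so $c_{[m]}b$ has parity $p(b)+p(c)$ and therefore $a\,(c_{[m]}b)=(-1)^{p(a)(p(b)+p(c))}(c_{[m]}b)\,a$; this brings the second term into exactly the shape produced by Lemma \ref{lem:ev_Lib_skewsymmetry}.

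The only genuine work is the sign bookkeeping, which is also where the even case honestly differs from its odd counterpart in Section \ref{subsec:SUSY PVA}: the reordering $a\,(c_{[m]}b)=(-1)^{p(a)(p(b)+p(c))}(c_{[m]}b)\,a$ relies on the bracket being even, and one must check that the accumulated exponent $(p(a)+p(b))p(c)+1+p(a)p(c)+p(a)(p(b)+p(c))$ reduces modulo $2$ to $p(a)p(b)+p(a)p(c)+p(b)p(c)+1$, matching the prefactor of the second term on the right-hand side. Once this congruence is confirmed the two expansions agree term by term, establishing the identity. I expect no conceptual obstacle beyond this sign tracking, the one point requiring care being that throughout the $\leftarrow$-notation the operator $(-\chi-D)^m$ acts on the full product and not merely on the bracket coefficient.
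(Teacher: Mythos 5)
Your proposal is correct and follows essentially the same route as the paper's own (very terse) proof: expand via the right Leibniz rule, apply even skew-symmetry, and convert the $\leftarrow$-expressions back using Lemma \ref{lem:ev_Lib_skewsymmetry}, with the reordering $a\,(c_{[m]}b)=(-1)^{p(a)(p(b)+p(c))}(c_{[m]}b)\,a$ supplying the final sign. Your sign bookkeeping checks out, and your closing caution that $(-\chi-D)^m$ acts on the full product in the $\leftarrow$-notation is exactly the right reading of the paper's conventions.
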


\begin{proof}
    By the Leibniz rule, 
    \[ \{c{}_{\chi} ab\}= (-1)^{p(a)p(b)} \{c{}_{\chi} b\} a+ \{c{}_{\chi} a\} b.\]
    By the skewsymmetry and Lemma \ref{lem:ev_Lib_skewsymmetry}, we get the proposition.   
\end{proof}

\begin{lem}
    Suppose that $\mathcal{P}$ is an even SUSY PVA. Then we have 
    \[ \{a^{(m)}{}_{\chi+D} \, b^{(n)} \} c= (-1)^{m(p(a)+p(b)+n)+ \frac{m(m+1)}{2}+np(a)}(\chi+D)^n \{a {}_{\chi+D} b\} (\chi+D)^m c \]
    for $a,b,c\in \mathcal{P}.$
\end{lem}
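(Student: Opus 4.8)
The plan is to factor the claimed identity into two independent one-sided computations---one peeling the derivatives $D^m$ off the left entry $a$, the other peeling the derivatives $D^n$ off the right entry $b$---and then to compose them. Each computation is carried out first as an identity of the super-differential operator $\{\,\cdot\,{}_D\,\cdot\,\}$ of \eqref{eq:(D+chi)sign}, and only afterwards transported to the evaluated form $\{\,\cdot\,{}_{\chi+D}\,\cdot\,\}(c)$.

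First I would establish the left-sided operator identity
\[ \{Da\,{}_D\,b\} = (-1)^{p(a)+p(b)+1}\{a\,{}_D\,b\}\,D. \]
Writing $\{a{}_\chi b\}=\sum_k \chi^k a_{[k]}b$, the left sesquilinearity axiom $\{Da{}_\chi b\}=-\chi\{a{}_\chi b\}$ gives $(Da)_{[j]}b=-a_{[j-1]}b$; feeding this into \eqref{eq:(D+chi)sign} and matching coefficients of $a_{[k]}b\,D^{k+1}$ produces the stated sign after the congruence $\tfrac{(k+1)(k+2)}{2}-\tfrac{k(k+1)}{2}=k+1$. Iterating $m$ times and using $p(D^{j-1}a)=p(a)+j-1$ yields
\[ \{a^{(m)}\,{}_D\,b\} = (-1)^{m(p(a)+p(b))+\frac{m(m+1)}{2}}\{a\,{}_D\,b\}\,D^m. \]
Next I would establish the right-sided operator identity
\[ \{a\,{}_D\,Db\} = (-1)^{p(a)}\,D\,\{a\,{}_D\,b\}. \]
Now the axiom $\{a{}_\chi Db\}=(-1)^{p(a)}(D+\chi)\{a{}_\chi b\}$ hides an application of $D+\chi$ to the whole bracket, so one must commute $D$ past powers of $\chi$: the key formula is $D\chi^k=(-1)^k\chi^k D+((-1)^k-1)\chi^{k+1}$, together with the simplification $((-1)^k-1)+1=(-1)^k$, which gives $a_{[k]}(Db)=(-1)^{p(a)+k}\big(D(a_{[k]}b)-a_{[k-1]}b\big)$. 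Substituting into \eqref{eq:(D+chi)sign} and comparing with the standard-form expansion of $D\{a{}_D b\}$ (via $D\cdot vD^k=(Dv)D^k+(-1)^{p(v)}vD^{k+1}$ with $p(a_{[k]}b)=p(a)+p(b)+k$) makes the two $\chi^{k+1}$-contributions cancel up to the overall sign $(-1)^{p(a)}$. Since that sign does not depend on the parity of the right entry, iterating gives $\{a\,{}_D\,b^{(n)}\}=(-1)^{np(a)}D^n\{a\,{}_D\,b\}$.

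Finally I would transport both identities to the evaluated form and compose. The transport relies on the two elementary facts $(AD)^{D+\chi}(c)=A^{D+\chi}\big((D+\chi)c\big)$ and $(DA)^{D+\chi}(c)=(D+\chi)\big(A^{D+\chi}(c)\big)$, valid for any super-differential operator $A$; the second uses $\chi\alpha=(-1)^{p(\alpha)}\alpha\chi$ and the Leibniz rule for $D$, and is where $D\chi+\chi D=-2\chi^2$ is genuinely needed. Applying the iterated left identity with right entry $b^{(n)}$, so that $p(b^{(n)})=p(b)+n$ enters the sign, and then the iterated right identity to the result, collects the total sign $(-1)^{m(p(a)+p(b)+n)+\frac{m(m+1)}{2}+np(a)}$ and produces $(D+\chi)^n\{a{}_{D+\chi}b\}\,(D+\chi)^m c$, which is exactly the asserted formula.

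The main obstacle is the right-sided step. Whereas the left sesquilinearity axiom is simply multiplication by $-\chi$ and translates transparently into operator language, the right axiom applies $D+\chi$ to the entire $\chi$-bracket, so the parity-dependent signs $(-1)^{\frac{k(k+1)}{2}}$ built into \eqref{eq:(D+chi)sign} must conspire with the commutator $D\chi^k$ in order for the answer to collapse to the single sign $(-1)^{p(a)}$. Verifying that this cancellation is exact, leaving no residual $\chi^{k+1}$ terms, is the crux of the proof.
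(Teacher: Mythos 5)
Your proof is correct and takes essentially the same route as the paper's: both peel the $m$ derivatives off the left entry via sesquilinearity and the definition \eqref{eq:(D+chi)sign}, producing the sign $(-1)^{m(p(a)+p(b)+n)+\frac{m(m+1)}{2}}$ and $(\chi+D)^m$ acting on $c$, and then peel the $n$ derivatives off the right entry, producing $(-1)^{np(a)}$ and $(\chi+D)^n$ on the left. The paper simply records these two equalities without proof; your write-up supplies the underlying one-step identities, sign bookkeeping, and the transport from the operator form $\{\cdot\,{}_D\,\cdot\}$ to the evaluated form, all of which check out.
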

\begin{proof} We use \eqref{eq:(D+chi)sign} and the sesquilinearity. More precisely,
    \begin{equation}
    \begin{aligned}
       \{a^{(m)}{}_{\chi+D}b^{(n)}\}c & = (-1)^{m(p(a)+p(b)+n)+\frac{m(m+1)}{2}}\{a{}_{\chi+D} b^{(n)}\}(\chi+D)^m c \\
       & = (-1)^{m(p(a)+p(b)+n)+ \frac{m(m+1)}{2}+np(a)}(\chi+D)^n \{a {}_{\chi+D} b\} (\chi+D)^m c.
    \end{aligned}
    \end{equation}
 
\end{proof}

Consider the freely generated differential algebra 
\begin{equation} \label{eq:evPVA}
    \mathcal{P}:= \mathbb{C}\{ u_i^{(m)}|i\in I, \, m\in \mathbb{Z}\},
\end{equation}
where $I$ is a finite index set and $u_i^{(m)}:= D^m(u_i).$

\begin{thm}[Master Formula]
    Let $\mathcal{P}$ be an even SUSY PVA in \eqref{eq:evPVA} and $a,b\in \mathcal{P}.$ Then 
    \begin{equation} \label{evensusypvqmqster}
    \begin{aligned}
        \{a{}_{\chi} b\} = \sum_{i,j\in I, \, m,n\in \mathbb{Z}_+} & (-1)^{p(a)p(b)+(m+n)\tilde{\imath}+ (p(b)+ \tilde{\imath}+1)(\tilde{\jmath}+n)+\frac{m(m+1)}{2}}\\
        & \frac{\partial b}{\partial u_j^{(n)}} (\chi+D)^n\{u_i{}_{\chi+D} u_j\}(\chi+D)^m  \frac{\partial a}{\partial u_i^{(m)}}.
    \end{aligned}
    \end{equation}
\end{thm}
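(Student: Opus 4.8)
The plan is to mirror the derivation of the odd master formula \eqref{eq:master}, reducing an arbitrary even bracket $\{a {}_\chi b\}$ to the generator brackets $\{u_i {}_{\chi+D} u_j\}$ in three successive stages: collapse the left argument $a$, then the right argument $b$, and finally strip the total derivatives off the generators by sesquilinearity. Because $\mathcal{P}$ is freely generated, the even $\chi$-bracket is determined by the two Leibniz rules and sesquilinearity together with its values on pairs of generators, so it suffices to verify that the right-hand side of \eqref{evensusypvqmqster} reproduces the bracket computed from these axioms.

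First I would collapse the left argument. Writing a monomial $a$ as a product of generators $u_i^{(m)}$ and iterating the left Leibniz rule (the preceding Proposition) by induction on the number of factors gives, exactly as in the first equality of \eqref{eq:var-deriv-1},
$$\{a {}_\chi b\} = \sum_{i\in I,\, m\in\mathbb{Z}_+}(-1)^{(p(a)+\tilde{\imath}+m)p(b)}\{u_i^{(m)} {}_{\chi+D} b\}\Big(\frac{\partial a}{\partial u_i^{(m)}}\Big);$$
this is legitimate since the left Leibniz rule carries exactly the same signs in the even and odd settings, each peeled factor $u_i^{(m)}$ becoming the active generator of the operator $\{u_i^{(m)} {}_{\chi+D} b\}$ and the remaining factors assembling into $\partial a/\partial u_i^{(m)}$ in its argument.

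Next I would collapse the right argument. Expanding $b$ as a product of generators and iterating the (right) Leibniz rule from the definition of an even SUSY PVA, each factor $u_j^{(n)}$ is extracted as the active generator of $\{u_i^{(m)} {}_{\chi+D} u_j^{(n)}\}$ while the remaining factors are transported to the far left as the coefficient $\partial b/\partial u_j^{(n)}$; the Koszul signs incurred in commuting that coefficient past the bracket and past $\partial a/\partial u_i^{(m)}$ produce the factors $(-1)^{p(a)p(b)+(p(b)+\tilde{\imath}+1)(\tilde{\jmath}+n)}$. Finally I would apply the sesquilinearity Lemma above with $a=u_i$, $b=u_j$ to rewrite each $\{u_i^{(m)} {}_{\chi+D} u_j^{(n)}\}$ as $(-1)^{m(\tilde{\imath}+\tilde{\jmath}+n)+\frac{m(m+1)}{2}+n\tilde{\imath}}(\chi+D)^n\{u_i {}_{\chi+D} u_j\}(\chi+D)^m$. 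Collecting the three batches of signs yields the exponent $p(a)p(b)+(m+n)\tilde{\imath}+(p(b)+\tilde{\imath}+1)(\tilde{\jmath}+n)+\frac{m(m+1)}{2}$ of \eqref{evensusypvqmqster}.

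The individual reductions are routine applications of axioms already established, so the real work — and the step I expect to be the main obstacle — is the sign bookkeeping. The structural skeleton of the argument is identical to the odd case, but the right Leibniz rule and the sesquilinearity axiom of an even SUSY PVA carry parity signs from which the extra $+1$'s present in the odd setting are absent, so the telescoping of the three sign contributions into the single stated exponent differs from that behind \eqref{eq:master}. I would carry out this bookkeeping by comparing term by term against the odd derivation, tracking precisely where each dropped $+1$ alters the outcome, and confirm that the final exponent matches the one in the theorem rather than a superficially similar variant.
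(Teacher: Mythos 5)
Your strategy is exactly the paper's: the published proof consists of a single two-line computation that applies the left and right Leibniz rules to reduce $\{a{}_\chi b\}$ to the generator brackets $\{u_i^{(m)}{}_{\chi+D}u_j^{(n)}\}$ sandwiched between $\partial b/\partial u_j^{(n)}$ and $\partial a/\partial u_i^{(m)}$, then invokes the sesquilinearity lemma to replace these by $(\chi+D)^n\{u_i{}_{\chi+D}u_j\}(\chi+D)^m$, and finally simplifies the sign. Your first and third batches of signs agree with the paper (the paper's Leibniz stage carries the total sign $(-1)^{(p(b)+\tilde{\jmath}+n)(p(a)+\tilde{\jmath}+n)+(p(a)+\tilde{\imath}+m)(\tilde{\jmath}+n)}$, and its sesquilinearity stage contributes $(-1)^{m(\tilde{\imath}+\tilde{\jmath}+n)+\frac{m(m+1)}{2}+n\tilde{\imath}}$, which together do reduce to the exponent in the theorem).

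The one concrete problem is your second batch. Adding your three quoted exponents modulo $2$, the two copies of $p(a)p(b)$ cancel and what remains differs from the exponent in \eqref{evensusypvqmqster} by
\begin{equation*}
p(a)p(b)+(\tilde{\imath}+m)p(b)+m(\tilde{\jmath}+n),
\end{equation*}
which is not zero in general (e.g.\ $m=n=0$, $\tilde{\imath}=p(b)=1$, $p(a)=0$). Comparing with the paper, the correct contribution of the right-Leibniz/Koszul stage, given your first batch, should be $(\tilde{\imath}+m)p(b)+(\tilde{\jmath}+n)(p(b)+\tilde{\imath}+m+1)$ rather than $p(a)p(b)+(\tilde{\jmath}+n)(p(b)+\tilde{\imath}+1)$; note in particular that the exponent $m$ of the inner derivative must enter this stage, since the object being commuted past $\partial b/\partial u_j^{(n)}$ is the bracket with $u_i^{(m)}$, of parity $\tilde{\imath}+m+\tilde{\jmath}+n$, not with $u_i$. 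Since you explicitly defer the bookkeeping and propose to verify it term by term, the method still succeeds, but be aware that this verification is not a confirmation of the signs you wrote down; it requires repairing them.
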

\begin{proof}
    By the right and left Leibniz rules, 
    \begin{equation}
        \begin{aligned}
            \{a{}_{\chi}b\}  & = \sum_{\substack{i,j\in I\\ m,n\in \mathbb{Z}_+}} (-1)^{(b+\tilde{\jmath}+n)(a+\tilde{\jmath}+n)+(a+\tilde{\imath}+m)(\tilde{\jmath}+n)}
            \frac{\partial b}{\partial u_{j}^{(n)}} \{u_i^{(m)}{}_{\chi+D} u_j^{(n)} \} \frac{\partial a}{\partial u_i^{(m)}} \\
            & = \sum_{\substack{i,j\in I\\ m,n\in \mathbb{Z}_+}} (-1)^{(b+\tilde{\jmath}+n)(a+\tilde{\jmath}+n)+(a+\tilde{\imath}+m)(\tilde{\jmath}+n)
            + m (\tilde{\imath}+\tilde{\jmath}+n)+ \frac{m(m+1)}{2}+n\tilde{\imath}}\\
            & \hskip 2cm
            \frac{\partial b}{\partial u_{j}^{(n)}} (\chi+D)^n\{u_i\, {}_{\chi+D}\, u_j \} (\chi+D)^m\frac{\partial a}{\partial u_i^{(m)}}.
        \end{aligned} 
    \end{equation}
    Simplifying the sign, we get the theorem.
\end{proof}

\begin{cor} \label{coro} Let $a,b\in \mathcal{P}$. Then 
    \[  \{ \smallint a \, , \, \smallint b \}|_{\chi=0}= \sum_{i,j\in I} \int (-1)^{p(a)p(b)+ \tilde{\jmath}(p(b)+1) +\tilde{\imath}\tilde{\jmath}} \frac{\delta b}{\delta u_j}\{u_i{}_D u_j \} \frac{\delta a}{ \delta u_i }.  \]
\end{cor}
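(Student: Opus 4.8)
The plan is to read off $\{\smallint a,\smallint b\}|_{\chi=0}=\smallint\{a{}_\chi b\}|_{\chi=0}$ from the Master Formula \eqref{evensusypvqmqster}, and then to integrate by parts so that all the derivatives land on the partial derivatives $\partial b/\partial u_j^{(n)}$ and $\partial a/\partial u_i^{(m)}$, thereby reconstructing the variational derivatives of \eqref{eq:var-deriv}. Specializing \eqref{evensusypvqmqster} at $\chi=0$ replaces each $(\chi+D)$ by $D$ and turns $\{u_i{}_{\chi+D}u_j\}$ into the super-differential operator $\{u_i{}_D u_j\}$ of \eqref{eq:(D+chi)sign}, so that after applying $\smallint$ one has
\begin{equation*}
\int \{a{}_\chi b\}|_{\chi=0}=\sum_{i,j\in I,\ m,n\in\mathbb{Z}_+}(-1)^{S_1}\int \frac{\partial b}{\partial u_j^{(n)}}\,D^n\Big(\{u_i{}_D u_j\}\big(D^m \tfrac{\partial a}{\partial u_i^{(m)}}\big)\Big),
\end{equation*}
where $S_1=p(a)p(b)+(m+n)\tilde{\imath}+(p(b)+\tilde{\imath}+1)(\tilde{\jmath}+n)+\tfrac{m(m+1)}{2}$.

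The first ingredient I would establish is an integration-by-parts rule for the odd derivation $D$. Iterating the stated property $\int D(f)g=-(-1)^{p(f)}\int f\,D(g)$ gives, for all $f,g\in\mathcal{P}$,
\begin{equation*}
\int f\,D^n(g)=(-1)^{\,n+n\,p(f)+\frac{n(n-1)}{2}}\int D^n(f)\,g,
\end{equation*}
the sign depending only on $n$ and $p(f)$ because at the $k$-th transposition the left factor is $D^{k-1}f$, of parity $p(f)+k-1$. I apply this with $f=\partial b/\partial u_j^{(n)}$, of parity $p(b)+\tilde{\jmath}+n$, moving $D^n$ across the single element $g=\{u_i{}_D u_j\}(D^m\partial a/\partial u_i^{(m)})$. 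This contributes a factor $(-1)^{S_2}$ with $S_2\equiv n\,p(b)+n\tilde{\jmath}+\tfrac{n(n-1)}{2}\pmod 2$ and leaves $\int D^n(\partial b/\partial u_j^{(n)})\,\{u_i{}_D u_j\}(D^m\partial a/\partial u_i^{(m)})$.

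It then remains to reassemble the variational derivatives. By \eqref{eq:var-deriv}, summing $D^n(\partial b/\partial u_j^{(n)})$ against the weight $(-1)^{n\tilde{\jmath}+\frac{n(n+1)}{2}}$ produces $\delta b/\delta u_j$, while summing $\{u_i{}_D u_j\}(D^m\partial a/\partial u_i^{(m)})$ against $(-1)^{m\tilde{\imath}+\frac{m(m+1)}{2}}$ produces $\{u_i{}_D u_j\}(\delta a/\delta u_i)$, where the linearity of the fixed operator $\{u_i{}_D u_j\}$ lets me pull it through the sum over $m$. Consequently the corollary reduces to the purely numerical congruence
\begin{equation*}
S_1+S_2\equiv\big(p(a)p(b)+\tilde{\jmath}(p(b)+1)+\tilde{\imath}\tilde{\jmath}\big)+\big(n\tilde{\jmath}+\tfrac{n(n+1)}{2}\big)+\big(m\tilde{\imath}+\tfrac{m(m+1)}{2}\big)\pmod 2.
\end{equation*}
The main obstacle is exactly this mod-$2$ bookkeeping: one expands $S_1+S_2$, discards the even contributions $2n\tilde{\imath}$ and $2np(b)$, and checks that the surviving discrepancy reduces to $n+\tfrac{n(n-1)}{2}-\tfrac{n(n+1)}{2}=0$. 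Once this sign identity is verified, summing over $i,j,m,n$ yields precisely the stated formula for $\{\smallint a,\smallint b\}|_{\chi=0}$.
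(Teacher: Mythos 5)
Your proposal is correct and follows essentially the same route as the paper: specialize the master formula at $\chi=0$, integrate by parts using $\int \frac{\partial b}{\partial u_j^{(n)}}D^n A = (-1)^{n(p(b)+\tilde{\jmath})+\frac{n(n-1)}{2}}\int D^n\frac{\partial b}{\partial u_j^{(n)}}A$, and reassemble the variational derivatives; your sign bookkeeping (in particular $n+\frac{n(n-1)}{2}\equiv\frac{n(n+1)}{2}$) checks out. You simply spell out the steps the paper leaves implicit.
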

\begin{proof}
    Recall $\frac{\delta a}{\delta u_i } = \sum_{m\in \mathbb{Z}_+} (-1)^{m\tilde{\imath}+ \frac{m(m+1)}{2}}$ and 
    \[ \int \frac{\partial b }{\partial u_j^{(n)}}D^n A = \int (-1)^{n(p(b)+\tilde{\jmath})+ \frac{n(n-1)}{2}}D^n \frac{\partial b}{\partial u_j^{(n)}}A\]
    for any $A,b\in \mathcal{P}.$ By the master formula, we get the corollary. 
\end{proof}

\begin{lem} \label{lem:skew}
    For $f(\chi):=\sum_{i\in \mathbb{Z}_+} \chi^i \otimes a_i\in \mathbb{C}[\chi]\otimes \mathcal{P}$, let us denote $f(-\chi-D):= \sum_{i\in \mathbb{Z}_+} (-\chi-D)^i \otimes a_i$.
    Suppose $a,b$ are elements in $\mathcal{P}$ and $g(\chi):= a(\chi+D)^m b\in \mathbb{C}[\chi]\otimes \mathcal{P}.$ Then 
    \[g(-\chi-D)= (-1)^{p(a)p(b)+m(p(a)+p(b))+\frac{m(m+1)}{2}}b(\chi+D)^n a.\]
\end{lem}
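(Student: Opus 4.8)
The plan is to prove the identity by induction on $m$ in steps of two, exploiting the fundamental relation $(\chi+D)^2 = D^2-\chi^2$, which reduces all the non‑commutativity of the odd operator $\chi+D$ to the behaviour of the even operator $\partial := D^2$. Throughout I would write $\mu := \chi+D$ and, for a homogeneous $x\in\mathcal{P}$, abbreviate $g^{(x)}(\chi):=a\,\mu^m(x)$, so that the assertion for a fixed exponent $m$ reads $g^{(b)}(-\chi-D)=c_m\,b\,\mu^m(a)$ where $c_m := (-1)^{p(a)p(b)+m(p(a)+p(b))+\frac{m(m+1)}{2}}$ (I read the exponent on the right-hand side as $m$). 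Recall that $f(-\chi-D)$ denotes the linear operation $\sum_i\chi^i a_i\mapsto\sum_i(-\chi-D)^i(a_i)$, and that $-\chi-D=-\mu$, so $(-\chi-D)^i=(-1)^i\mu^i$.

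First I would record the structural facts that make the two-step recursion run. From $D\chi+\chi D=-2\chi^2$ one computes $\mu^2 = \chi^2+(\chi D+D\chi)+D^2 = \chi^2-2\chi^2+D^2 = \partial-\chi^2$. A short check using the same relation shows that $\partial=D^2$ commutes with multiplication by $\chi$ (hence by $\chi^2$), and that $\partial$ is an \emph{even derivation} of $\mathbb{C}[\chi]\otimes\mathcal{P}$, so $\partial(xv)=(\partial x)v+x(\partial v)$ with no sign (the two cross terms cancel). Consequently $\mu$ commutes with multiplication by the even central element $\chi^2$, and for any $h(\chi)=\sum_i\chi^i c_i$ the substitution satisfies the compatibility $(\chi^2 h)(-\chi-D)=(\partial-\chi^2)\big(h(-\chi-D)\big)$, simply because $(\chi^2 h)(-\chi-D)=\sum_i(-1)^i\mu^{i+2}(c_i)=\mu^2\big(h(-\chi-D)\big)$ and $\mu^2=\partial-\chi^2$.

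The base cases are direct. For $m=0$ one has $g^{(b)}(-\chi-D)=ab=c_0\,ba$, and for $m=1$ a one-line expansion of $a(\chi+D)b$ followed by the substitution gives $(-1)^{p(a)+1}(\chi ab+a'b)$, which matches $c_1\,b(\chi+D)a$ after moving $\chi$ and $D$ past the homogeneous factors. For the inductive step I would start from $\mu^{m+2}(b)=\mu^m(\mu^2 b)=\mu^m(b'')-\chi^2\mu^m(b)$ (using that $\mu$ commutes with $\chi^2$ and $b''=\partial b$), left-multiply by $a$ and move the central $\chi^2$ past $a$ to obtain $g_{m+2}^{(b)}(\chi)=g_m^{(b'')}(\chi)-\chi^2 g_m^{(b)}(\chi)$. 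Applying the substitution and the compatibility above yields $g_{m+2}^{(b)}(-\chi-D)=g_m^{(b'')}(-\chi-D)-(\partial-\chi^2)\big(g_m^{(b)}(-\chi-D)\big)$. I then invoke the induction hypothesis for exponent $m$ on the pairs $(a,b'')$ and $(a,b)$ — and here $p(b'')=p(b)$, so both carry the identical sign $c_m$ — and expand $(\partial-\chi^2)\big(b\,\mu^m(a)\big)=b''\mu^m(a)+b\,\mu^{m+2}(a)$, using that $\partial$ is an even derivation and $\mu^2=\partial-\chi^2$. The $b''\mu^m(a)$ terms cancel, leaving $g_{m+2}^{(b)}(-\chi-D)=-c_m\,b\,\mu^{m+2}(a)$; since the exponents of $c_{m+2}$ and $c_m$ differ by $2(p(a)+p(b))+2m+3$, an odd integer, we have $c_{m+2}=-c_m$, so this is exactly $c_{m+2}\,b\,\mu^{m+2}(a)$, completing the induction.

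The main obstacle is entirely bookkeeping rather than conceptual: one must verify that the substitution $f\mapsto f(-\chi-D)$ interacts correctly both with left multiplication by $a$ and with multiplication by the even central element $\chi^2$, and that every sign incurred by moving the odd operators $D$ and $\chi$ past homogeneous elements of $\mathcal{P}$ is tracked (in particular that $p(D^2 b)=p(b)$, which is what keeps the sign $c_m$ unchanged when passing from $b$ to $b''$). All of these collapse onto the single identity $\mu^2=\partial-\chi^2$ together with $\partial$ being an even derivation, after which the recursion and the elementary sign relation $c_{m+2}=-c_m$ close the argument.
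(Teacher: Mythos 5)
Your proof is correct and follows essentially the same route as the paper: an induction on $m$ driven by the identity $(\chi+D)^2 = D^2-\chi^2$ together with the (even) derivation property of $D^2$, plus the elementary sign bookkeeping (and you correctly read the misprinted exponent $n$ on the right-hand side as $m$). The only cosmetic difference is that the paper first establishes the intermediate form $g(-\chi-D)=(-1)^{mp(a)+\frac{m(m+1)}{2}}(\chi+D)^m(a)\,b$ and then supercommutes $b$ to the front, whereas you run a two-step recursion directly to the final form $c_m\,b(\chi+D)^m a$; both are valid.
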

\begin{proof} Since $D$ is a derivation, we have
    \[g(-\chi-D)= (-1)^{mp(a)+\frac{m(m+1)}{2}}(\chi+D)^m(a) b= (-1)^{p(a)p(b)+m(p(a)+p(b))+\frac{m(m+1)}{2}}b(\chi+D)^m a. \]
    Here, the sign $\frac{m(m+1)}{2}$ comes from the fact that 
    \begin{equation*}
        \begin{aligned}
            & a(\chi+D)b \xrightarrow{\quad \chi \to(-\chi-D)\quad } (-1)^{p(a)}(-\chi-D)(a) b\\
            & a(\chi+D)^2 b= a(-\chi^2+\partial)b \\
            &  \hskip 1cm\xrightarrow{\quad \chi^2 \to (-\chi-D)^2\quad } -(-\chi-D)^2(a)b +aD^2(b)= -(\chi+D)^2(a)b
            \end{aligned}
    \end{equation*}
     Inductively, we get 
    $$ a(\chi+D)^m b \xrightarrow{\quad \chi \to(-\chi-D)\quad } (-1)^{mp(a)+\frac{m(m+1)}{2}}(\chi+D)^m(a) b.$$
\end{proof}

\begin{prop} \label{prop:even_bracket_skew}
    Let $\mathcal{P}$ be the freely generated differential algebra in \eqref{eq:evPVA} endowed with the even $\chi$-bracket $\{\cdot\, {}_\chi \cdot\, \}$. Suppose the $\chi$-bracket is defined by the master formula \eqref{evensusypvqmqster}. If the skewsymmetry axiom holds on the generators, i.e. if  
    \[ \{u_i {}_{\chi} u_j\} = (-1)^{\tilde{\imath}\tilde{\jmath}+1}\{u_j{}_{-\chi-D} u_i\}_{\leftarrow} \]
    for any $i,j\in I$, then it holds for any $a, b \in \mathcal{P}$.
\end{prop}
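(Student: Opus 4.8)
The plan is to promote the generator-level skewsymmetry to the full identity
\[ \{a\,{}_\chi\,b\} = (-1)^{p(a)p(b)+1}\,\{b\,{}_{-\chi-D}\,a\}_{\leftarrow} \]
for all $a,b\in\mathcal{P}$, working entirely inside the Master Formula \eqref{evensusypvqmqster}. Since the bracket is \emph{defined} by that formula, it is already written through the generator brackets $\{u_i\,{}_{\chi+D}\,u_j\}$, and the only structural input available is the hypothesis that skewsymmetry holds on generators. The first observation is that the proof of Lemma \ref{lem:ev_Lib_skewsymmetry} uses skewsymmetry only for the single pair entering the bracket; hence it applies verbatim to generators and yields the operator identity $\{u_i\,{}_{\chi+D}\,u_j\}(c)=(-1)^{\tilde\imath\tilde\jmath+1}\{u_j\,{}_{-\chi-D}\,u_i\}_{\leftarrow}\,c$ for every $c\in\mathcal{P}$. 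This is exactly the replacement I want to perform on each summand of \eqref{evensusypvqmqster}.

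Concretely, I would isolate the core summand $\frac{\partial b}{\partial u_j^{(n)}}(\chi+D)^n\{u_i\,{}_{\chi+D}\,u_j\}(\chi+D)^m\frac{\partial a}{\partial u_i^{(m)}}$, treat the polynomial $(\chi+D)^m\frac{\partial a}{\partial u_i^{(m)}}$ as the argument $c$, and substitute the operator skewsymmetry above (extended $\mathbb{C}[\chi]$-linearly). This flips $\{u_i\,{}_{\chi+D}\,u_j\}$ into the reversed generator bracket $\{u_j\,{}_{-\chi-D}\,u_i\}_{\leftarrow}$. I would then invoke Lemma \ref{lem:skew} to reverse the external powers $(\chi+D)^n$ and $(\chi+D)^m$ under $\chi\mapsto-\chi-D$: this is precisely the tool that turns a left-acting power of $\chi+D$ into a right-acting one, at the cost of the explicit sign $(-1)^{p(a)p(b)+m(p(a)+p(b))+\frac{m(m+1)}{2}}$ recorded there, and it simultaneously transports the derivative factors $\frac{\partial a}{\partial u_i^{(m)}}$ and $\frac{\partial b}{\partial u_j^{(n)}}$ to the opposite side. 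After relabelling the dummy indices $i\leftrightarrow j$ and $m\leftrightarrow n$, each term should reproduce, summand by summand, the Master-Formula expansion of $\{b\,{}_{-\chi-D}\,a\}_{\leftarrow}$, up to the global prefactor $(-1)^{p(a)p(b)+1}$.

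The substance of the argument, and the step I expect to be the main obstacle, is the sign reconciliation. One must check that the total parity exponent assembled from (i) the Master-Formula coefficient in \eqref{evensusypvqmqster}, (ii) the generator-skewsymmetry sign $(-1)^{\tilde\imath\tilde\jmath+1}$, (iii) the two signs coming from Lemma \ref{lem:skew}, and (iv) the Koszul signs picked up when the factors $\frac{\partial b}{\partial u_j^{(n)}}$ and $\frac{\partial a}{\partial u_i^{(m)}}$ (of parities $p(b)+\tilde\jmath+n$ and $p(a)+\tilde\imath+m$) are commuted past the reversed bracket and past one another, collapses modulo $2$ — after the index swap — to the Master-Formula coefficient of $\{b\,{}_{-\chi-D}\,a\}_{\leftarrow}$ times $(-1)^{p(a)p(b)+1}$. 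Because several of these exponents are quadratic in $m$ and $n$ (through the terms $\frac{m(m+1)}{2}$ and $\frac{n(n+1)}{2}$), the reduction cannot be settled by a quick parity count and must be carried out in full. This is the even-bracket analogue of the computation done in the odd SUSY case in \cite{CS2020} (Theorem 2.15, compare Proposition \ref{prop:LCA to PVA}); I expect the two verifications to differ only through the even- versus odd-parity conventions built into the respective $\chi$-brackets.
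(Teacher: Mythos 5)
Your plan is essentially the paper's own proof, just run in the opposite direction: the paper expands $\{a\,{}_{-\chi-D}\,b\}_{\leftarrow}$ via the master formula, uses Lemma \ref{lem:skew} to reverse the positions of $\frac{\partial a}{\partial u_i^{(m)}}$, $(\chi+D)^m$, $(\chi+D)^n$, $\frac{\partial b}{\partial u_j^{(n)}}$, and then applies the generator-level operator skewsymmetry from Lemma \ref{lem:ev_Lib_skewsymmetry} (which, as you correctly observe, only needs skewsymmetry for the single pair involved) before collapsing the signs to $(-1)^{p(a)p(b)+1}\{b\,{}_\chi\,a\}$. The only substantive difference is that you explicitly defer the sign reconciliation, which is where all the actual work lives; the paper carries it out by tabulating the exponents $S_1,\dots,S_4$ and reducing them modulo $2$, so to turn your outline into a complete proof you would still need to perform that computation.
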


\begin{proof}
 Let $a,b\in \mathcal{P}.$ By the master formula and Lemma \ref{lem:skew}, we have 
\begin{equation} \label{eq:proof_skew_1}
\begin{aligned}
    \{a{}_{-\chi-D}b\}_{\leftarrow}= \sum_{i,j\in I, \, m,n\in \mathbb{Z}_+} & (-1)^{p(a)p(b)+(m+n)\tilde{\imath}+ (p(b)+ \tilde{\imath}+1)(\tilde{\jmath}+n)+\frac{m(m+1)}{2}} (-1)^{S_1+S_2+S_3+S_4 +S_5}\\
    & \frac{\partial a}{\partial u_i^{(m)}} (\chi+D)^m\{u_i{}_{-\chi-D} u_j\}(\chi+D)^n  \frac{\partial b}{\partial u_j^{(n)}},
\end{aligned}
\end{equation}
where 
\begin{equation}\label{eq:proof_skew_2}
\begin{aligned}
    & S_1:= (p(a)+\tilde{\imath}+m)(m+\tilde{\imath}+\tilde{\jmath}+p(b)+\tilde{\jmath}),\\
    & S_2:= m(\tilde{\imath}+\tilde{\jmath}+p(b)+\tilde{\jmath})+\frac{m(m+1)}{2},\\
    & S_3:= (\tilde{\imath}+\tilde{\jmath})(p(b)+\tilde{\jmath}),\\
    & S_4:= n (p(b)+\tilde{\jmath}+n)+\frac{n(n+1)}{2}. \\
\end{aligned}
\end{equation}
Here $S_1,$ $S_2,$ $S_3$ and $S_4$ appear when we change the positions of $\frac{\partial a}{\partial u_i^{(m)}},$  $(\chi+D)^m$, $(\chi+D)^n$ and $\frac{\partial b}{\partial u_j^{(n)}}$ using Lemma \ref{lem:skew}. Now, by Lemma \ref{lem:ev_Lib_skewsymmetry}, we have 
\begin{equation}\label{eq:proof_skew_3}
 \{u_i{}_{-\chi-D}u_j\}_{\leftarrow}c=(-1)^{\tilde{\imath}\tilde{\jmath}+1} \{u_j{}_{\chi+D}u_i\}c.
\end{equation}
Applying \eqref{eq:proof_skew_2} and \eqref{eq:proof_skew_3} to \eqref{eq:proof_skew_2}, we get
\begin{equation}
\begin{aligned}
     \{a{}_{-\chi-D}b\}_{\leftarrow}& =  \sum_{i,j\in I, \, m,n\in \mathbb{Z}_+} (-1)^{(m+n)\tilde{\jmath}+ (p(a)+ \tilde{\jmath}+1)(\tilde{\imath}+m)+\frac{n(n+1)}{2}+1} \\
     & \hskip 2cm \frac{\partial a}{\partial u_i^{(m)}} (\chi+D)^m\{u_j{}_{\chi+D} u_i\} (\chi+D)^n  \frac{\partial b}{\partial u_j^{(n)}}\\
     & = (-1)^{p(a)p(b)+1}\{b{}_\chi a\}.
\end{aligned}
\end{equation}
\end{proof}

\begin{prop} \label{prop:even_bracket_Jacobi}
    Let $\mathcal{P}$ be the freely generated differential algebra in \eqref{eq:evPVA} endowed with a even $\chi$-bracket $\{\, {}_\chi \, \}$. Suppose that this $\chi$-bracket satisfies the Leibniz rule.
    If the $\chi$-bracket is skew-symmetric and the Jacobi identity holds for any triple of generators, that is for $i,j,k\in I$,
    \[ \{u_i{}_\chi \{ u_j{}_\gamma u_k\} \}- (-1)^{\tilde{\imath}\tilde{\jmath}}\{u_j{}_\gamma \{ u_i {}_{\chi} u_k\}\} =\{\{u_i{}_{\chi} u_j \}_{\chi+\gamma} u_k\},\]
    then the Jacobi identity holds for any elements in $\mathcal{P}$.
\end{prop}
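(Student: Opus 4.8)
The plan is to follow the standard reduction used for Poisson vertex algebras (as in \cite{BDSK09} and its odd SUSY counterpart \cite{CS2020}, Theorem 2.15) by showing that the trilinear \emph{Jacobiator}
\[ J(a,b,c):= \{a{}_\chi\{b{}_\gamma c\}\}-(-1)^{p(a)p(b)}\{b{}_\gamma\{a{}_\chi c\}\}-\{\{a{}_\chi b\}_{\chi+\gamma}c\} \]
vanishes on all of $\mathcal{P}$ once it vanishes on triples of generators. The Jacobi identity for $(a,b,c)$ is precisely the statement $J(a,b,c)=0$. Since $\mathcal{P}$ is freely generated by the $u_i$ as a differential algebra, every element is a polynomial in the $u_i^{(m)}$, so it suffices to prove that the set of \emph{good triples} $\{(a,b,c)\mid J(a,b,c)=0\}$ is stable under (i) the action of $D$ in each argument and (ii) multiplication in each argument.

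First I would establish the stability under $D$, which promotes the hypothesis from the generators $u_i$ to all differential monomials $u_i^{(m)}$. Using the even sesquilinearity axioms $\{Da{}_\chi b\}=-\chi\{a{}_\chi b\}$ and $\{a{}_\chi Db\}=(-1)^{p(a)}(D+\chi)\{a{}_\chi b\}$, together with the conventions for moving the odd indeterminates $\chi,\gamma$ through the bracket, a direct computation shows that $J(Da,b,c)$, $J(a,Db,c)$ and $J(a,b,Dc)$ equal $J(a,b,c)$ multiplied by an explicit sign and by $\chi$, by $\gamma$, and by the operator $D+\chi+\gamma$, respectively. Consequently $J(a,b,c)=0$ forces the vanishing of all three, and by induction on the orders $m,n,l$ we obtain $J(u_i^{(m)},u_j^{(n)},u_k^{(l)})=0$ for all generators and all their derivatives.

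Next I would show that $J$ is a triderivation, i.e. that it obeys a Leibniz rule in each slot. The third slot is the easiest: because the right Leibniz rule $\{a{}_\chi bc\}=\{a{}_\chi b\}c+(-1)^{p(a)p(b)}b\{a{}_\chi c\}$ carries no $(D+\chi)$-shift, expanding $J(a,b,cc')$ by applying it twice reorganizes directly into $J(a,b,c)$ and $J(a,b,c')$ multiplied by monomials in $c,c'$. For the first slot one must instead use the left Leibniz rule proved above, and here the computation is genuinely delicate: expanding $\{(aa'){}_\chi\,\cdot\,\}$ introduces $(D+\chi)$-shifts acting on the second factor and on the inner bracket, and one must verify that all these shifted contributions reassemble into $J(a,b,c)$ and $J(a',b,c)$ with the correct even SUSY signs. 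The second slot can then be obtained either by the same type of left Leibniz computation or, more economically, by invoking the skew-symmetry already extended to all of $\mathcal{P}$ in Proposition \ref{prop:even_bracket_skew}, which exchanges the roles of the first two arguments.

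Finally, the two stability properties close the argument: starting from the vanishing of $J$ on all $(u_i^{(m)},u_j^{(n)},u_k^{(l)})$ obtained in the first step, the triderivation property propagates it to arbitrary products, hence to every triple $(a,b,c)$ of differential polynomials by induction on total degree. I expect the main obstacle to be the first-slot Leibniz rule for $J$: unlike the odd SUSY case of \cite{CS2020}, every parity sign is shifted because the bracket is now even, so the $(D+\chi)$-shifts produced by the left Leibniz rule must be tracked with care to confirm that the extra terms cancel against those coming from the inner bracket $\{b{}_\gamma c\}$ and from $\{\{a{}_\chi b\}_{\chi+\gamma}c\}$.
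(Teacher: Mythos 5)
Your strategy --- reduce the Jacobi identity to generators by proving that the Jacobiator $J(a,b,c)$ is compatible with $D$ in each slot and is a (super)derivation in each slot, then induct --- is a genuinely different organization from the paper's proof. The paper instead expands all three terms of the Jacobi identity directly through the master formula \eqref{evensusypvqmqster}, reducing everything to nine families of expressions built from $\{u_i{}_{\chi+D}u_j\}$, nested generator brackets, and partial derivatives of $a,b,c$, and then verifies four explicit cancellation identities by tracking the signs $S_1,\dots,S_9$, $T_1,T_2,T_3$, $\alpha_1,\dots,\alpha_7$. Your route would be more conceptual if completed, but it does not avoid the sign bookkeeping; it only relocates it, and that bookkeeping is precisely what is missing from the proposal.

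The concrete gap is that every step at which the even-parity signs could fail is asserted rather than computed, and at least one asserted identity is false under the paper's stated conventions. Take $J(a,Db,c)$. Sesquilinearity gives $\{Db{}_\gamma\,\cdot\,\}=-\gamma\{b{}_\gamma\,\cdot\,\}$ and $\{a{}_\chi Db\}=(-1)^{p(a)}(D+\chi)\{a{}_\chi b\}$, and the paper declares that $\chi,\gamma$ commute with the brackets with no extra sign. One then finds
\[
J(a,Db,c)=-\gamma\{a{}_\chi\{b{}_\gamma c\}\}+(-1)^{p(a)(p(b)+1)}\gamma\{b{}_\gamma\{a{}_\chi c\}\}+(-1)^{p(a)}\gamma\{\{a{}_\chi b\}_{\chi+\gamma}c\},
\]
whose three coefficients are proportional to those of $J(a,b,c)$ only when $p(a)=0$; for odd $a$ this is not $\pm\gamma\,J(a,b,c)$, so the vanishing of $J(a,b,c)$ does not propagate to $J(a,Db,c)$ as claimed. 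Repairing this requires either refining the commutation convention for the odd indeterminates (inserting a sign such as $(-1)^{p(a)}$ when $\gamma$ crosses the first argument) or running the induction on a differently normalized Jacobiator --- in either case the identities you assert must first be corrected. The same sensitivity afflicts the first-slot Leibniz property of $J$, which you rightly single out as the hard point but leave unverified: the left Leibniz rule produces operators $\{a{}_{\chi+D}c\}$ evaluated on the other factor, so relating $J(aa',b,c)$ to $J(a,b,c)$ and $J(a',b,c)$ needs an unevaluated ($\chi+D$-shifted) form of the generator Jacobi identity, not merely the evaluated hypothesis. As written, the proposal is a plan whose critical computations --- essentially the entire content of the paper's proof --- remain to be carried out.
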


\begin{proof}
    We aim to show that 
    \begin{equation} \label{eq:proof_Jacobi}
        \{a{}_\chi \{ b{}_\gamma c\} \}- (-1)^{p(a)p(b)}\{b {}_\gamma \{ a {}_{\chi} c\}\} =\{\{a{}_{\chi} b \}_{\chi+\gamma} c\}.
    \end{equation} 
    for $a,b,c\in \mathcal{P}.$
    The first term in \eqref{eq:proof_Jacobi} is 
        \begin{align}
            & \{a{}_{\chi}\{b{}_{\gamma} c\}\}   = \sum_{i,j\in I, \, m,n\in \mathbb{Z}_+}  (-1)^{S_1} \Big\{\, a\, {}_{\chi} \, \frac{\partial c}{\partial u_j^{(n)}} (\chi+D)^n\{u_i{}_{\chi+D} u_j\} (\chi+D)^m  \frac{\partial b}{\partial u_i^{(m)}} \, \Big\} \nonumber\\
            & =  \label{eq:proof_Jacobi-1} \sum_{i,j\in I, \, m,n\in \mathbb{Z}_+}  (-1)^{S_1} \bigg[ \Big\{a{}_\chi \frac{\partial c}{\partial u_j^{(n)}}\Big\}(\gamma+D)^n\{u_i {}_{\gamma+D}u_j\} (\gamma+D)^m \frac{\partial b}{\partial u_i^{(m)}} \\
             &  \label{eq:proof_Jacobi-2} \hskip 2.5cm + (-1)^{S_2} \frac{\partial c}{\partial u_j^{(n)}}(\chi+\gamma+D)^n \{u_i{}_{\chi+\gamma+D}u_j\} (\chi+\gamma+D)^m \Big\{ a{}_{\chi} \frac{\partial b}{\partial u_i^{(m)}}\Big\} \\ 
             & \label{eq:proof_Jacobi-3}\hskip 2.5cm +(-1)^{S_3}  \frac{\partial c}{\partial u_j^{(n)}} (\chi+\gamma+D)^n \{ a{}_{\chi+D}\{u_i{}_{\gamma+D} u_j \}\}(\gamma+D)^m \frac{\partial b}{\partial u_i^{(m)}}\bigg]  .
        \end{align}
    where $S_1=p(b)p(c)+(m+n)\tilde{\imath}+ (p(c)+ \tilde{\imath}+1)(\tilde{\jmath}+n)+\frac{m(m+1)}{2}$, $S_2=p(a)(p(c)+m+\tilde{\imath})$ and $S_3=p(a)(p(c)+\tilde{\jmath}).$ We expand \eqref{eq:proof_Jacobi-1}, \eqref{eq:proof_Jacobi-2} and \eqref{eq:proof_Jacobi-3}  using the formula
    \begin{align}
         & \Big\{a{}_\chi \frac{\partial c}{\partial u_j^{(n)}}\Big\}
         = \sum_{k,r\in I, \, l,q\in \mathbb{Z}_+}(-1)^{S_4}\frac{\partial^2 c}{\partial u_r^{(q)} \partial u_j^{(n)}}(\chi+D)^q\{u_k{}_{\chi+D} u_r\} (\chi+D)^l \frac{\partial a}{\partial u_k^{(l)}},\\
         & \Big\{a{}_\chi \frac{\partial b}{\partial u_i^{(m)}}\Big\}
         = \sum_{k,r\in I, \, l,q\in \mathbb{Z}_+}(-1)^{S_5}\frac{\partial^2 b}{\partial u_r^{(q)} \partial u_i^{(m)}}(\chi+D)^q\{u_k{}_{\chi+D} u_r\} (\chi+D)^l \frac{\partial a}{\partial u_k^{(l)}},\\
         & \{a{}_{\chi+D}\{ u_i{}_{\gamma+D} u_j\}\}=\sum_{k\in I, l\in \mathbb{Z}_+}(-1)^{S_6}  \{u_k{}_{\chi+D}\{u_i{}_{\gamma+D}u_j\}\} (\chi+D)^l \frac{\partial a}{\partial u_k^{(l)}},
    \end{align}
    where $S_4=p(a)(p(c)+\tilde{\jmath}+n)+(q+l)\tilde{k}+ ((p(c)+\tilde{\jmath}+n)+ \tilde{k}+1)(\tilde{r}+q)+\frac{l(l+1)}{2}$, $S_4=p(a)(p(b)+\tilde{\imath}+m)+(q+l)\tilde{k}+ ((p(b)+\tilde{\imath}+m)+ \tilde{k}+1)(\tilde{r}+q)+\frac{l(l+1)}{2}$ and $S_6=(\tilde{\imath}+\tilde{\jmath})(p(a)+\tilde{k}+l)+l(\tilde{k}+\tilde{\imath}+\tilde{\jmath})+\frac{l(l+1)}{2}$.
    Then \eqref{eq:proof_Jacobi-1} is
    \begin{equation}
        \begin{aligned}
            \sum_{\substack{i,j,k,r\in I\\m,n,l,q\in \mathbb{Z}_+}}(-1)^{S_7} \frac{\partial^2 c}{\partial u_r^{(q)} \partial u_j^{(n)}}& \Big[ (\chi+D_1)^q\{u_k{}_{\chi+D_1} u_r\}(\chi+D_1)^l \frac{\partial a}{\partial u_k^{(l)}}\Big]_1 \\
            & \hskip 1cm \Big[ (\gamma+D_2)^n\{u_i{}_{\gamma+D_2} u_j\}(\gamma+D_2)^m \frac{\partial b}{\partial u_i^{(m)}}\Big]_2
        \end{aligned}
    \end{equation}
    where $D_1$ and $D_2$ are the derivation $D$ which act in the first and second parentheses $[ \ ]_1$ and $[\ ]_2$, respectively and $
            S_7= (p(a)+p(b))p(c)+p(c)  (\tilde{\jmath}+n+\tilde{r}+q)+\tilde{\imath}(\tilde{\jmath}+m)+\tilde{\jmath}(\tilde{r}+q+1)
             +\tilde{k}(l+\tilde{r})+(\tilde{r}+q)(n+1)+n
             +\frac{m(m+1)}{2}+\frac{l(l+1)}{2}.$
    Similarly, \eqref{eq:proof_Jacobi-2} is
     \begin{equation}
        \begin{aligned}
            \sum_{\substack{i,j,k,r\in I\\m,n,l,q\in \mathbb{Z}_+}} (-1)^{S_8} \frac{\partial c}{\partial u_j^{(n)}}&  (\chi+\gamma+D)^n\{u_i{}_{\chi+\gamma+D} u_j\}(\chi+\gamma+D)^m \frac{\partial^2 b}{\partial u_r^{(q)}\partial u_i^{(m)}} \\
            & \hskip 1cm  (\chi+D)^q\{u_k{}_{\chi+D} u_r\}(\chi+D)^l \frac{\partial a}{\partial u_k^{(l)}}
        \end{aligned}
    \end{equation}
    where $D_1$ and $D_2$ are the derivation $D$ which only act in the first and second parentheses $[ \ ]_1$ and $[\ ]_2$, respectively and $
            S_8= p(a)p(c)+p(a)p(b)+p(b)p(c)+p(c)(\tilde{\jmath}+ n)+ p(b)(\tilde{r}+q)+\tilde{\imath}(\tilde{\jmath}+m+ \tilde{r}+q)+\tilde{\jmath}+ \tilde{k}(\tilde{r}+l)+(\tilde{r}+q)(m+1)+n+\frac{m(m+1)}{2}+\frac{l(l+1)}{2}.$
    Finally, \eqref{eq:proof_Jacobi-3} is
         \begin{equation}
        \begin{aligned}
            \sum_{\substack{i,j,k\in I\\m,n,l\in \mathbb{Z}_+}} (-1)^{S_9} \frac{\partial c}{\partial u_j^{(n)}}\{ u_k{}_{\chi+D}\{u_i{}_{\gamma+D} u_j\}\} \Big[(\chi+D_1)^l\frac{\partial a}{\partial u_k^{(l)}}\Big]_1 \Big[(\gamma+D_2)^m\frac{\partial b}{\partial u_i^{(m)}}\Big]_2,
        \end{aligned}
        \end{equation}
    where $D_1$ and $D_2$ is the derivation $D$ which only act in the parentheses $[\ ]_1$ and $[\ ]_2$ and $S_9=p(a)p(c)+p(b)p(c)+p(a)\tilde{\imath}+p(c)(\tilde{\jmath}+n)+\tilde{\imath}\tilde{\jmath}+\tilde{\imath}m+\tilde{k}l+\tilde{\imath}\tilde{k}+\tilde{\jmath}\tilde{k}+\tilde{\jmath}+n+\frac{m(m+1)}{2}+\frac{l(l+1)}{2}.$

    The second term in \eqref{eq:proof_Jacobi} is 
        \begin{align}
            & \nonumber \{b{}_\gamma\{a{}_\chi c\}\}\\
            & \label{eq:proof_Jacobi-4} = \sum_{\substack{i,j,k,r\in I \\m,n,l,q\in \mathbb{Z}_+}}(-1)^{T_1}\frac{\partial^2 c}{\partial u_j^{(n)}\partial u_r^{(q)}} \Big[ (\gamma+D_1)^n\{u_i{}_{\gamma+D_1} u_j\}(\gamma+D_1)^m \frac{\partial b}{\partial u_i^{(m)}}\Big]_1 \\
            &\nonumber \hskip 6cm \Big[ (\chi+D_2)^q\{u_k{}_{\chi+D_2} u_r\}(\chi+D_2)^l \frac{\partial a}{\partial u_k^{(l)}}\Big]_2\\
            & \label{eq:proof_Jacobi-5} + \sum_{\substack{i,j,k,r\in I \\m,n,l,q\in \mathbb{Z}_+}}
            (-1)^{T_2}\frac{\partial c}{\partial u_j^{(n)}}(\chi+\gamma+D)^n \{u_i{}_{\chi+\gamma+D}u_j\} (\chi+\gamma+D)^m \\
            & \nonumber \hskip 5cm \frac{\partial^2 a}{\partial u_r^{(q)}\partial u_i^{(m)}}(\gamma+D)^q \{u_k{}_{\gamma+D}u_r\}(\gamma+D)^l \frac{\partial b}{\partial u_k^{(l)}}\\
            &\label{eq:proof_Jacobi-6} + \sum_{\substack{i,j,k\in I \\m,n,l\in \mathbb{Z}_+}}
            (-1)^{T_3}\frac{\partial c}{\partial u_j^{(n)}}\{u_i{}_{\gamma+D}\{u_k{}_{\chi+D}u_j\}\} \Big[(\gamma+D_1)^m\frac{\partial b}{\partial u_i^{(m)}}\Big]_1 \Big[(\chi+D_2)^l\frac{\partial a}{\partial u_k^{(l)}}\Big]_2,
        \end{align}
    where $T_1=p(a)p(c)+p(b)p(c)+p(b)(\tilde{r}+q)+p(c)(\tilde{r}+q+\tilde{\jmath}+n)+\tilde{r}(\tilde{k}+\tilde{\jmath}+n+1)+\tilde{\jmath}(\tilde{\imath}+q)+ nq+q+n+\tilde{\jmath}+\tilde{\imath}m+ \tilde{k}l+\frac{m(m+1)}{2}+\frac{n(n+1)}{2},$ $T_2=p(a)p(b)+p(b)p(c)+p(a)p(c)+p(c)(\tilde{\jmath}+n)+p(a)(\tilde{r}+q)+ \tilde{\imath}(\tilde{\jmath}+m \tilde{r}+q)+\tilde{\jmath}+\tilde{k}(\tilde{r}+q)+\tilde{p}m+mq+ n + \tilde{r}+q+\frac{m(m+1)}{2}+\frac{l(l+1)}{2}$
    and $T_3=p(b)p(c)+p(a)p(c)+p(b)\tilde{k}+c\tilde{\jmath}+p(c)n+ \tilde{k}\tilde{\jmath}+\tilde{k}l+\tilde{\jmath}+ \tilde{\imath}\tilde{k}+\tilde{\imath}\tilde{\jmath}+\tilde{\imath}m+n+\frac{m(m+1)}{2}+\frac{l(l+1)}{2}$.
    The last term in \eqref{eq:proof_Jacobi} can be expanded as follows:
    \begin{align}
        & \nonumber \{\{a{}_\chi b\}_{\chi+\gamma} c\} \\
        & \nonumber =\sum_{\substack{i,j\in I\\ m,n\in \mathbb{Z}_+}}(-1)^{\alpha_1}\Big\{ \frac{\partial b}{\partial u_j^{(n)}} (\chi+D)^n \{ u_i{}_{\chi+D} u_j\} (\chi+D)^m \frac{\partial a}{\partial u_i^{(m)}}\, {}_{\chi+\gamma} \, c \Big\}\\
        & \label{eq:proof_Jacobi-7} = \sum_{\substack{i,j\in I\\ m,n\in \mathbb{Z}_+}}(-1)^{\alpha_1+\alpha_2}\Big\{ \frac{\partial b}{\partial u_j^{(n)}}{}_{\chi+\gamma+D} c\Big\} (\chi+D)^n\{u_i{}_{\chi+D}u_j\} (\chi+D)^m \frac{\partial a}{\partial u_i^{(m)}}  \\
        & \label{eq:proof_Jacobi-8} + \sum_{\substack{i,j\in I\\ m,n\in \mathbb{Z}_+}}(-1)^{\alpha_1+\alpha_3} \Big\{ (\chi+D)^m \frac{\partial a}{\partial u_i^{(m)}}\, {}_{\chi+\gamma+D}\,  c\Big\} \{u_i{}_{-\chi-D}u_j\}_{\leftarrow}(\gamma+D)^{n}\frac{\partial b}{\partial u_j^{(n)}}\\
        & \label{eq:proof_Jacobi-9} + \sum_{\substack{i,j\in I\\ m,n\in \mathbb{Z}_+}}(-1)^{\alpha_1+\alpha_4} \{ \{ u_i{}_{\chi+\gamma+D}u_j \}_{\chi+\gamma+D} c\} \Big[ (\chi+D_1)^m \frac{\partial a}{\partial u_i^{(m)}} \Big]_1 \Big[ (\gamma+D_2)^n \frac{\partial a}{\partial u_j^{(n)}} \Big]_2,
    \end{align}
    where $\alpha_1= p(a)p(b)+(m+n)\tilde{\imath}+ (p(b)+ \tilde{\imath}+1)(\tilde{\jmath}+n)+\frac{m(m+1)}{2}$, $\alpha_2=p(c)(p(a)+n+\tilde{\jmath}),$ $\alpha_3=n(p(a)+p(c)+\tilde{\jmath})+\frac{n(n+1)}{2}+(\tilde{\imath}+\tilde{\jmath})(p(a)+p(c)+\tilde{\imath})$ and $\alpha_4=n(p(a)+p(c)+\tilde{\jmath})+\frac{n(n+1)}{2}+(p(a)+\tilde{\imath})p(c)$.

    Using the master formula once more, \eqref{eq:proof_Jacobi-7} is 
    \begin{equation}
    \begin{aligned}
        \sum_{\substack{i,j,k,r\in I\\ m,n,l,q\in \mathbb{Z}_+}} (-1)^{\alpha_5}   \frac{\partial c}{\partial u_k^{(l)}}(\chi+\gamma+D)^l & \{u_r{}_{\chi+\gamma+D} u_k\}(\chi+\gamma+D)^q  \\ \frac{\partial^2 b}{\partial u_j^{(n)} \partial u_r^{(q)}} 
        & (\chi+D)^n \{u_i{}_{\chi+D} u_j \}(\chi+D)^m \frac{\partial a}{\partial u_i^{(m)}},
    \end{aligned}
    \end{equation}
    where $\alpha_5=p(a)p(b)+p(a)p(c)+p(b)p(c)+p(c)(\tilde{k}+l)+p(b)(n+\tilde{\jmath})+\tilde{r}(n+\tilde{\jmath}+\tilde{k}+q) + \tilde{\imath}(m+\tilde{\jmath})+ q (\tilde{\jmath}+n)$.  The next term \eqref{eq:proof_Jacobi-8} is 
    \begin{equation}
    \begin{aligned}
        \sum_{\substack{i,j,k,r\in I\\ m,n,l,q\in \mathbb{Z}_+}} (-1)^{\alpha_6} \frac{\partial c}{\partial u_k^{(l)}}(\chi+\gamma+D)^l & \{u_r{}_{\chi+\gamma+D}u_k\}(\chi+\gamma+D)^q \\
        \frac{\partial^2 a}{\partial u_i^{(m)}u_r^{(q)}}& (\gamma+D)^m \{ u_j{}_{\gamma+D} u_i\}(\gamma+D)^{n} \frac{\partial b}{\partial u_j^{(n)}},
    \end{aligned}
    \end{equation}
    where $\alpha_6=p(a)p(c)+p(b)p(c)+\tilde{r}(\tilde{\imath}+ m+ \tilde{k}+ q)+ \tilde{\jmath}(\tilde{\imath}+ n)+ q(\tilde{\imath}+m)+p(a)(\tilde{\imath}+m)+ p(c)(\tilde{k}+l)+ \tilde{\imath}+\tilde{k}+ m+l+ \frac{q(q+1)}{2}+\frac{n(n+1)}{2}+1$.
    Here, we used \[\frac{\partial}{\partial u_r^{(q)}}\frac{\partial}{\partial u_i^{(m)}}= (-1)^{(\tilde{r}+q)(\tilde{\imath}+m)}\frac{\partial}{\partial u_i^{(m)}}\frac{\partial}{\partial u_r^{(q)}}\]
    The last term 
    \eqref{eq:proof_Jacobi-9} is 
    \begin{equation}
    \begin{aligned}
        \sum_{\substack{i,j,k\in I\\ m,n,l\in \mathbb{Z}_+}} (-1)^{\alpha_7}&  \frac{\partial c}{\partial u_k^{(l)}}(\chi+\gamma+D)^l \{ \{ u_i{}_{\chi+D}u_j \}{}_{\chi+\gamma+D} u_k\} \\
        & \Big[ (\chi+D_1)^m \frac{\partial a}{\partial u_i^{(m)}}\Big]_1 \Big[(\gamma+D_2)^n \frac{\partial b}{\partial u_j^{(n)}} \Big]_2,
    \end{aligned}
    \end{equation}
    where $\alpha_7=p(a)p(c)+p(b)p(c)+p(c)(\tilde{k}+l)+ \tilde{\imath}m + \tilde{\imath}\tilde{k}+\tilde{\imath}\tilde{\jmath}+ \tilde{\jmath} \tilde{k}+\tilde{\jmath}n + \tilde{k}+l + \frac{m(m+1)}{2}+ \frac{n(n+1)}{2}$.
    By tracking all the signs, we get
\begin{equation}
    \begin{array}{ccccc}
        &  \eqref{eq:proof_Jacobi-1}& -& (-1)^{p(a)p(b)} \eqref{eq:proof_Jacobi-4} & =0  \\
        &   \eqref{eq:proof_Jacobi-2} &+ &0 &=  \eqref{eq:proof_Jacobi-7}\\
        &0 & -&(-1)^{p(a)p(b)}\eqref{eq:proof_Jacobi-2}& = \eqref{eq:proof_Jacobi-8}\\
        & \eqref{eq:proof_Jacobi-3}& - &(-1)^{p(a)p(b)}\eqref{eq:proof_Jacobi-6}&=\eqref{eq:proof_Jacobi-9},
    \end{array}
\end{equation}
which implies the Jacobi identity.
\end{proof}

By Proposition \ref{prop:even_bracket_skew} and \ref{prop:even_bracket_Jacobi}, we conclude the following Theorem.

\begin{thm} \label{thm:even_bracket_skew_Jacobi}
    Let $\mathcal{P}$ be the freely generated differential algebra in \eqref{eq:evPVA} endowed with an even $\chi$-bracket $\{\, {}_\chi \, \}$ given by the master formula. If the skewsymmetry axiom and the Jacobi identity hold for the generators then they hold for any elements in $\mathcal{P}$. 
\end{thm}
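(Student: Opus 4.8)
The plan is to deduce the theorem directly from Propositions \ref{prop:even_bracket_skew} and \ref{prop:even_bracket_Jacobi}, which between them have already reduced each of the two axioms to its restriction to the generators. Before invoking these, I would first record that a bracket presented by the master formula \eqref{evensusypvqmqster} automatically satisfies sesquilinearity and the (right) Leibniz rule: both are structural consequences of the way the formula is assembled from the building blocks $\{u_i {}_{\chi+D} u_j\}$, the partial derivatives $\partial/\partial u_i^{(m)}$, and the shift operators $(\chi+D)^m$. Thus the two hypotheses appearing in the propositions but not in the theorem statement—namely that the bracket comes from the master formula and obeys Leibniz—are both in force.

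First I would apply Proposition \ref{prop:even_bracket_skew}. Since the skewsymmetry axiom is assumed to hold for every pair of generators $(u_i, u_j)$ and the bracket is given by the master formula, that proposition upgrades skewsymmetry to all $a, b \in \mathcal{P}$. This step is a pure appeal, requiring no fresh computation.

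Second, with full skewsymmetry now available, I would apply Proposition \ref{prop:even_bracket_Jacobi}. Its hypotheses are exactly: the bracket satisfies the Leibniz rule (established above), it is skew-symmetric (just obtained for all elements), and the Jacobi identity holds for every triple of generators (assumed). The proposition then delivers the Jacobi identity for all triples in $\mathcal{P}$, completing the argument. The order here is essential—skewsymmetry must be secured first, because the Jacobi proposition silently presupposes it through Lemma \ref{lem:ev_Lib_skewsymmetry}.

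I do not expect the theorem itself to present any obstacle: it is an immediate corollary of the two preceding propositions, and the only point demanding care is verifying that the master-formula presentation genuinely furnishes the structural axioms (sesquilinearity and Leibniz) that those propositions assume. The real difficulty in this circle of results lies entirely in Proposition \ref{prop:even_bracket_Jacobi}, whose proof must expand all three terms of the Jacobi identity through the master formula and match the resulting sign exponents $S_1, \dots, S_9$, $T_1, T_2, T_3$, and $\alpha_1, \dots, \alpha_7$ term by term; once that bookkeeping and the skewsymmetry extension are in hand, the theorem follows at once.
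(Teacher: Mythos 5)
Your proposal matches the paper exactly: the theorem is stated there as an immediate consequence of Propositions \ref{prop:even_bracket_skew} and \ref{prop:even_bracket_Jacobi}, applied in the same order (skewsymmetry first, then Jacobi). Your additional remark that the master-formula presentation supplies the Leibniz and sesquilinearity hypotheses is a reasonable point of care but does not change the argument.
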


\subsection{Linear affine even SUSY PVA} \label{subsec:even linear}
In the rest of this paper, we let $\mathfrak{g}= \mathfrak{gl}(n+1|n)$, $N=2n+1$. 
Recall Example \ref{ex:gl(n+1|n)} that $\mathfrak{g}$ is spanned by $\{e_{ij}|i,j\in I\}$, where $p(e_{ij})=i+j.$ Consider the differential superalgebra \[\mathcal{V}(\bar{\mathfrak{g}})=S(\mathbb{C}[D]\otimes \bar{\mathfrak{g}}).\]

\begin{defn}
    The linear affine even SUSY PVA  bracket on the differential algebra $\mathcal{V}(\bar{\mathfrak{g}})$ is given by 
    \begin{equation}
        \{ \bar{e}_{1i} \, {}_{\chi} \,\bar{e}_{i \, 2n+1} \} = (-1)^{i+1} \, 
    \end{equation}
    for $i=1,2,\cdots, 2n+1.$ All other brackets between generators are set as being trivial. 
\end{defn}
It follows from Theorem \ref{thm:even_bracket_skew_Jacobi} that this bracket is well-defined. Moreover, from Corollary \ref{coro} we have for all $a, b \in \mathcal{V}(\bar{\mathfrak{g}})$, 
\begin{equation}
\begin{split}
    \smallint \, \{ a \, {}_{\chi} \, b \}|_{ \chi = 0} &= \int \sum_{i=1}^{2n+1} (-1)^{(a+1)i} \Big( \frac{\delta a}{\delta \bar{e}_{i\, 2n+1}} \frac{\delta b}{\delta \bar{e}_{1i}}+(-1)^{i+1} \frac{\delta a}{\delta \bar{e}_{1i}} \frac{\delta b}{\delta \bar{e}_{i\, 2n+1}} \Big) \\
    &= (-1)^{p(a)+1} \Big ( \frac{\delta a}{\delta q} | (e_{1\, 2n+1} \otimes 1)  \frac{\delta b}{\delta q}+ \frac{\delta b}{\delta q} (e_{1\, 2n+1}\otimes 1) \Big),
\end{split}
\end{equation}
where $\frac{\delta a}{\delta q}= \sum_{i,j\in I} e_{ij}\otimes \frac{\delta a}{\delta \, \bar{e}_{ij}}.$
Finally, by Lemma \ref{eq:2.3lem_main}, we deduce that for all $a,b \in \mathcal{W}_{2n+1}$  
\begin{equation}
    \int \, \{ a \, {}_{\chi} \, b \}|_{ \chi = 0} = (-1)^{p(a)+1} \int \,  \big(\  \text{ Res  } \nabla_a \, | \,  (e_{1 \, 2n+1}\otimes 1)  \text{  Res  } \nabla_b + \text{ Res } \nabla_b  \, (e_{1 \, 2n+1}\otimes 1) \  \big) \, .
\end{equation}

\begin{lem} 
    Let $a,b$ be two elements in $\mathcal{W}_{2n+1}$. We have 
    \begin{equation}\label{eq:lem1_state_1}
        \int \text{ Res } (L \frac{\delta a}{\delta L} \frac{\delta b}{\delta L} )=(-1)^{p(a)} \int \sum_{i=1}^{2n+1} \text{ Res }((\frac{\delta a}{\delta L})^*D^{i-1}) \text{ Res }(b_{N-i}(\frac{\delta b}{\delta L})^*)
    \end{equation}
\end{lem}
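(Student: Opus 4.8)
The plan is to mirror the proof of Lemma~\ref{lem:2.2-main1}, the only differences being that the middle factor $L$ and the outer $+$-projection on the left-hand side are absent here. Throughout, write $X:=\frac{\delta a}{\delta L}$ and $Y:=\frac{\delta b}{\delta L}$; these are purely integral pseudo-differential operators of parities $p(X)=p(a)+N+1$ and $p(Y)=p(b)+N+1$, and recall that $N=2n+1$ is odd while $p(L)=N$. First I would specialize the reconstruction identity \eqref{eq:lem1-1} to $Z=X$, which is precisely \eqref{eq:lem1_Xf}, multiply both sides on the left by $Y$, take residues, and use Lemma~\ref{lem:res}(2) to pull the function $\text{ Res }(D^{i-1}X)$ outside the residue. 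This gives
$$\sum_{i=1}^N (-1)^{i+p(D^{i-1}X)}\,\text{ Res }\big(Y(LD^{-i})_+\big)\,\text{ Res }(D^{i-1}X)=(-1)^{N+p(a)+1}\,\text{ Res }\big(Y(LX)_+\big).$$

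Next I would identify the right-hand side with the quantity in \eqref{eq:lem1_state_1}. Since $Y=Y_-$ is integral, the relation $\text{ Res }(AB_+)=\text{ Res }(A_-B)$ already used in the proof of Lemma~\ref{lem:2.2-main2} gives $\text{ Res }(Y(LX)_+)=\text{ Res }(YLX)$; applying $\int$ and the cyclicity property Lemma~\ref{lem:res}(3) then yields $\int\text{ Res }(Y(LX)_+)=(-1)^{p(Y)p(LX)}\int\text{ Res }(LXY)$, where $p(LX)=p(a)+1$. Integrating the displayed identity and solving for $\int\text{ Res }(LXY)$ therefore expresses it as a signed sum over $i$ of the products $\int\text{ Res }\big(Y(LD^{-i})_+\big)\,\text{ Res }(D^{i-1}X)$.

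Finally I would convert each factor into the adjoint form appearing in the statement. For the first factor, Lemma~\ref{lem:res}(1), the adjoint rule \eqref{eq:adjoing comm}, and $(D^{i-1})^*=(-1)^{\frac{(i-1)i}{2}}D^{i-1}$ rewrite $\text{ Res }(D^{i-1}X)$ as a signed multiple of $\text{ Res }(X^*D^{i-1})$. For the second factor, the same tools together with $(A_+)^*=(A^*)_+$ and $(LD^{-i})^*=(-1)^{Ni}(D^{-i})^*L^*$ rewrite $\text{ Res }\big(Y(LD^{-i})_+\big)$ as a signed multiple of $\text{ Res }(b_{N-i}Y^*)$, where $b_{N-i}=(D^{-i}L^*)_+$. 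Collecting the accumulated signs then reduces the global prefactor to $(-1)^{p(a)}$, which is exactly \eqref{eq:lem1_state_1}.

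The main obstacle is purely the sign bookkeeping of this last step: one must track the parities $p(X)=p(a)+N+1$, $p(Y)=p(b)+N+1$ and $p(LD^{-i})=N+i$, the Koszul signs coming from \eqref{eq:adjoing comm} and from cyclicity, and the adjoint signs $\frac{(i-1)i}{2}$, using that $N$ is odd, and verify that everything telescopes to the single sign $(-1)^{p(a)}$. Conceptually there is no new input beyond \eqref{eq:lem1-1} and Lemma~\ref{lem:res}: the statement is the $L\mapsto 1$ specialization of Lemma~\ref{lem:2.2-main1}.
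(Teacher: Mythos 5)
Your proposal is correct and follows essentially the same route as the paper: specialize the reconstruction identity \eqref{eq:lem1-1} to $Z=\frac{\delta a}{\delta L}$, multiply on the left by $\frac{\delta b}{\delta L}$, take residues using Lemma \ref{lem:res}(2), and then pass to adjoints via Lemma \ref{lem:res}(1),(3) and $(A_+)^*=(A^*)_+$. The paper is equally terse about the final sign bookkeeping, so nothing is missing relative to its own argument.
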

\begin{proof}
 For  $j\in \mathbb{Z}$ and $v\in \mathcal{W}_{2n+1}$, the pseudo-differential operator $Z=D^{-j} v$ satisfies $Res(D^{i-1}Z)=\delta_{ij}(-1)^{p(v)}v$.
Hence for any pseudo-differential operator $Z$, we have 
\begin{equation} \label{eq:lem1}
 \sum_{i=1}^{2n+1} (-1)^{i}(LD^{-i})_+ \text{ Res } (D^{i-1} Z)=(-1)^{p(Z)} (LZ_-)_+.
\end{equation}
   In particular, if we substitute $Z=Z_-=\frac{\delta a}{\delta L}$ then
\begin{equation} \label{eq:lem1_Xf2}
  \sum_{i=1}^{2n+1} (-1)^{i}(LD^{-i})_+ \text{ Res } (D^{i-1}\frac{\delta a}{\delta L})=(-1)^{p(a)} (L \frac{\delta a}{\delta L})_+.
\end{equation}
After multiplying by $\frac{\delta b}{\delta L}$ on the left to the both sides of \eqref{eq:lem1_Xf2} and taking residues, we get 
$$ \sum_{i=1}^{2n+1} \text{ Res }( \frac{\delta b}{\delta L}(LD^{-i})_+) \text{ Res } (D^{i-1}\frac{\delta a}{\delta L})= \text{ Res } ( \frac{\delta b}{\delta L} L \frac{\delta a}{\delta L}),$$  
using (2) of Lemma \ref{lem:res}.
Finally, by 
(1) and (3) of Lemma \ref{lem:res}, we have 
$$ \int \sum_{i=1}^{2n+1}  \text{ Res }((\frac{\delta a}{\delta L})^*D^{i-1}) \text{ Res } ((D^{-i}L^*)_+(\frac{\delta b}{\delta L})^* )=(-1)^{p(a)}\int \text{ Res } (L \frac{\delta a}{\delta L} \frac{\delta b}{\delta L} ) \, . $$
 \end{proof}

\begin{lem}\label{lem:2.2-main22}
    Let $a,b$ be two elements in $\mathcal{W}_{2n+1}$. We have 
    \begin{equation}
        \int \text{ Res } (\frac{\delta a}{\delta L} L \frac{\delta b}{\delta L} )=(-1)^{p(a)+p(b)} \int \sum_{i=1}^{2n+1} (-1)^{i(p(a)+p(b)+1)} \text{ Res }(b_{N-i}(\frac{\delta a}{\delta L})^*) \text{ Res } ((\frac{\delta b}{\delta L})^*D^{i-1}).
    \end{equation}
   
\end{lem}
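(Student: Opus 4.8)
The plan is to follow the proof of the preceding lemma \eqref{eq:lem1_state_1} almost verbatim, exploiting one simplification: here $L$ sits \emph{between} the two variational derivatives, which is exactly the shape produced by the natural manipulation, so (unlike the preceding lemma, where $L$ stood on the far left) no cyclic reordering is needed and the identity will hold already before passing to $\smallint \mathcal{W}_{2n+1}$. Concretely, I would start from the master identity \eqref{eq:lem1} specialized to the purely integral operator $Z=Z_-=\frac{\delta b}{\delta L}$, namely
$$\sum_{i=1}^{2n+1}(-1)^i (LD^{-i})_+\,\mathrm{Res}\big(D^{i-1}\tfrac{\delta b}{\delta L}\big)=(-1)^{p(b)}\big(L\tfrac{\delta b}{\delta L}\big)_+ ,$$
and then multiply both sides on the left by $\frac{\delta a}{\delta L}$ and take residues.

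First I would dispose of the right-hand side: since $\frac{\delta a}{\delta L}$ and $\big(L\frac{\delta b}{\delta L}\big)_-$ are both purely integral, their product has order $\le -2$, so $\mathrm{Res}\big(\frac{\delta a}{\delta L}\big(L\frac{\delta b}{\delta L}\big)_-\big)=0$ and hence $\mathrm{Res}\big(\frac{\delta a}{\delta L}\big(L\frac{\delta b}{\delta L}\big)_+\big)=\mathrm{Res}\big(\frac{\delta a}{\delta L}\,L\,\frac{\delta b}{\delta L}\big)$, which is precisely the left-hand side of the lemma. On the left-hand side of the displayed identity I would pull each scalar $\mathrm{Res}\big(D^{i-1}\frac{\delta b}{\delta L}\big)$ out of the residue using Lemma \ref{lem:res}(2), and then rewrite the two surviving factors in adjoint form with Lemma \ref{lem:res}(1) and the adjoint rule \eqref{eq:adjoing comm}. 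Using $\big((LD^{-i})_+\big)^*=\big((LD^{-i})^*\big)_+$ together with $b_{N-i}=(D^{-i}L^*)_+$ turns $\mathrm{Res}\big(\frac{\delta a}{\delta L}(LD^{-i})_+\big)$ into a multiple of $\mathrm{Res}\big(b_{N-i}(\frac{\delta a}{\delta L})^*\big)$, and $(D^{i-1})^*=(-1)^{i(i-1)/2}D^{i-1}$ turns $\mathrm{Res}\big(D^{i-1}\frac{\delta b}{\delta L}\big)$ into a multiple of $\mathrm{Res}\big((\frac{\delta b}{\delta L})^*D^{i-1}\big)$.

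The only genuine work is the sign bookkeeping, which I would organize around the parities $p(\frac{\delta a}{\delta L})=p(a)$ and $p(\frac{\delta b}{\delta L})=p(b)$ (as $N+1$ is even), $p(L)=N$ odd, and $p(\mathrm{Res}(A))=p(A)+1$. Collecting the $(-1)^i$ from \eqref{eq:lem1}, the factor $(-1)^{p(b)+i}$ from Lemma \ref{lem:res}(2), the adjoint sign $p(a)(1+i)+i+\tfrac{i(i-1)}{2}$ from $(LD^{-i})_+$, the adjoint sign $(i-1)p(b)+\tfrac{i(i-1)}{2}$ from $D^{i-1}$, and the $(-1)^{p(b)}$ from the right-hand side above, the two terms $\tfrac{i(i-1)}{2}$ cancel in pairs and the total exponent reduces modulo $2$ to $p(a)+p(b)+i(p(a)+p(b)+1)$, reproducing the claimed coefficient exactly; applying $\int$ to both sides then gives the stated form. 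The main (and essentially only) obstacle is keeping these $i$-dependent signs straight. As an independent cross-check, I would note that one can instead invoke Lemma \ref{lem:res}(3) to write $\int\mathrm{Res}(\frac{\delta a}{\delta L}L\frac{\delta b}{\delta L})=\pm\int\mathrm{Res}(L\frac{\delta b}{\delta L}\frac{\delta a}{\delta L})$, apply the preceding lemma \eqref{eq:lem1_state_1} with $a$ and $b$ interchanged, and then commute the two resulting residues past each other; this yields the same sign and confirms the computation.
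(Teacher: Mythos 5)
Your proposal is correct and follows essentially the same route as the paper's own proof: specialize \eqref{eq:lem1} to $Z=\frac{\delta b}{\delta L}$, multiply on the left by $\frac{\delta a}{\delta L}$, use $\mathrm{Res}(AB_+)=\mathrm{Res}(A_-B)$ to identify the left-hand side, and finish by taking adjoints inside the residues. Your sign bookkeeping (total exponent $p(a)+p(b)+i(p(a)+p(b)+1)$) checks out, and your observation that, unlike Lemma \ref{lem:2.2-main1}, no cyclic reordering under $\smallint$ is needed here is accurate.
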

\begin{proof}
 Applying \eqref{eq:lem1} in the preceding proof for $Z=\frac{\delta b}{\delta L}$, we have 
 \begin{equation} \label{eq:lem2-3}
     (-1)^{p(b)}(L \frac{\delta b}{\delta L})_+= \sum_{i=1}^{2n+1}(-1)^i (LD^{-i})_+ \text{ Res } (D^{i-1} \frac{\delta b}{\delta L}).
 \end{equation}
 On \eqref{eq:lem2-3}, by multiplying $\frac{\delta a}{\delta L}$ on the left, taking the residue and applying \eqref{eq:lem2-2}, one gets 
  \begin{equation*}
         \int  \text{ Res } ( \frac{\delta a}{\delta L} L  \frac{\delta b}{\delta L}) =  \sum_{i=1}^{2n+1}  \int \text{ Res } ( \frac{\delta a}{\delta L} (LD^{-i})_+) \text{ Res } ( D^{i-1} \frac{\delta b}{\delta L} ).
  \end{equation*}
  One ends the proof by taking adjoints inside the residues.
\end{proof}

\begin{lem} \label{keylem}
    For all $a,b \in \mathcal{W}_{2n+1}$, 
    \begin{equation}
      \int \, \{ a \, {}_{\chi} \, b \}|_{\chi = 0} =  \int \text{ Res } (L \frac{\delta a}{\delta L} \frac{\delta b}{\delta L} +(-1)^{p(a)+1} \frac{\delta a}{\delta L} L \frac{\delta b}{\delta L}).  
    \end{equation}
\end{lem}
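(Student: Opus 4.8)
The plan is to take as starting point the last displayed identity preceding the statement, namely $\int\{a{}_\chi b\}|_{\chi=0}=(-1)^{p(a)+1}\int\big(\text{Res}\nabla_a\,|\,(e_{1\,2n+1}\otimes 1)\text{Res}\nabla_b+\text{Res}\nabla_b\,(e_{1\,2n+1}\otimes 1)\big)$, and to evaluate the two pairings on the right by comparing them with equation \eqref{eq:lem1_state_1} and Lemma \ref{lem:2.2-main22}. Since $e_{1\,2n+1}=e_{1N}$ with $N=2n+1$, left multiplication by $e_{1N}\otimes 1$ moves the last row of $\text{Res}\nabla_b$ into the first row, so that $(e_{1N}\otimes 1)\text{Res}\nabla_b=\sum_l e_{1l}\otimes(\text{Res}\nabla_b)_{Nl}$, while right multiplication moves the first column into the last column, so that $\text{Res}\nabla_b\,(e_{1N}\otimes 1)=\sum_k(-1)^{(N+1)p((\text{Res}\nabla_b)_{k1})}e_{kN}\otimes(\text{Res}\nabla_b)_{k1}$.

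Next I would feed these into the bilinear form \eqref{eq:bilinear}. Its selection rule $\delta_{il}\delta_{jk}$ forces only the entries $(\text{Res}\nabla_a)_{l1}$ (for the first product) and $(\text{Res}\nabla_a)_{Nk}$ (for the second product) to contribute, so that each pairing collapses to a single sum over one index. Substituting the explicit entry formula \eqref{eq:matrix differential} and using that $(\text{Res}\nabla_a)_{i1}=c^a_{i1}\,\text{Res}(b_{N-i}(\tfrac{\delta a}{\delta L})^*)$ and $(\text{Res}\nabla_a)_{Nj}=c^a_{Nj}\,b_0\,\text{Res}((\tfrac{\delta a}{\delta L})^* D^{j-1})$, where $c^a_{ij}$ denotes the scalar sign in \eqref{eq:matrix differential} and $b_0=(-1)^{N(N+1)/2}$ is a scalar, each pairing becomes exactly one of the sums $\sum_i\text{Res}(b_{N-i}(\tfrac{\delta a}{\delta L})^*)\,\text{Res}((\tfrac{\delta b}{\delta L})^* D^{i-1})$ and $\sum_i\text{Res}((\tfrac{\delta a}{\delta L})^* D^{i-1})\,\text{Res}(b_{N-i}(\tfrac{\delta b}{\delta L})^*)$ appearing on the right-hand sides of Lemma \ref{lem:2.2-main22} and of the lemma containing \eqref{eq:lem1_state_1}. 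Consequently $(\text{Res}\nabla_a\,|\,(e_{1N}\otimes 1)\text{Res}\nabla_b)$ reproduces $\int\text{Res}(\tfrac{\delta a}{\delta L}L\tfrac{\delta b}{\delta L})$ and $(\text{Res}\nabla_a\,|\,\text{Res}\nabla_b\,(e_{1N}\otimes 1))$ reproduces $\int\text{Res}(L\tfrac{\delta a}{\delta L}\tfrac{\delta b}{\delta L})$, each up to a global sign; summing the two contributions and multiplying by the overall prefactor $(-1)^{p(a)+1}$ will then give the claimed right-hand side.

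The only real obstacle is the reconciliation of all the sign prefactors. One must track the signs $c^a_{ij}$ and $c^b_{ij}$ from \eqref{eq:matrix differential}, the factor $b_0$, the sign $(-1)^{p(\cdot)(k+l)}(-1)^{i+1}$ from \eqref{eq:bilinear}, the parities $p((\text{Res}\nabla_\bullet)_{ij})=p(\bullet)+i+j+1$ of the residue entries (so that, in particular, $\text{Res}\nabla_\bullet$ has uniform total parity $p(\bullet)$, which I would use as an internal consistency check), the sign produced by moving the first column into the last column, and finally the $i$-dependent prefactors $1$ and $(-1)^{i(p(a)+p(b)+1)}$ together with the global factors $(-1)^{p(a)}$ and $(-1)^{p(a)+p(b)}$ already carried by equation \eqref{eq:lem1_state_1} and Lemma \ref{lem:2.2-main22}. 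I would perform this verification for each of the two terms separately, grouping the $i$-dependent signs so that the sums can be compared termwise with the lemma sums, and check that after multiplication by $(-1)^{p(a)+1}$ the first pairing contributes $(-1)^{p(a)+1}\int\text{Res}(\tfrac{\delta a}{\delta L}L\tfrac{\delta b}{\delta L})$ and the second contributes $\int\text{Res}(L\tfrac{\delta a}{\delta L}\tfrac{\delta b}{\delta L})$, whose sum is precisely $\int\text{Res}\big(L\tfrac{\delta a}{\delta L}\tfrac{\delta b}{\delta L}+(-1)^{p(a)+1}\tfrac{\delta a}{\delta L}L\tfrac{\delta b}{\delta L}\big)$.
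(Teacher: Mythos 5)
Your proposal follows essentially the same route as the paper: it starts from the displayed reduction $\int\{a\,{}_\chi\,b\}|_{\chi=0}=(-1)^{p(a)+1}\int\big(\mathrm{Res}\,\nabla_a\,|\,(e_{1\,2n+1}\otimes 1)\mathrm{Res}\,\nabla_b+\mathrm{Res}\,\nabla_b\,(e_{1\,2n+1}\otimes 1)\big)$, evaluates the two pairings by direct computation using \eqref{eq:matrix differential} and \eqref{eq:bilinear}, and identifies the resulting sums with the right-hand sides of \eqref{eq:lem1_state_1} and Lemma \ref{lem:2.2-main22}. The paper's own proof does exactly this (asserting the two pairing identities ``by direct computation''), so apart from the deferred sign bookkeeping your outline matches it.
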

\begin{proof}
It can be checked by direct computation that 
\begin{equation}
    \begin{split}
        \int  \big( \, \text{ Res }( \nabla_a) | (e_{1\, 2n+1}\otimes 1)\, \,  \text{ Res }( \nabla_b) \, \big)  &=\int \sum_{i=1}^{2n+1} (-1)^{i(p(a)+p(b)+1)} \text{ Res }(b_{N-i}(\frac{\delta a}{\delta L})^*) \text{ Res } ((\frac{\delta b}{\delta L})^*D^{i-1}) \, , \\
        \int  \big(\, \text{ Res }( \nabla_a) | \, \text{ Res }( \nabla_b) \, (e_{1\, 2n+1}\otimes 1) \, \big) &= - \int \sum_{i=1}^{2n+1} \text{ Res }((\frac{\delta a}{\delta L})^*D^{i-1}) \text{ Res }(b_{N-i}(\frac{\delta b}{\delta L})^*) \, .
    \end{split}
\end{equation}
    We conclude the proof using equation \eqref{eq11} and the previous two Lemmas.
\end{proof}

\begin{thm} \label{thm:even bracket W_N}
    The subalgebra $\mathcal{W}_{2n+1} \subset \mathcal{V}(\bar{\mathfrak{g}})$ is stabilized by the linear affine even bracket. In particular, the even bracket on $\mathcal{W}_{2n+1}$ satisfies the Jacobi identity.
\end{thm}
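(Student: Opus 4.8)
The plan is to split the statement into its two halves and to note that, once stability is known, the Jacobi identity for the restricted bracket comes for free. By Theorem \ref{thm:even_bracket_skew_Jacobi} the linear affine even bracket is a genuine even SUSY PVA on all of $\mathcal{V}(\bar{\mathfrak{g}})$: its value on generators is constant, so the Jacobi identity holds trivially on generators (every iterated bracket involves a bracket against a constant, which vanishes) and skew-symmetry is a finite sign check on the defining pairs. Consequently, once we show that $\mathcal{W}_{2n+1}$ is a differential subalgebra closed under $\{\cdot\,{}_\chi\,\cdot\}^e$, each PVA axiom restricts automatically, because for inputs in $\mathcal{W}_{2n+1}$ both sides of the identity already lie in $\mathbb{C}[\chi,\gamma]\otimes\mathcal{W}_{2n+1}$ and the identity holds in the ambient algebra. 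So the whole content is the closure statement $\{w_1\,{}_\chi\,w_2\}^e\in\mathbb{C}[\chi]\otimes\mathcal{W}_{2n+1}$ for $w_1,w_2\in\mathcal{W}_{2n+1}$.

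First I would promote the reduced identity established just before Lemma \ref{keylem} to a full, non-integrated matrix formula. Using the even master formula and Corollary \ref{coro}, together with the fact that the only nontrivial brackets of generators are $\{\bar e_{1i}\,{}_\chi\,\bar e_{i\,2n+1}\}^e=(-1)^{i+1}$, these couplings assemble into left and right multiplication by the highest-grading element $s:=e_{1\,2n+1}\otimes 1$. I expect a formula of the form
\[ \{w_1\,{}_\chi\,w_2\}^e=(-1)^{p(w_1)+1}\Big(\tfrac{\delta w_1}{\delta q}\,\Big|\,s\,(\chi+D)\tfrac{\delta w_2}{\delta q}+(\chi+D)\tfrac{\delta w_2}{\delta q}\,s\Big) \]
the precise placement of the $(\chi+D)$ factors and the signs being dictated by sesquilinearity; evaluating at $\chi=0$ and integrating must recover the expression of Lemma \ref{keylem}.

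Then I would run the Lax machinery exactly as in Theorems \ref{thm:2.2} and \ref{thm:main1}. By Lemma \ref{eq:2.3lem_main}, for $w\in\mathcal{W}_{2n+1}$ one has $\tfrac{\delta w}{\delta q}=e^{-\mathrm{ad}\,N_c}\mathrm{Res}\,\nabla_w-M$ with $M\in\mathfrak n\otimes\mathcal V(\bar{\mathfrak b})$. The decisive structural fact is that $s=e_{1\,2n+1}$ has highest principal grading, hence commutes with $\mathfrak n$ and, more strongly, annihilates it on both sides of the associative product: $e_{1\,2n+1}e_{kl}=0=e_{kl}e_{1\,2n+1}$ for every $e_{kl}\in\mathfrak n$ (since $k\le 2n$ and $l\ge 2$), while $\mathfrak n\otimes\mathcal V$ is stable under $\chi+D$. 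Therefore the ambiguity $M$ in $\tfrac{\delta w_2}{\delta q}$ is killed by $s$, and the $M$ in $\tfrac{\delta w_1}{\delta q}$ is killed against the first-row (resp. last-column) support of $s\,(\chi+D)\tfrac{\delta w_2}{\delta q}$ (resp. $(\chi+D)\tfrac{\delta w_2}{\delta q}\,s$) by the bilinear form. Because $[\mathfrak n,s]=0$ the conjugation $e^{\mathrm{ad}\,N_c}$ fixes $s$, so -- exactly as in the proof of Theorem \ref{thm:main1}, using the upper-triangularity recorded in Lemma \ref{lem:Lnabla-nablaL} -- the whole expression collapses into a Gelfand-Dickey-type formula in $L$, $\tfrac{\delta w_i}{\delta L}$ and $D$ alone, which is manifestly an element of $\mathbb{C}[\chi]\otimes\mathcal{W}_{2n+1}$. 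This establishes stability, and the Jacobi identity follows as in the first paragraph.

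The hard part will be the second step: upgrading the $\chi=0$ reduced identity to the full $\chi$-dependent matrix formula and verifying that the $\mathfrak n$-cancellation survives coefficient-by-coefficient in $\chi$, not merely after integration. The delicate supersign bookkeeping produced by commuting the odd $\chi$, the factors $(\chi+D)^m$, and the partial derivatives $\partial/\partial\bar e^{(m)}$ past one another -- the same kind of computation underlying the even master formula and Lemma \ref{lem:skew} -- is where an error is most likely to occur. One must also confirm that the residue/trace pairing genuinely annihilates the conjugated $\mathfrak n$-terms for every power of $\chi$; this is ultimately guaranteed by $s\,\mathfrak n=\mathfrak n\,s=0$ combined with the one-row/one-column structure of $\mathcal L^c\nabla$ and $\nabla\mathcal L^c$ from Lemma \ref{lem:Lnabla-nablaL}.
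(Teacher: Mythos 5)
Your reduction of the theorem to the closure statement, and the observation that skew-symmetry and the Jacobi identity restrict automatically once closure is known, are correct and agree with the paper. The problem is the route you propose to closure. You want a \emph{full, non-integrated} formula of the shape
\[ \{w_1\,{}_\chi\,w_2\}^e=(-1)^{p(w_1)+1}\Big(\tfrac{\delta w_1}{\delta q}\,\Big|\,s\,(\chi+D)\tfrac{\delta w_2}{\delta q}+(\chi+D)\tfrac{\delta w_2}{\delta q}\,s\Big),\]
and no formula of this shape can exist: by sesquilinearity $\{D(c)\,{}_\chi\,w_2\}^e=-\chi\{c\,{}_\chi\,w_2\}^e$, which is nonzero in general, while $\tfrac{\delta D(c)}{\delta q}=0$, so any expression depending on $w_1$ only through its variational derivative must vanish on $w_1=D(c)$. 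The unreduced bracket genuinely depends on the individual partials $\partial w_1/\partial \bar e_{ij}^{(m)}$ through the master formula, not just on $\tfrac{\delta w_1}{\delta q}$. This is fatal for the proposed gauge-cancellation step, because Lemma \ref{eq:2.3lem_main} controls only the variational derivative modulo $\mathfrak n\otimes\mathcal V(\bar{\mathfrak b})$; it says nothing about the finer data that would replace it in a corrected formula, so the $\mathfrak n$-cancellation cannot be run ``coefficient-by-coefficient in $\chi$'' with the tools at hand. (Your algebraic observations about $s=e_{1\,2n+1}$ killing $\mathfrak n$ on both sides and being fixed by $e^{\mathrm{ad}\,N_c}$ are correct, but they act on the wrong object.)

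The paper's argument avoids this entirely and needs only the integrated, $\chi=0$ identity of Lemma \ref{keylem}. Since the even bracket on a differential polynomial algebra is determined by its values on pairs of generators via the master formula, and since the induced bracket on functionals recovers the operators $\{u_i\,{}_{D}\,u_j\}$ through Corollary \ref{coro} (by the usual non-degeneracy of the pairing $\smallint P\,\tfrac{\delta b}{\delta u}$ together with the left Leibniz rule), the Gelfand--Dickey form of the reduced bracket in Lemma \ref{keylem} forces the generator brackets $\{u_i\,{}_\chi\,u_j\}^e$ to be built from $L$ and $D$ alone, hence to lie in $\mathbb{C}[\chi]\otimes\mathcal{W}_{2n+1}$; this is made explicit in the lemma following the theorem. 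The master formula then propagates closure to all pairs $w_1,w_2\in\mathcal{W}_{2n+1}$, and your first paragraph finishes the proof. If you want to keep your strategy, you must replace $\tfrac{\delta w_1}{\delta q}$ by the non-integrated gradient $\sum_{i,m} q_i\otimes(\cdots)\tfrac{\partial w_1}{\partial\bar q_i^{(m)}}(\chi+D)^m$ and prove a new analogue of Lemma \ref{eq:2.3lem_main} for that object, which is substantially more work than the paper's route.
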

\begin{proof}
    Immediately follows from the previous Lemma and the master formula.
\end{proof}

\begin{lem}
   The linear even SUSY PVA bracket reduced to $\mathcal{W}_{2n+1}$ is given explicitly by the formula 
    \begin{equation} \label{eq:el bracket}
    \begin{split}
        \{ L(z) \, {}_{\chi} \, L(w) \} &= \langle L(D+\chi+w) (D+\chi+w-z)(D^2-\chi^2+w^2-z^2)^{-1} \rangle \\
        & + \langle (D+\chi+w-z)(D^2-\chi^2+w^2-z^2)^{-1} L(D+w) \rangle \, .
    \end{split}
\end{equation}
\end{lem}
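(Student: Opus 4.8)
The plan is to reproduce, in the even setting, the argument that established the quadratic formula \eqref{eq:LL}, feeding in the even functional reduction of Lemma \ref{keylem} in place of the quadratic one and tracking the sign changes forced by the even sesquilinearity and by the convention \eqref{eq:(D+chi)sign}. First I would specialize $b=u_j$ in Lemma \ref{keylem}. Since \eqref{Xvder} gives $\frac{\delta u_j}{\delta L}=D^{j-N-1}$ and since $\mathrm{Res}(X D^{j-N-1})$ equals the coefficient $X_{[N-j]}$ of $D^{N-j}$ in $X$, this reads
\[
\int\{a\,{}_\chi u_j\}|_{\chi=0}=\int\Big(L\tfrac{\delta a}{\delta L}+(-1)^{p(a)+1}\tfrac{\delta a}{\delta L}\,L\Big)_{[N-j]}.
\]
Comparing this with the value of the same functional furnished by Corollary \ref{coro}, and substituting the expansion $\frac{\delta a}{\delta L}=(-1)^{p(a)}\sum_{i}(-1)^i D^{i-N-1}\frac{\delta a}{\delta u_i}$ coming from \eqref{Xvder}, I would extract — exactly as in the derivation of \eqref{eq:LL}, by letting $\frac{\delta a}{\delta u_i}$ range over an arbitrary element $c$ of the appropriate parity — the operator identity presenting $\{u_i\,{}_D u_j\}(c)$ as the coefficient of $D^{N-j}$ in $L\,D^{i-N-1}c+\varepsilon\,D^{i-N-1}c\,L$, for an explicit sign $\varepsilon$.

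Next I would pass to the $\chi$-bracket by the relation $\{u_i\,{}_\chi u_j\}=\{u_i\,{}_{\chi+D}u_j\}(1)$, i.e. by substituting $D\to\chi+D$ in the operator just obtained and evaluating at $1$. Introducing two odd constant indeterminates $z,w$, summing over $i,j$ against the weights $z^{N-i}$ and $w^{N-j}$ as in the proof of \eqref{eq:LL} (now with the signs dictated by the even sesquilinearity $[Da\,{}_\chi b]=-\chi[a\,{}_\chi b]$), and resumming the geometric series $(D+\chi-z)^{-1}=\sum_{i\geq 0}(-1)^{i(i+1)/2}z^i(D+\chi)^{-i-1}$, the left-hand slots assemble into $L(z)=\sum_i u_i z^{N-i}$ while the right-hand data assemble into $L(D+\chi+w)(D+\chi+w-z)^{-1}$ and $(D+\chi+w-z)^{-1}L(D+w)$. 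The decisive structural point is that the even reduction of Lemma \ref{keylem} contains only the plain products $L\frac{\delta a}{\delta L}\frac{\delta b}{\delta L}$ and $\frac{\delta a}{\delta L}L\frac{\delta b}{\delta L}$, so no $(\,\cdot\,)_+$ projection is generated and the two resulting terms combine with a relative $+$ sign — this is precisely the difference between \eqref{eq:el bracket} and \eqref{eq:LL}.

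Finally I would symmetrize in $z$ by the same complexification device as for \eqref{eq:LL}: writing $L(z)$ through the substitutions $z\mapsto\pm\mathbf{i}z$ with the weights $\tfrac{1\mp\mathbf{i}}{2}$, and using $(D+\chi+w\pm\mathbf{i}z)^2=D^2-\chi^2+w^2-z^2$ together with $(D+\chi+w\pm\mathbf{i}z)^{-1}=(D+\chi+w\pm\mathbf{i}z)(D^2-\chi^2+w^2-z^2)^{-1}$, so that the combination collapses to $(D+\chi+w-z)(D^2-\chi^2+w^2-z^2)^{-1}$; reading off the coefficient $\langle\,\cdot\,\rangle$ of $D^0$ then yields \eqref{eq:el bracket}. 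I expect the only genuine obstacle to be the sign bookkeeping: the minus sign in the even sesquilinearity and the conventions in \eqref{eq:(D+chi)sign} and Corollary \ref{coro} flip several intermediate signs relative to the odd case, and one must check that they conspire to deliver both the relative $+$ sign and the factor $(-1)^{p(a)+1}$; the resummation and the $\mathbf{i}$-trick themselves are mechanical and introduce no new idea beyond the proof of \eqref{eq:LL}.
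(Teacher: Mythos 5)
Your proposal follows essentially the same route as the paper's proof: specializing $b=u_j$ in Lemma \ref{keylem}, matching against the even master formula (Corollary \ref{coro}) to extract the operator identity for $\{u_i\,{}_D\,u_j\}$, passing to $\{u_i\,{}_{\chi+D}\,u_j\}(1)$, resumming the generating series in $z,w$, and finishing with the $\mathbf{i}$-substitution; your structural observation that the absence of $(\cdot)_+$ projections produces the relative $+$ sign is exactly the point. The sign bookkeeping you defer does work out (e.g.\ the coefficient of $LD^{i-N-1}\frac{\delta a}{\delta u_i}$ comes to $(-1)^{(j+1)(p(a)+i)}$, matching the paper), so the argument is sound.
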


\begin{proof}
By the master formula, we have 
$$\{u_i{}_D u_j\}(\frac{\delta a}{\delta u_i})= \Big((-1)^{(j+1)(p(a)+i)} LD^{i-2n-2}\frac{\delta a}{\delta u_i} +(-1)^{i+1+j(p(a)+i)}D^{i-2n-2}\frac{\delta a}{\delta u_i}L\Big)_{[2n+1-j]} \, .$$
This implies that for any $a \in \mathcal{W}_{N}$ 
$$\{u_i{}_D u_j\}(a)= \Big((-1)^{jp(a)+p(a)} LD^{i-2n-2}a +(-1)^{i+1+jp(a)}D^{i-2n-2}aL \Big)_{[2n+1-j]} \, .$$
In other words, 

\begin{equation*}
        \{  u_i  \, {}_{\chi} \, u_j \} =  \Big( L(D+\chi)(D+\chi)^{i-2n-2} +  (-1)^{i+1}(D+\chi)^{i-2n-2} L(D) \Big)_{[2n+1-j]} \, .
\end{equation*}
Therefore 

\begin{equation*}
        \begin{split}
        \{  u_i  \, {}_{\chi} \, L(w) \} &=  \langle L(D+\chi+w)(D+\chi+w)^{i-2n-2} \rangle \\
        & + (-1)^{i+1}\langle (D+\chi+w)^{i-2n-2} L(D+w) \rangle \, .
        \end{split}
    \end{equation*}
 Finally
    \begin{equation*}
        \begin{split}
          \sum_{i=1}^{2n+1} \{ u_i (-1)^{(2n+1-i)(2n-i+4)/2} z^{2n+1-i} \, {}_{\chi} \, L(w) \} 
         &= \langle L(D+\chi+w) (D+\chi+w-z)^{-1} \rangle \\
        & + \langle (D+\chi+w-z)^{-1} L(D+w) \rangle \, .
        \end{split}
    \end{equation*}
The result follows from the identity
\begin{equation*}
        \begin{split}
   L(z) & = \frac{1-\mathbf{i}}{2}\sum_{j=1}^{2n+1} u_j (-1)^{(2n+1-j)(2n-j+4)/2} (\mathbf{i}z)^{2n+1-j}\\
  & \hskip 5mm + \frac{1+\mathbf{i}}{2}\sum_{j=1}^{2n+1} u_j (-1)^{(2n+1-j)(2n-j+4)/2} (-\mathbf{i}z)^{2n+1-j} \, 
        \end{split}
    \end{equation*}
where $\mathbf{i}^2=-1$.
\end{proof}

\begin{defn} \label{def:el_GD}
    We denote the even SUSY PVA bracket \eqref{eq:el bracket} on $\mathcal{W}_{2n+1}$ by $\{ \, {}_{\chi} \, \}^e$ and call it the even linear GD bracket.
\end{defn}

\subsection{Hamiltonian hierarchy on $\mathcal{W}_{2n+1}$} \label{sec:integrability_odd}
In this subsection we relate the linear even Gelfand-Dickey bracket to the hierarchy  \eqref{even-hierarchy}.
Recall that the evolutionary derivations $dt_k$ are defined by the Lax equations
\begin{equation} \label{even-hierarchy} \frac{dL}{dt_k}=[(L^{\frac{2k}{2n+1}})_+, L], \quad \, [\frac{d}{dt_k} \, , \, D] =0 \, , \, \, \, k \geq 1 \, .
\end{equation}

\begin{lem} For all $k, l \geq 1$ we have 
    \begin{enumerate}
        \item [$(1)$] The derivations $d/dt_k$ and $d/dt_l$  commute ,
        \item [$(2)$] The residue of $L^{\frac{l}{2n+1}}$ is conserved for $d/dt_k$, i.e.
        $$ \int \frac{d}{dt_k} \, \textit{ Res } \, L^{\frac{l}{2n+1}} = 0 \, .$$
    \end{enumerate}
\end{lem}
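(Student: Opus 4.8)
The plan is to transplant the proof of Lemma~\ref{lem:density} to the present odd setting, the only genuinely new input being to check that the super-signs stay inert. Write $M := L^{1/(2n+1)}$ for the unique monic $(2n+1)$-th root of $L$; it is an \emph{odd} pseudo-differential operator with leading term $D$. Since the flows \eqref{even-hierarchy} only involve the even powers $M^{2k}$, every projection $(M^{2k})_+$ appearing below is an \emph{even} operator, and this is precisely the feature that lets the classical argument survive essentially verbatim.

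First I would promote the Lax equation to all fractional powers. Set $X := \tfrac{d}{dt_k}(M) - [(M^{2k})_+, M]$. Because $d/dt_k$ is an even derivation and $(M^{2k})_+$ is even, the Leibniz rule together with the super-derivation rule $[A,BC]=[A,B]C+B[A,C]$ (which holds with no sign since $p((M^{2k})_+)=0$) telescope to
\begin{equation*}
 \sum_{j=0}^{2n} M^{j}\,X\,M^{2n-j} = \frac{d}{dt_k}(M^{2n+1}) - [(M^{2k})_+, M^{2n+1}] = \frac{dL}{dt_k} - [(M^{2k})_+, L] = 0 .
\end{equation*}
Comparing leading coefficients forces $X=0$: if $X$ had top term $x D^{d}$, the left-hand side would have top coefficient $\big(\sum_{j=0}^{2n}(-1)^{jp(x)}\big)x$, which equals $(2n+1)x$ or $x$ according to the parity of $x$ and is nonzero in either case. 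The same telescoping then yields $\tfrac{d}{dt_k}(M^{l}) = [(M^{2k})_+, M^{l}]$ for every $l\geq 1$.

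Part~(2) is then immediate: $d/dt_k$ commutes with $\mathrm{Res}$ (it acts on coefficients), so
\begin{equation*}
 \int \frac{d}{dt_k}\,\mathrm{Res}\, L^{l/(2n+1)} = \int \mathrm{Res}\,[(M^{2k})_+, M^{l}],
\end{equation*}
and since $(M^{2k})_+$ is even, Lemma~\ref{lem:res}(3) gives $\int\mathrm{Res}((M^{2k})_+M^{l}) = \int\mathrm{Res}(M^{l}(M^{2k})_+)$, so the residue of the commutator integrates to zero.

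For part~(1) I would run the Zakharov--Shabat computation of Lemma~\ref{lem:density} with $M^{2k}$ and $M^{2l}$ replacing $L^{k/n}$ and $L^{l/n}$. Using $\tfrac{d}{dt_k}(M^{2l})=[(M^{2k})_+,M^{2l}]$ from the previous step and $\tfrac{d}{dt_k}((M^{2l})_+)=([(M^{2k})_+,M^{2l}])_+$, the antisymmetrised second derivative $\tfrac{d^2L}{dt_k\,dt_l}-\tfrac{d^2L}{dt_l\,dt_k}$ collapses to $[\,[M^{2k},M^{2l}]_+-[M^{2k}_-,M^{2l}_-]_+\,,\,L\,]$, which vanishes because powers of $M$ commute and $[M^{2k}_-,M^{2l}_-]_+=0$. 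The main obstacle is purely bookkeeping: one must verify at each step that all operators involved are even, so that the identities $[A,BC]=[A,B]C+B[A,C]$ and $\int\mathrm{Res}(AB)=\int\mathrm{Res}(BA)$ apply with trivial sign; this is exactly what the evenness of $M^{2k}$, $M^{2l}$ and their projections guarantees.
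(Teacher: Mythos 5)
Your proof is correct and follows exactly the route the paper intends: the paper's own proof is the one-line remark that the argument is identical to the even case (Lemma~\ref{lem:density}), i.e.\ telescoping to show $\tfrac{d}{dt_k}(M^l)=[(M^{2k})_+,M^l]$, vanishing of the integrated residue of a commutator, and the Zakharov--Shabat cancellation. You have simply written out the parity bookkeeping (evenness of $(M^{2k})_+$ and the leading-coefficient count $\sum_j(-1)^{jp(x)}$) that the paper leaves implicit.
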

\begin{proof}
    Identical to the even case in section \ref{sec:GD bracket}.
\end{proof}

Now we assume $k$ is an odd integer and define the density

\begin{equation} \label{eq:h_k}
   \int \, h_k =  \, \frac{2n+1}{k} \int \text{ Res } L^{\frac{k}{2n+1}} \, .
\end{equation}
\begin{lem}
For all odd integer $k$, we have 
$$ \frac{\delta {h_k}}{\delta L}=(L^{\frac{k}{2n+1}-1})_{-}\, \, .$$
\end{lem}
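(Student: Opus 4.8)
The plan is to follow the argument of Lemma \ref{varderhk}, the analogous statement in the even case, while carefully tracking the signs that now appear because $L$ (and hence its odd fractional powers) is odd. First I would fix an arbitrary tuple $(h_i)_{i=1,\dots,2n+1}$ with $p(h_i)=p(u_i)$ and set $\delta L=\sum_{i=1}^{2n+1}h_iD^{2n+1-i}$, an odd differential operator. Writing $M:=L^{\frac{1}{2n+1}}$ for the unique odd monic root and $X:=L^{\frac{k}{2n+1}}=M^{k}$, which is odd since $k$ is odd, there is a unique pseudo-differential operator $\delta X$ with $L^{\frac{k}{2n+1}}(u+\epsilon h)=X+\epsilon\,\delta X+o(\epsilon)$; as the variation $\frac{d}{d\epsilon}$ is an even derivation, $\delta X$ is again odd.

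Next I would differentiate the identity $X^{2n+1}=L^{k}$ at first order in $\epsilon$. Since $\frac{d}{d\epsilon}$ is even, no signs appear in the Leibniz rule, giving
\begin{equation*}
\sum_{j=0}^{2n}X^{j}\,\delta X\,X^{2n-j}=\sum_{j=0}^{k-1}L^{j}\,\delta L\,L^{k-1-j}.
\end{equation*}
Multiplying on the right by $X^{-2n}=L^{-\frac{2nk}{2n+1}}$ turns the $j$-th summand on the left into $X^{j}\,\delta X\,X^{-j}$ and the $j$-th summand on the right into $L^{j}\,\delta L\,L^{\frac{k}{2n+1}-1-j}$, using $k-\frac{2nk}{2n+1}=\frac{k}{2n+1}$. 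Applying $\int\text{Res}$ and invoking the cyclicity property Lemma \ref{lem:res}(3) to rotate each term back, I would collapse both sums to a single term each and conclude
\begin{equation*}
(2n+1)\int\text{Res}\,\delta X=k\int\text{Res}\big(\delta L\,L^{\frac{k}{2n+1}-1}\big).
\end{equation*}
Comparing with the universal property \eqref{univpro} (in the $N=2n+1$ case) applied to $a=\text{Res}\,L^{\frac{k}{2n+1}}$, whose first-order variation is exactly $\int\text{Res}\,\delta X$, identifies $\frac{\delta}{\delta L}\text{Res}\,L^{\frac{k}{2n+1}}=\frac{k}{2n+1}(L^{\frac{k}{2n+1}-1})_{-}$ (the integral part, since $\delta L$ is differential and variational derivatives carry no differential part); multiplying by $\frac{2n+1}{k}$ then yields the claim.

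The main obstacle is the sign bookkeeping in the collapse of the two sums, which is nontrivial here, unlike the even case where every operator in sight is even. For the left-hand sum, rotating $X^{j}$ from front to back via Lemma \ref{lem:res}(3) produces the factor $(-1)^{p(X^{j})\,p(\delta X\,X^{-j})}=(-1)^{j(j+1)}$, using $p(X^{j})\equiv j$ and $p(\delta X\,X^{-j})\equiv j+1 \pmod 2$; since $j(j+1)$ is always even this factor is $1$, so all $2n+1$ terms agree and give $(2n+1)\int\text{Res}\,\delta X$. For the right-hand sum the analogous rotation produces $(-1)^{p(L^{j})\,p(\delta L\,L^{\frac{k}{2n+1}-1-j})}=(-1)^{j(j+1)}=1$ as well, using $p(L^{j})\equiv j$ and $p(L^{\frac{k}{2n+1}-1-j})\equiv j \pmod 2$ (because both $k$ and $2n+1$ are odd). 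Thus the cyclic signs cancel uniformly and the computation reduces to the even-case identity, which is precisely what makes the final sign in the statement match the sign chosen in the definition \eqref{eq:h_k} of $h_k$.
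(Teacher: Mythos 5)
Your proposal is correct and follows the same route as the paper, whose proof of this lemma simply states that it is identical to the even case of Lemma \ref{varderhk}. You go slightly further by explicitly verifying that the cyclic-permutation signs from Lemma \ref{lem:res}(3) all equal $+1$ because the relevant parities pair as $j$ and $j+1$, which is exactly the point the paper leaves implicit.
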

\begin{proof}
    The proof is identical to Lemma \ref{varderhk}.
\end{proof}

 \begin{thm} \label{thm:integrability odd}
     The hierarchy of even evolutionary derivations \eqref{even-hierarchy} is Hamiltonian for the linear even SUSY PVA bracket $\{ \, \, {}_{\chi} \, \, \}^e$. Its hamiltonians are the functionals \eqref{eq:h_k} which are conserved for each derivation in the hierarchy. They are in involution for the even linear Gelfand-Dickey bracket.
 \end{thm}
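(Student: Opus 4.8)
The plan is to run the standard Lax-pair argument adapted to the even SUSY setting, in exact parallel with the proof of Theorem \ref{thm:integrability_even} for the even-order case; the only new ingredient is the even linear bracket $\{\,{}_{\chi}\,\}^e$ in place of the odd linear one. The first and central step is to upgrade the reduced functional identity of Lemma \ref{keylem} to an identity of super-differential operators, namely the Lax form
\[
\{\smallint a \, {}_{\chi} \, L\}^e\big|_{\chi=0} \;=\; \Bigl(L\,\tfrac{\delta a}{\delta L} + (-1)^{p(a)+1}\,\tfrac{\delta a}{\delta L}\,L\Bigr)_+ ,
\]
where, as in \eqref{laxequations}, one sets $\{\smallint a \, {}_{\chi} \, L\}^e := \sum_{j=1}^{2n+1}\{\smallint a \, {}_{\chi}\, u_j\}^e\,D^{2n+1-j}$.

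To establish this Lax form I would not argue via non-degeneracy of the pairing $\int\mathrm{Res}(\,\cdot\,\tfrac{\delta b}{\delta L})$, which only recovers a derivation modulo $D\mathcal{W}_{2n+1}$; instead I would assemble it coefficientwise from the generator-level formula derived inside the proof of the lemma computing \eqref{eq:el bracket}, namely
\[
\{u_i\,{}_D\,u_j\}(a) = \Bigl((-1)^{jp(a)+p(a)}\,L\,D^{\,i-2n-2}a + (-1)^{i+1+jp(a)}\,D^{\,i-2n-2}a\,L\Bigr)_{[2n+1-j]} .
\]
Substituting $\tfrac{\delta a}{\delta u_i}$ for $a$ and summing over $i$ rebuilds $L\tfrac{\delta a}{\delta L}$ and $\tfrac{\delta a}{\delta L}L$ by the definition \eqref{Xvder} of the variational derivative, so that $\{\smallint a\,{}_{\chi}\,u_j\}^e|_{\chi=0}$ is, up to the signs recorded in Corollary \ref{coro}, the coefficient of $D^{2n+1-j}$ in $L\tfrac{\delta a}{\delta L}+(-1)^{p(a)+1}\tfrac{\delta a}{\delta L}L$; summing $\{\smallint a\,{}_{\chi}\,u_j\}^e|_{\chi=0}\,D^{2n+1-j}$ over $j=1,\dots,2n+1$ then reconstructs precisely its differential projection. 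The only genuine labor here is sign bookkeeping.

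With the Lax form in hand I would specialize to $a=h_k$ with $k$ odd, so that $p(h_k)=0$ since the residues of odd fractional powers of $L$ are even, as noted around \eqref{evendensity}. The Lax form then reads $\{\smallint h_k\,{}_{\chi}\,L\}^e|_{\chi=0}=[L,\tfrac{\delta h_k}{\delta L}]_+$, and inserting $\tfrac{\delta h_k}{\delta L}=(L^{k/(2n+1)-1})_-$ from the preceding lemma, then writing $P:=L^{k/(2n+1)-1}$ and using $[L,P]=0$, gives
\[
\{\smallint h_k\,{}_{\chi}\,L\}^e\big|_{\chi=0}=[L,P_-]_+ = [P_+,L] = \bigl[(L^{\,k/(2n+1)-1})_+,\,L\bigr].
\]
Matching with \eqref{even-hierarchy} forces $k/(2n+1)-1=2l/(2n+1)$, i.e. $k=2l+2n+1$, which is odd for every $l\ge 1$; the remaining values $k\le 2n+1$ yield the trivial derivation because then $(L^{k/(2n+1)-1})_+=0$. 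Hence $d/dt_l = \{\smallint h_{2l+2n+1}\,{}_{\chi}\,\cdot\,\}^e|_{\chi=0}$, which is exactly the assertion that the hierarchy is Hamiltonian for $\{\,{}_{\chi}\,\}^e$ with the functionals \eqref{eq:h_k}.

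Conservation of each $\smallint h_m$ along every flow is already contained in the lemma on commuting flows (its second part, $\int \tfrac{d}{dt_l}\mathrm{Res}\,L^{m/(2n+1)}=0$), and involution follows at once: $\{\smallint h_k,\smallint h_m\}^e=\smallint\{h_k\,{}_{\chi}\,h_m\}^e|_{\chi=0}=\int \tfrac{d}{dt_l}h_m=0$ once the derivation $\{\smallint h_k\,{}_{\chi}\,\cdot\,\}^e|_{\chi=0}$ is identified with a Lax flow (and it vanishes identically when $k\le 2n+1$). I expect the main obstacle to be entirely in the first step: confirming that the coefficientwise reassembly of the generator brackets produces exactly the clean projection $(\,\cdots\,)_+$ with the precise sign $(-1)^{p(a)+1}$, since the even bracket carries different parity signs from the odd case and any slip there would propagate into the wrong commutator identity, breaking the index matching in the third paragraph.
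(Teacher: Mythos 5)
Your proof is correct and follows the same overall strategy as the paper's: establish the Lax-form identity $\{\smallint a\,{}_{\chi}\,L\}^e|_{\chi=0}=(L\frac{\delta a}{\delta L}+(-1)^{p(a)+1}\frac{\delta a}{\delta L}L)_+$, insert $\frac{\delta h_k}{\delta L}=(L^{k/(2n+1)-1})_-$ and the commutation $[L,L^{k/(2n+1)-1}]=0$ to recover the flows \eqref{even-hierarchy}, and then get conservation and involution from the commuting-flows lemma. The one place where you deliberately diverge is the justification of the Lax form: the paper deduces it in one line from the reduced functional identity of Lemma \ref{keylem} together with $\int d(a)=\int\sum_i d(u_i)\frac{\delta a}{\delta u_i}$, whereas you reassemble it coefficientwise from the generator-level formula appearing in the proof of the lemma establishing \eqref{eq:el bracket}. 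Your caution is well placed, even if slightly misstated (the issue is not that the characteristics are recovered "modulo $D\mathcal{W}_{2n+1}$" but that the pairing $P\mapsto\smallint\sum_i P_i\frac{\delta b}{\delta u_i}$ has a genuine kernel among evolutionary fields, e.g.\ $\partial=D^2$ with characteristics $u_i''$ annihilates every functional), so the paper's shortcut tacitly needs the generator-level computation anyway; your route makes that dependence explicit and buys an exact operator identity rather than one up to such degenerate discrepancies. Two harmless slips: for $k=2n+1$ one has $(L^{0})_+=1\neq 0$, but the flow is still trivial because $[1,L]=0$; and your sum over $j=1,\dots,2n+1$ of the coefficients of $D^{2n+1-j}$ does capture the full differential projection only because $L\frac{\delta a}{\delta L}$ and $\frac{\delta a}{\delta L}L$ have order at most $2n$, which is worth saying. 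Neither affects the validity of the argument.
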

\begin{proof}
Recall that for any evolutionary derivation $d$
 of $\mathcal{W}_{2n+1}$ one has for all $a \in \mathcal{W}_{2n+1}$
 $$ \int \, d(a) =  \int \sum_{i=1}^{2n+1} d(u_i) \frac{\delta a}{\delta u_i}\, .$$
 Hence it follows from Lemma \ref{keylem} that for any homogeneous element $ \smallint a \in \smallint \mathcal{W}_{2n+1}$ the derivation $\{ \smallint \,  a \, {}_{\chi} \, \cdot \}^e|_{ \chi =0}$ of $\mathcal{W}_{2n+1}$ is defined by
 $$ \{ a \, {}_{\chi} \, L \}^{e}|_{ \chi =0} =\Big( L \frac{\delta a}{\delta L} + (-1)^{p(a)+1} \frac{\delta a}{\delta L} L \Big)_+ \, .$$
In particular, for all positive odd integer $k$,
    \begin{equation}
        {\{ \, \smallint \, h_k \, {}_{\chi} \, L(w) \, \}^{e}}|_{ \chi =0}=[L^{\frac{k}{2n+1}-1}_+,L](w).
    \end{equation}

\end{proof}

\newpage

 \bibliographystyle{alpha}
 \bibliography{references}

\end{document}